\newcommand{\N}{\mathbb{N}}
\newcommand{\F}{\mathbb{F}}
\newcommand{\calP}{\mathcal{P}}
\newcommand{\calH}{\mathcal{H}}
\newcommand{\calA}{\mathcal{A}}
\newcommand{\calC}{\mathcal{C}}
\newcommand{\poly}{\text{poly}}
\newcommand{\domain}{D}
\newcommand{\range}{R}
\newcommand{\Loss}{\mathcal{L}}
\newcommand{\bp}{P}
\newtheorem{theorem}{Theorem}[section]
\newtheorem{lemma}[theorem]{Lemma}
\newtheorem{corollary}[theorem]{Corollary}
\newtheorem{conjecture}[theorem]{Conjecture}
\newtheorem{proposition}[theorem]{Proposition}
\theoremstyle{definition}
\newtheorem{definition}[theorem]{Definition}
\theoremstyle{remark}
\newtheorem{remark}[theorem]{Remark}
\newtheorem{claim}[theorem]{Claim}
\newcommand\norm[1]{|#1|}
\newcommand{\floor}[1]{\left\lfloor #1 \right\rfloor}
\newcommand{\ceil}[1]{\left\lceil #1 \right\rceil}
\begin{document}

\title{Cumulative Memory Lower Bounds for Randomized and Quantum Computation}

\author{
Paul Beame\thanks{Research supported by NSF grant CCF-2006359}\\Computer Science \& Engineering\\University of Washington \and Niels Kornerup\\Computer Science\\University of Texas at Austin
}
\date{\today}

\maketitle
\thispagestyle{empty}

\begin{abstract}
Cumulative memory---the sum of space used per step over the duration of a computation---is a fine-grained measure of time-space complexity that was introduced to analyze cryptographic applications like password hashing.
It is a more accurate cost measure for algorithms that have infrequent spikes in memory usage and are run in environments such as cloud computing that allow dynamic allocation and de-allocation of resources during execution, or when many multiple instances of an algorithm are interleaved in parallel.

We prove the first lower bounds on cumulative memory complexity for both sequential classical computation and quantum circuits.
Moreover, we develop general paradigms for bounding cumulative memory complexity inspired by the standard paradigms for proving time-space tradeoff lower bounds that can only lower bound the maximum space used during an execution. 
The resulting lower bounds
on cumulative memory that we obtain
are just as strong as the best time-space tradeoff lower bounds, which
are very often known to be tight.

Although previous results for pebbling and random oracle models have yielded time-space tradeoff lower bounds larger than the cumulative memory complexity, our
results show that in general computational models such separations cannot follow from known lower bound techniques
and are not true for many functions.

Among many possible applications of our general methods, we show that any classical sorting algorithm with success probability at least $1/\poly(n)$ requires cumulative memory $\tilde \Omega(n^2)$, any classical matrix multiplication algorithm requires cumulative memory $\Omega(n^6/T)$, any quantum sorting circuit requires cumulative memory $\Omega(n^3/T)$, and any quantum circuit that finds $k$ disjoint collisions in a random function requires cumulative memory $\Omega(k^3n/T^2)$.
\end{abstract}
\section{Introduction}

For some problems, algorithms can use additional memory for faster running times or additional time to reduce memory requirements. While there are different kinds of tradeoffs between time and space, the most common complexity metric for such algorithms is the maximum time-space (TS) product. This is appropriate when a machine must allocate an algorithm's maximum space throughout its computation. However, recent technologies like AWS Lambda~\cite{BBB+12} suggest that in the context of cloud computing, space can be allocated to a program only as it is needed. When using such services, analyzing the average memory used per step leads to a more accurate picture than measuring the maximum space.

Cumulative memory (CM), the sum over time of the space used per step of an algorithm, is an alternative notion of time-space complexity that is more fair to algorithms with rare spikes in memory.   Cumulative memory complexity was introduced by Alwen and Serbinenko~\cite{AS15} who devised it as a way to analyze time-space tradeoffs for ``memory hard functions'' like password hashes.  Since then, lower and upper bounds on the CM of problems in structured computational models using the black pebble game have been extensively studied, beginning with the work of \cite{AS15, AB16, RD16, ACP+17, ACK+16, ABP17}.  Structured models via pebble games are natural in the context of the random oracle assumptions that are common in cryptography. By carefully interweaving their memory-intensive steps, authors of these papers devise algorithms for cracking passwords that compute many hashes in parallel using only slightly more space than is necessary to compute a single hash. While such algorithms can use parallelism to amortize costs and circumvent proven single instance TS complexity lower bounds, their cumulative memory only scales linearly with the number of computed hashes. 
 Strong CM results have also been shown for the black-white pebble game and used to
derive related bounds for resolution proof systems~\cite{AdRNV17}.

The ideas used for
these structured models yield
provable separations between CM and TS complexity in pebbling and random oracle models. 
The key question that we consider is whether or not the same applies to general models of computation without cryptographic or black-box assumptions: Are existing time-space tradeoff lower bounds too pessimistic for a world where cumulative memory is more representative of a computation's cost? 


\subsection*{Our Results}
The main answer we provide to this question is negative for both classical and quantum computation:
We give \emph{generic methods} that convert existing paradigms for obtaining time-space tradeoff lower bounds involving worst-case space to new lower bounds that replace the time-space product by
cumulative space, immediately yielding a host of new lower bounds on cumulative memory complexity.
With these methods, we show how to extend virtually all known proofs for time-space tradeoffs to equivalent lower bounds on cumulative memory complexity, implying that there cannot be cumulative memory savings for these problems.
Our results, like those of existing time-space tradeoffs, apply in models in which arbitrary sequential computations may be performed between queries to a read-only input.
Our lower bounds also apply to randomized and quantum algorithms that are allowed to make errors.

\subparagraph*{Classical computation} We first focus on lower bound paradigms that apply to computations of multi-output functions $f:\domain^n\rightarrow \range^m$.  Borodin and Cook~\cite{BC82} introduced
a method for proving time-space tradeoff lower bounds for such functions that takes a property such as
the following: for some $K=K(\range,n)$, constant $\gamma$, and distribution $\mu$ on $\domain^n$:
\begin{description}
\item{(*)}  For any partial assignment $\tau$ of $k\le \gamma m$ output values over $\range$ and any restriction (i.e., partial assignment) $\pi$ of $h=h(k,n)$ coordinates on $\domain^n$,
\begin{displaymath}\Pr_{x\sim \mu}[f(x) \mbox{ is consistent with }\tau\mid x\mbox{ is consistent with }\pi]\le K^{-k}.\end{displaymath}
\end{description}
and derives a lower bound of the following form:
\begin{proposition}[\cite{BC82}]
Assume that Property (*) holds for $f:\domain^n\rightarrow \range^m$ with $\gamma>0$ constant. Then,
$T\ (S+\log_2 T)$ is $\Omega(m\ h(S/\log_2 K,n) \log K)$.
\end{proposition}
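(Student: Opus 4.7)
The plan is to use the classical branching program paradigm together with a chunking-and-encoding argument. First I would model any algorithm of time $T$ and space $S$ for $f$ as a branching program with at most $2^S$ states per layer and $T$ layers, and by averaging (Yao's principle applied to $\mu$) reduce to a deterministic branching program that outputs $f(x)$ correctly on a constant fraction of inputs drawn from $\mu$.

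Next I would partition each execution into $R$ consecutive stages, where the $i$-th stage ends as soon as the $(iq)$-th new output coordinate has been written, for the parameter choice $q := (S + \log_2 T)/\log_2 K$. This yields $R = \Theta(m/q)$ stages, and each stage boundary is recorded as a configuration (state plus time-layer) in at most $S + \log_2 T$ bits. Call a stage \emph{short} if it queries at most $h(q, n)$ input positions and \emph{long} otherwise.

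I would then apply Property (*) to the short stages. Fixing a sequence of boundary configurations $(\sigma_0, \ldots, \sigma_R)$ pins down which $q$ output coordinates are written and which input coordinates are queried in each stage. The short stages collectively determine a partial assignment $\tau$ of at most $\gamma m$ output coordinates together with a restriction $\pi$ of at most $r \cdot h(q, n)$ input coordinates, where $r$ is the number of short stages. By a single application of Property (*), the probability over $x \sim \mu$ that $f(x)$ agrees with $\tau$ is at most $K^{-qr}$. Union-bounding over the $2^{(S + \log_2 T)(R+1)}$ possible transcripts, the overall success probability is at most $2^{(S + \log_2 T)(R+1)} \cdot K^{-qr}$. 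For this to be $\Omega(1)$ after our choice of $q$ (for which $q \log_2 K = S + \log_2 T$), we require $r \leq (1 - \Omega(1)) R$, so a constant fraction of the stages are long. Each long stage contributes at least $h(q, n)$ input queries, giving $T = \Omega(R \cdot h(q, n)) = \Omega((m/q)\, h(q, n))$, which rearranges to $T(S + \log_2 T) = \Omega(m \cdot h(S/\log_2 K, n) \log K)$.

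The main obstacle is executing step three rigorously: one must verify that conditioning on the transcript $(\sigma_i)$ genuinely pins down each stage's queried positions and produced outputs as fixed sets (so that Property (*) is applied to a bona fide partial assignment and restriction), and that the accounting is tight enough that the encoding cost $(S + \log_2 T)R$ and the information gain $qR \log_2 K$ from short stages balance to force almost every stage to be long. The $\log_2 T$ contribution in the configuration encoding is precisely what makes the final bound take the form $T(S + \log_2 T)$ rather than the simpler $TS$; a secondary subtlety is confirming that the combined short-stage assignment respects the $k \leq \gamma m$ hypothesis of Property (*), which will hold provided $q$ is chosen (as above) so that $qr \leq \gamma m$ for all relevant $r$.
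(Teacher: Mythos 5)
Your decomposition into output-progress stages and your union bound over transcripts diverge from the standard Borodin--Cook argument (which the paper recaps in \cref{sec-generic} as the ``generic completion''), and the divergence introduces gaps that your own ``main obstacle'' paragraph flags but does not close. The crux is your claim that fixing the boundary configurations $(\sigma_0,\ldots,\sigma_R)$ pins down the queried input positions and the output coordinates/values of each stage. This is false for a branching program: the transcript fixes only where stages start and end; within a stage, the queries made, the output indices emitted, and the output values all depend on the \emph{path} actually followed, which depends on the input. So there is no fixed $\tau$ and $\pi$ to feed into Property~(*). The standard argument avoids this by invoking Property~(*) (the paper's (B')) only within one fixed sub-branching program at a time: for a fixed block start-node, the block is a genuine branching program whose paths are followed by disjoint input sets, and summing the per-path bound over paths bounds the probability that \emph{that block alone} produces $\geq k$ correct outputs (this is (B')$\Rightarrow$(B) in the paper's remark). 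One then union-bounds only over the $\leq T\cdot 2^S$ block-start nodes, not over all transcripts.

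Even setting that aside, two further steps fail. The restriction you assemble from $r$ short stages has size up to $r\cdot h(q,n)$, which exceeds the $h(qr,n)$ allowed by Property~(*) unless $h$ is superlinear in its first argument; in the typical case $h(k,n)=k^\Delta h_1(n)$ with $\Delta<1$ (e.g.\ matrix multiplication has $\Delta=1/2$) one has $r\cdot h(q,n)>h(qr,n)$ for $r>1$, so a single invocation of Property~(*) at $k=qr$ is not available. Moreover, with your choice $q\log_2 K = S+\log_2 T$ the transcript union bound equals $2^{(S+\log_2 T)(R+1)}\cdot K^{-qr} = 2^{(S+\log_2 T)(R+1-r)}$, which is $\geq 1$ for every $r\leq R+1$, so it never certifies a small success probability, and the constraint $r\leq(1-\Omega(1))R$ you extract from it does not follow. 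The paper's version works precisely because the union is taken over only about $2^{S+\log_2 T}$ block positions rather than $2^{(S+\log_2 T)(R+1)}$ transcripts, so this choice of per-block target $q$ suffices.
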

In particular, since $S\ge \log_2 n$ is essentially always required, if we have the typical case that $h(k,n)=k^\Delta\ h_1(n)$ for some function $h_1(n)$ then this says that $T\cdot S^{1-\Delta}$ is $\Omega(m\ h_1(n)\ \log^{1-\Delta} K)$ or, equivalently, that
$\max(S,\log n)$ is $\Omega([(m\ h_1(n)/T]^{1/(1-\Delta)}\log K)$.
As a simplified example of our new general paradigm, we prove the following analog for cumulative complexity:
\begin{restatable}{theorem}{generic-poly}\label{thm:generic-poly}
Suppose that Property (*) holds for $f:\domain^n\rightarrow \range^m$ with $h(k,n)=k^\Delta h_1(n)$ and $\gamma>0$ constant.
If $T\log_2 T$ is $o(m\ h_1(n) \log K)$ then
any algorithm computing $f$ requires cumulative memory 
\begin{math}
    \Omega\left(\left[(m\ h_1(n))^{1/(1-\Delta)}\log K\right]/T^{\Delta/(1-\Delta)}\right).
\end{math}
\end{restatable}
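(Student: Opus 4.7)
The plan is to adapt the blockwise Borodin--Cook argument underlying the stated proposition by replacing the worst-case space $S$ paid at every block boundary with an average-case state cost $CM/q$. The key observation is that over any window of $q$ steps the space used must attain its per-window minimum, so if block boundaries are placed at such minima then the summed sizes of the boundary states are controlled by the cumulative memory rather than by the worst-case space.

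First I would fix a block length $q$ (to be optimized later). For each input $x$, I would divide $[0,T]$ into $B=T/q$ intervals of length $q$ and, within each interval $i$, pick a time $t_i^*(x)$ at which the space $s_t(x)$ attains its per-interval minimum, letting $\sigma_i^*(x)$ be the corresponding machine state. Because the minimum of a sequence is bounded by its average, $\sum_i|\sigma_i^*(x)|\le CM(x)/q$. I would then treat $a(x)=((t_i^*(x))_i,(\sigma_i^*(x))_i)$ as an ``advice string'' that partitions the inputs into equivalence classes $X_a$; by Markov's inequality, at least half of all $x$ have $CM(x)\le 2\,CM$, and for these the state part of the advice fits in $O(CM/q+B)$ bits via a prefix-free code.

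Inside an advice class $X_a$, the computation on each $x$ splits into $B+1$ blocks $[t_{i-1}^*,t_i^*]$, each of length at most $2q$ and each starting in the known state $\sigma_{i-1}^*$. On each block I would apply Property~(*) exactly as in the classical Borodin--Cook argument: choosing $q$ so that the per-block capacity $k^*=(2q/h_1(n))^{1/\Delta}$ summed over $B$ blocks covers all $m$ outputs, Property~(*) (with the standard adversarial-extension argument padding the read-set up to exactly $h(k_i,n)$ coordinates, and capping $k_i$ at $k^*$) bounds the per-advice success probability by $K^{-\Omega(m)}$. A union bound over advices gives
\[ \Pr[\text{success}]\le q^B\cdot 2^{O(CM/q+B)}\cdot K^{-\Omega(m)}. \]
Taking logarithms, substituting $B=T/q$, and choosing $q=\Theta((m/T)^{\Delta/(1-\Delta)}h_1(n)^{1/(1-\Delta)})$ (so the block capacity matches $m$) yields $CM+T\log q=\Omega(qm\log K)$; the hypothesis $T\log T=o(m\,h_1(n)\log K)$ is exactly what is needed to absorb the $T\log q$ term, giving the stated bound.

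The principal difficulty is the per-advice success bound: the block boundaries $t_i^*(x)$, and hence the read-sets within blocks, depend on $x$, so Property~(*) must be applied to a restriction whose structure is random. The standard Borodin--Cook treatment, via adversarial padding of the read-set and capping of block output counts at $k^*$, carries over but requires careful handling of variable block lengths up to $2q$ and of blocks whose un-capped $k_i$ would exceed $k^*$. A secondary bookkeeping issue is counting advices of total state-length at most $L$ by $2^{O(L+B)}$: a prefix-free encoding of individual state lengths suffices, with the resulting logarithmic overhead absorbed by the same $T\log T$ hypothesis.
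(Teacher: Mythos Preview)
There is a genuine gap in the argument: the claimed per-advice success probability of $K^{-\Omega(m)}$ is not justified, and in fact does not follow from Property~(*) as you use it. Within a fixed advice class each of the $B$ blocks is a single sub-program of height at most $2q$, so Property~(*) yields only that the block produces $\ge k^* := (2q/h_1(n))^{1/\Delta}$ correct outputs with probability at most $K^{-k^*}$. A union bound over blocks then gives a per-advice success probability of at most $B\,K^{-k^*}$, not $K^{-\Omega(m)}$. You cannot chain the blocks to reach $K^{-\Omega(m)}$: after $j$ blocks the cumulative read-set has size $\sim 2jq = 2j(k^*)^{\Delta}h_1(n)$, while the cumulative output count is $\sim jk^*$, and for Property~(*) to apply you would need $2jq\le (jk^*)^{\Delta}h_1(n)$, i.e.\ $j\le j^{\Delta}$, which fails for $\Delta<1$ and $j>1$. (Applying Property~(*) once to the full length-$T$ path would require $T\le m^{\Delta}h_1(n)$, which is strictly stronger than the hypothesis $T\log T=o(m\,h_1(n)\log K)$ and would make the blocking pointless.) If one plugs the correct bound $B\,K^{-k^*}$ into your union bound over advices and optimizes over $q$, the resulting cumulative-memory lower bound is $\Omega\bigl(m^{(1+\Delta)/(1-\Delta)}h_1(n)^{2/(1-\Delta)}/T^{(1+\Delta)/(1-\Delta)}\bigr)\cdot\log K$, which is weaker than the target by a factor of $(m^{\Delta}h_1(n)/T)^{1/(1-\Delta)}$.

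The paper circumvents this entirely. First, since in the branching-program model $CM(P)=\sum_t\log_2|L_t|$ is \emph{input-independent}, the block boundaries $t_i$ can be chosen once and for all at the layers of minimum width within each segment---no advice and no input-dependent Markov step are needed. Second, and crucially, the paper does \emph{not} fix a single block length $q$: the length of block $i$ is taken to be $\Theta(h_0(S_i)h_1(n))$ where $S_i=\log_2|L_{t_i}|$, so that Property~(B) applies with target $k_i\approx S_i/\log K$. This directly yields $\sum_i S_i\ge\Omega(m\log K)$ (after absorbing the $T\log T$ term), together with the length constraint $\sum_i h_0(S_i)\le T/H$. A separate convexity lemma (the paper's Lemma~5.6) then lower-bounds $\sum_i S_i\,h_0(S_i)$ subject to these two constraints, and a ``loss'' analysis (Lemma~5.4) relates this sum to $CM(P)$. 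The variable block lengths, rather than any product-over-blocks bound, are what make the argument go through.
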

We note that this bound corresponds exactly to the bound on the product of time and space from Borodin-Cook method.
The full version of our general theorem for randomized computation~(\cref{thm:full-general-bound-classical}) is inspired by an extension by Abrahamson~\cite{Abr91} of the 
Borodin-Cook paradigm to average case complexity. 

We also show how the paradigms for the best time-space tradeoff lower bounds for single-output Boolean functions, which are based on the
densities of \emph{embedded rectangles} where these functions are constant, can be extended to yield cumulative memory bounds. 

\subparagraph*{Quantum computation} We develop an extension of our general approach that applies to quantum computation as well.   
In this case Property (*) and its extensions that we use for our more general theorem must be replaced by statements about quantum circuits with a small number of queries.
In this case, we first
generalize the quantum time-space tradeoff for sorting proven in \cite{KSdW07}, which requires that the
time order in which output values are produced must correspond to the sorted order, to a matching cumulative memory complexity bound of $\Omega(n^3/T)$ that works for any fixed time-ordering of output
production, yielding a more general lower bound. 
(For example, an algorithm may be able to determine the median output long before it determines the other outputs.)
We then show how an analog of our classical general theorem can be applied to extend to paradigms for quantum time-space tradeoffs to cumulative memory complexity bounds for
other problems. 

A summary of our results for both classical and quantum complexity is given in \cref{table-results}.
\begin{table}[t]
\begin{center}
\begin{tabular}{|l|l|l|l|}
\hline
Problem                       & TS Lower Bound                      & Source                                & Matching CM Bound                       \\ \hline

Ranking, Sorting                       & $\Omega(n^2/\log n)$                & \cite{BC82}                           & \cref{thm-classical-sort-cm-bound}                         \\ \hline
Unique Elements, Sorting               & $\Omega(n^2)$                       & \cite{DBLP:journals/siamcomp/Beame91} & \cref{thm-unique-elements}                     \\ \hline
Matrix-Vector Product ($\mathbb{F}$) & $\Omega(n^2 \log |\mathbb{F}|)$     & \cite{Abr91}                          & \cref{thm-mat-vec}                             \\ \hline
Matrix Multiplication ($\mathbb{F}$)        & $\Omega((n^6 \log |\mathbb{F}|)/T)$ & \cite{Abr91}                          & \cref{thm-mat-mult}                            \\ \hline
Hamming Closeness             & $\Omega(n^{2-o(1)})$               & \cite{DBLP:journals/jacm/BeameSSV03}*  & \cref{thm-ham-close}*                          \\ \hline
Element Distinctness          & $\Omega(n^{2-o(1)})$               & \cite{DBLP:journals/jacm/BeameSSV03}*  & \cref{thm-ed}*                                 \\ \hline
Quantum Sorting               & $\Omega(n^3/T)$                     & \cite{KSdW07}                         & \cref{thm-quantum-sort-lb}                     \\ \hline
Quantum $k$ disjoint collisions & $\Omega(k^3 n / T^2)$               & \cite{HM21}                           & \cref{thm-q-k-disj-col}                        \\ \hline
Quantum Boolean Matrix Mult        & $\Omega(n^5/T)$                     & \cite{KSdW07}                         & \cref{thm-q-bool-mat-mul}*                     \\ \hline
\end{tabular}
\end{center}
\caption{All CM bounds match the TS lower bound when considering RAM computation or quantum circuits. The symbol * indicates that the result requires additional assumptions.
}\label{table-results}
\end{table}

\subsubsection*{Previous work}

\subparagraph*{Memory hard functions and cumulative memory complexity}
Alwen and Serbinenko~\cite{AS15} introduced parallel cumulative (memory) complexity as a metric for analyzing the space footprint required to compute \emph{memory hard functions (MHFs)}, which are functions designed to require large space to compute. Most MHFs are constructed using hashgraphs~\cite{DNW05} of DAGs whose output is a fixed length string and their proofs of security are based on pebbling arguments on these DAGs
while assuming access to truly random hash functions for their complexity bounds~\cite{AS15, BCGS16, RD16, ABP17, ACP+17, BZ17}. (See \cref{sec-cm-ts-sep} for
their use in separating CM and TS complexity.)
Recent constructions do not require random hash functions; however, they still rely on cryptographic assumptions \cite{CT19, ABB21}.   

\subparagraph*{Classical time-space tradeoffs}
While these were originally studied in restricted pebbling models similar to those considered to date for cumulative memory complexity~\cite{DBLP:journals/jcss/Tompa80,DBLP:journals/jcss/BorodinFKLT81}, the gold-standard model for time-space tradeoff analysis is that of unrestricted branching programs, which simultaneously capture time and space for general sequential computation.  
Following the methodology of Borodin and Cook~\cite{BC82}, who proved lower bounds for sorting, many other problems have been analyzed~(e.g., \cite{Yes84,DBLP:journals/siamcomp/Abrahamson87,DBLP:conf/focs/Abrahamson90,DBLP:journals/siamcomp/Beame91,DBLP:journals/tcs/MansourNT93}), including universal hashing and many problems in linear algebra~\cite{Abr91}. (See \cite[Chapter 10]{Sav97} for an
overview.)
A separate methodology for single-output functions, introduced in the
context of restricted branching programs~\cite{DBLP:journals/cc/BorodinRS93,okol93}, was extended to general branching programs in~\cite{DBLP:journals/jcss/BeameJS01}, with further applications to other problems~\cite{DBLP:journals/jcss/Ajtai02} including multi-precision integer multiplication~\cite{DBLP:conf/stoc/SauerhoffW03} and error-correcting codes~\cite{DBLP:journals/ipl/Jukna09} as well as over Boolean input domains~\cite{DBLP:journals/toc/Ajtai05,DBLP:journals/jacm/BeameSSV03}.
Both of these methods involve breaking the program into
blocks to analyze the computation under
natural distributions over the inputs based on what happens at the boundaries between blocks.


\subparagraph*{Quantum time-space tradeoffs}
Similar blocking strategies can be applied to quantum circuits to achieve time-space trade-offs for multi-output functions. In~\cite{KSdW07} the authors use direct product theorems to prove time-space tradeoffs for sorting and Boolean matrix multiplication. They also proved somewhat weaker lower bounds for computing matrix-vector products for fixed matrices $A$; those bounds were extended in~\cite{ASdW09} to systems of linear inequalities.  However, both of these latter results apply to computations where the fixed matrix $A$ defining the problem depends on the space bound and, unlike the case of sorting or Boolean matrix multiplication, do not yield a fixed problem for which the lower bound applies at all space bounds.  More recently~\cite{HM21} extended the recording query technique of Zhandry in \cite{Zha19} to obtain time-space lower bounds for the $k$-collision problem and match the aforementioned result for sorting.



\subsubsection*{Our methods}

At the highest level, we employ part of the same paradigms previously used for time-space tradeoff lower bounds. Namely breaking up the computations into blocks of time and analyzing properties of the branching programs or quantum circuits based on what happens at the boundaries between time blocks.
However, for cumulative memory complexity, those boundaries cannot be at fixed locations in time and their selection needs to depend on the space used in these time steps.

Further, in many cases, the time-space tradeoff lower bound needs to set the lengths of those time blocks in a way that depends on the specific space bound.  When extending the ideas to bound cumulative memory usage, there is no single space bound that can be used throughout the computation; this sets up a tricky interplay between the choices of boundaries between time blocks and the lengths of the time blocks.  Because the space usage within a block may grow and shrink radically, even with optimal selection of block boundaries, the contribution of each time block to the overall cumulative memory may be significantly lower than the time-space product lower bound one would obtain for the individual block.  

We show how to bound any loss in going from time-space tradeoff lower bounds to cumulative memory lower bounds in a way that depends solely on the bound on the lengths of blocks as a function $h_0$ of the target space bound (cf. \cref{lem-loss}).  For many classes of bounding functions we are able to bound the loss by a constant factor, and we are able show that it is always at most an $O(\log n)$ factor loss. 
If this bounding function $h_0$ is non-constant, we also need to bound the optimum way for the algorithm to allocate its space budget for producing the require outputs throughout its computation. 
This optimization again depends on the bounding function $h_0$.
This involves minimizing a convex function based on $h_0$ subject to a mix of convex and concave constraints, which is not generally tractable. 
However, assuming that $h_0$ is nicely behaved, we are able to apply specialized convexity arguments (cf. \cref{lem-concave}) which let us derive strong lower bounds on cumulative memory complexity.

\subparagraph*{Road map}

We give the overall definitions in \cref{sec-prelim}, including a review of the standard definitions of the work space used by quantum circuits.
\cref{sec-sort} is stand-alone section containing a simpler explicit cumulative memory lower bound for classical sorting algorithms that does not rely on our general theorems.
In~\cref{sec-qsort}, we give our lower bound for
quantum sorting algorithms which gives a taste of the issues involved for our general theorems. 
In~\cref{sec-generic}, we give the general theorems that let us convert the Borodin-Cook-Abrahamson paradigm for multi-output functions to cumulative memory lower bounds for classical randomized algorithms; that section also contains the corresponding theorems for quantum lower bounds.
\cref{sec-classical-applications} applies our general theorems from \cref{sec-generic} to lower bound the cumulative memory complexity for some concrete problems. \cref{sec-single} proves our lower bounds for single output functions.

\cref{sec-cm-ts-sep} gives a random oracle separation between the time-space product and cumulative memory.
Some technical lemmas that allow us to generalize lower bounds for quantum sorting to arbitrary success probabilities are in \cref{sec-quant-sort-output-bounds}.
\cref{append-optimization-lemma,sec-loss} contain some of the arguments that bound
the optimum allocations of cumulative space budgets to time steps and allow us to bound the loss functions.


\section{Preliminaries}
\label{sec-prelim}
Cumulative memory is an abstract notion of time-space complexity that can be applied to any model of computation with a natural notion of space. Here we will use branching programs and quantum circuits as concrete models, although our results generalize to any reasonable model of computation.
\subparagraph*{Branching Programs}
 Branching programs with input  $\{x_1, \ldots, x_n\} \in \domain^n$ are known as $\domain$-way branching programs and are defined using a rooted DAG in which each non-sink vertex is labeled with an $i \in [n]$ and has $|\domain|$ outgoing edges that correspond to possible values of $x_i$.
Each edge is optionally labeled by some number of
output statements expressed as pairs
$(j,o_j)$ where $j\in [m]$ is an output index and $o_j\in \range$ (if outputs are to be 
ordered) or simply $o_j\in \range$ (if outputs are to be unordered).
Evaluation starts at the root $v_0$ and follows the appropriate labels of the respective $x_i$. 
We consider branching programs $\bp$ that contain $T+1$ layers where the outgoing edges from nodes in each layer $t$ are all in layer $t+1$.
We impose no restriction on the query pattern of the branching program or when it can produce parts of the output. 
Such a branching program $P$ has the following complexity measures: The \emph{time} of the branching program is $T(\bp) = T$. The \emph{space} of the branching program is $S(\bp) = \max_t \log_2 \norm{L_t}$ where $L_t$ is the set of nodes in layer $t$. 
Observe that in the absence of any limit on its space, a branching program could equally well be a decision tree; hence the minimum time for branching 
programs to compute a function $f$ is its \emph{decision
tree complexity}.
The \emph{time-space} (product) used by the branching program is $TS(\bp) = T(\bp) S(\bp)$. The \emph{cumulative memory} used by the branching program is $CM(\bp) = \sum_t \log_2 \norm{L_t}$.

Branching programs are very general and simultaneously model time and space for sequential computation. 
In particular they model time and space for random-access off-line multitape Turing machines 
and random-access machines (RAMs) when time is unit-cost, space is log-cost, and the input and output are read-only
and write-only respectively\footnote{In prior work, branching program space has often been defined to be the logarithm of the total number of nodes (e.g., \cite{BC82, Abr91}) rather than
the logarithm of the width (maximum number of nodes per layer), though the latter has been used (e.g., \cite{CFL83}).
The natural conversion from an arbitrary space-bounded machine to a branching program produces one that is not 
leveled (i.e., nodes are not segregated by time step).  After leveling the branching program, the space of the original machine becomes the logarithm of the width
 (cf. \cite{Pip79}). 
The width-based definition is also the only natural one by which to measure cumulative memory complexity and, in any case, the two definitions differ by at most the additive $\log_2 T$ we used for the Borodin-Cook bound, with lower bounds on width implying lower bounds on size.}.
Branching programs are much more flexible than these models since they can make arbitrary changes to their storage in a single step.

\subparagraph*{Quantum Circuits} We also consider quantum circuits $\calC$ classical read-only input $X=x_1, \ldots, x_n$ that can be queried using an XOR query oracle.
As is normal in circuit models, each output wire is associated
with a fixed position in the output sequence, independent of the input.
As shown in \cref{fig-quantum-circuit} following~\cite{KSdW07}, we abstract an arbitrary quantum circuit $\calC$ into layers $\calC = \{L_1, \ldots, L_T\}$ where layer $L_t$ starts with the $t$-th query $Q$ to the input and ends with the start of the next layer.
During each layer, an arbitrary unitary transformation $V$ gets applied which can express an arbitrary sub-circuit involving input-independent computation.
The sub-circuit/transformation $V$ outputs $S_t$ qubits for use in the next layer in addition to some qubits that are immediately measured in the standard basis, some of which are treated as classical write-only output.
The time of $\calC$ is lower bounded by the number of layers $T$ and we say that the space of layer $L_t$ is $S_t$.
Observe that to compute a function $f$, $T$ must be at least the
\emph{quantum query complexity} of $f$ since that measure corresponds
the above circuit model when the space is unbounded.
Note that the cumulative memory of a circuit is lower-bounded by the sum of the $S_t$. For convenience we define $S_0$, the space of the circuit before its first query, to be zero. Thus we only consider the space after the input is queried.


\begin{figure}[htb]
\centering
\includegraphics[width=0.8\textwidth]{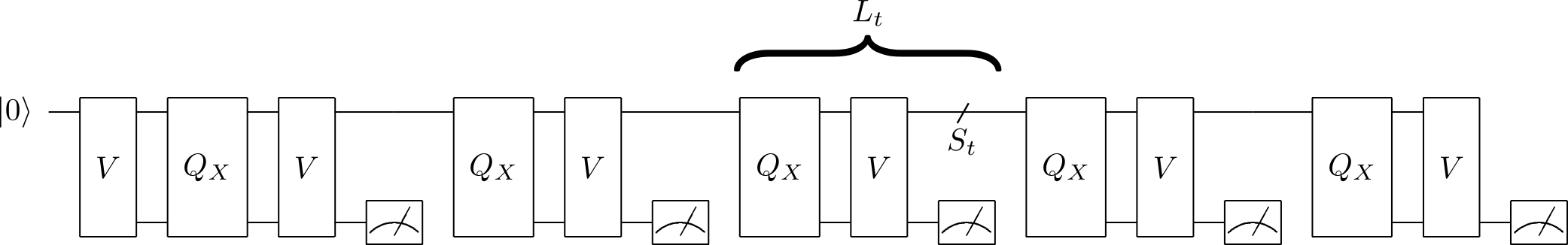}
\caption{The abstraction of a quantum circuit into layers.}\label{fig-quantum-circuit}
\end{figure}

\section{Cumulative memory complexity of classical sorting algorithms}
\label{sec-sort}

For a natural number $N$, the standard version of \emph{sorting} is a function $Sort_{n,N}:[N]^n \rightarrow [N]^n$ that on input $x\in [N]^n$ produces an output $y\in [N]^n$ in non-decreasing order where $y$ is a permutation of $x$; that is, there is some permutation $\pi$ such that $y_i=x_{\pi(i)}$ for all $i\in [n]$.   A related problem is the \emph{ranking} problem $Rank_{n,N}:[N]^n\rightarrow [n]^n$ which on input $x\in [N]^n$ produces
a permutation $\pi$ represented
as the vector $(\pi(1),\ldots,\pi(n))$ such that 
$Sort_{n,N}(x)=(x_{\pi(1)},\ldots,x_{\pi(n)})$
and whenever $x_i = x_j$ for
$i<j$ we have $\pi(i)<\pi(j)$.

\begin{proposition}[\cite{BC82}]
\label{prop-rank-sort}
\begin{itemize}
\item[(a)] If there is an $[nN]$-way branching program $P$ computing $Sort_{n,nN}$ then there is an $[N]$-way branching program $P'$ computing $Rank_{n,N}$ with
$T(P')\le T(P)$, 
$S(P')\le S(P)$, and
$CM(P')\le CM(P)$.
\item[(b)] If there is an $[N]$-way branching program $P''$
computing $Rank_{n,N}$  then there is an $[N]$-way branching program $P'''$ computing $Sort_{n,N}$ with
$T(P''')\le 2T(P'')$, 
$S(P''')\le S(P'')+\log_2 N$, and
$CM(P''')\le 2CM(P'')+T(P''')\log_2 N$.
\end{itemize}
\end{proposition}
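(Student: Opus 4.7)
For part (a), the plan is to use the standard trick that turns ranking into sorting by making inputs distinct and carrying the index information in the low-order digits. Given input $x\in [N]^n$, define $x'\in [nN]^n$ by $x'_i = n\cdot x_i + (i-1)$, so that the $x'_i$ are pairwise distinct and $x'_i<x'_j$ iff either $x_i<x_j$ or ($x_i=x_j$ and $i<j$); thus the permutation $\pi$ that sorts $x'$ is exactly the ranking permutation required by $Rank_{n,N}$. I will transform $P$ into $P'$ without changing its underlying DAG: at every non-sink node labeled with query index $i$, the outgoing edge of $P$ labeled $v'\in [nN]$ is used by $P'$ only when $v'\equiv i-1 \pmod n$, and that edge is relabeled in $P'$ with $v=(v'-(i-1))/n\in[N]$; the remaining edges become unreachable. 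Finally, every output statement $(j,y)$ carried on an edge of $P$ is rewritten to $(j,(y\bmod n)+1)$, extracting $\pi(j)$ from the sorted value $y=x'_{\pi(j)}$. Because $P'$ uses the same layered DAG as $P$, all three complexity bounds are immediate.

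For part (b), the plan is to simulate $P''$ step-by-step and, each time it announces an output rank, use one extra layer to turn that rank into an input value. Concretely, I will build $P'''$ as a leveled program with $2T(P'')$ layers in which odd-indexed layers mirror the corresponding layers of $P''$ and each even-indexed layer is a ``query'' layer. If an edge $e=(u,w)$ of $P''$ (from its layer $t$ to layer $t+1$) carries an output statement $(j,r)$, I replace $e$ by routing it through a fresh node $v_e$ labeled with query position $r$ placed in layer $2t+1$ of $P'''$; then each of the $N$ outgoing edges of $v_e$, labeled by $v\in [N]$, goes to the copy of $w$ in layer $2t+2$ and carries output $(j,v)$, which is exactly $(j,x_{\pi(j)})=(j,y_j)$. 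Edges of $P''$ without output statements are routed through pass-through nodes in the query layer. Without loss of generality we may split any edge carrying multiple output statements into several edges in advance, which increases $T(P'')$ by at most $n$ and is absorbed by the factor of $2$ since $T(P'')\ge n$.

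The verification of the complexity bounds is then essentially a counting exercise. The time bound $T(P''')\le 2T(P'')$ is by construction. For the space bound, each mirror layer has the same width as the corresponding layer of $P''$, while each query layer has one node per edge leaving the preceding layer of $P''$, so its width is at most $N$ times the width of that layer, giving $S(P''')\le S(P'')+\log_2 N$. Summing $\log_2$-widths across the $2T(P'')$ layers yields
\[
CM(P''')\ \le\ \sum_t \log_2|L_t|\ +\ \sum_t\bigl(\log_2|L_t|+\log_2 N\bigr)\ \le\ 2\,CM(P'')\ +\ T(P''')\log_2 N,
\]
as required.

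The main obstacle is bookkeeping rather than mathematics: keeping $P'''$ properly leveled while handling edges that carry multiple output statements, and checking that the factor-of-two blow-up suffices in both the time and cumulative-memory accounting. Once the inserted query layers are defined and each of their nodes is indexed by an edge of $P''$ (so its width is at most $N$ times that of the feeding layer), the three inequalities follow by direct summation as above.
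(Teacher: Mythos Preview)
Your proposal is correct and follows essentially the same construction as the paper: for (a) you encode the index into the value and prune/relabel edges of $P$ accordingly, and for (b) you interleave a query layer after each layer of $P''$ to convert each rank output $(j,r)$ into the value output $(j,x_r)$---exactly the paper's ``one layer becomes two layers'' argument. Your explicit treatment of edges carrying multiple output statements is a detail the paper glosses over; note, though, that your splitting step as written yields $T(P''')\le 2(T(P'')+n)$ rather than $2T(P'')$, so the exact constant in the statement is not quite recovered (the paper's terse proof has the same implicit looseness).
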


\begin{proof}
    For part (a), the program $P'$ is exactly $P$ except that when $P$ queries $x_i\in [Nn]$,
    $P'$ reads $x'_i\in [N]$
    and branches on value
    $x_i=(x'_i,i)$ and
    when $P$ outputs 
    $(i,y_i)=(i,x_{\pi(i)})$ on an edge for $x_{\pi(i)}=(x'_{\pi(i)},\pi(i))$, $P'$
    outputs $(i,\pi(i))$.
    For part (b), the
    program $P'''$ is exactly
    $P''$ except that whenever $P''$ outputs
    $(i,\pi(i))$ on an edge, $P'''$ queries $x_{\pi(i)}$ and outputs $(i,x_{\pi(i)})$.
    One layer becomes two layers and the number of nodes per layer of
    $P'''$ is at most $N$
    times that of $P''$.
\end{proof}

Following~\cite{BC82}, we focus on inputs where the $x_i$ are distinct.
In this case, the tie-breaking we
enforced in defining $Rank_{n,N}$ when there are equal elements is irrelevant.

\begin{proposition}[\cite{BC82}]\label{lem-sort-cond-c}
There is an $\alpha>0$ such that
the following holds. 
Let $n$ be sufficiently large and $\mu$ be the uniform distribution over lists of $n$ distinct integers from $[n^2]$. Then for any branching program $B$ of height $h\leq \alpha n$ and for all integers $k \leq 2\alpha n$, the probability for $x \sim \mu$ that $B$ produces at least $k$ correct output values of $Rank_{n,n^2}$ on input $x$ is at most $2^{-k/\ceil{\log_2 n}}$.
\end{proposition}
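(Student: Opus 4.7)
The plan is to follow a Borodin--Cook style counting argument, bounding the expected number of correct rank claims produced by $B$ via a Markov-type inequality and then converting this to the tail bound on $c(x):=$ (number of correct rank outputs on input $x$).

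I would first condition on the computation path $P$ taken by $B$ on $x$. Such a path is determined by the sequence of $h$ queries (positions $Q_P$ revealing values $V_P$) together with the multiset $O_P\subseteq [n]\times [n]$ of rank claims produced along $P$; we may assume without loss of generality that each position occurs in at most one claim in $O_P$, since multiple claims for the same position can never all be correct. Conditional on $P(x)=P$, the input $x$ is uniform over inputs consistent with the queries, so the unqueried positions receive a uniformly random injection $[n]\setminus Q_P \to [n^2]\setminus V_P$.

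The combinatorial core is the bound, for any $S\subseteq O_P$, on the probability that every claim in $S$ is correct given $P$. Parametrising the conditional distribution as (sorted value set $Y\supseteq V_P$, uniform bijection from unqueried positions to $Y\setminus V_P$), the uniformity of the bijection gives, when $S$ consists of claims at unqueried positions,
\[
    \Pr[S\ \text{all correct}\mid P]\ \le\ \frac{(n-h-|S|)!}{(n-h)!}.
\]
For general $S$ the bound is at least as tight: claims at queried positions translate into constraints on the ranks of $V_P$-elements in $Y$, and those are controlled by hypergeometric estimates that peak at $O(1/\sqrt{n})$ per claim. Summing over $k$-subsets using $|O_P|\le n$ and the identity $\binom{n-h}{k}\frac{(n-h-k)!}{(n-h)!}=\frac{1}{k!}$, together with $\binom{n}{k}\le (en/k)^k$, yields
\[
    \mathbb{E}\!\left[\binom{c(x)}{k}\right]\ \le\ \binom{n}{k}\frac{(n-h-k)!}{(n-h)!}\ \le\ \left(\frac{2e}{k}\right)^k,
\]
valid when $h+k\le n/2$, i.e.\ for $\alpha$ sufficiently small. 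Markov's inequality $\Pr[c(x)\ge k]\le \mathbb{E}\!\left[\binom{c(x)}{k}\right]$ then gives $\Pr[c(x)\ge k]\le (2e/k)^k$. A direct calculation shows $(2e/k)^k\le 2^{-k/\lceil\log_2 n\rceil}$ whenever $k\ge 2e\cdot 2^{1/\lceil\log_2 n\rceil}$, i.e.\ for $k$ at least a small absolute constant once $n$ is large; for the finitely many smaller values of $k$, the target $2^{-k/\lceil\log_2 n\rceil}\to 1$ as $n\to\infty$, so the statement follows from the trivial inequality $\Pr\le 1$ combined with the hypothesis that $n$ is sufficiently large.

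The main obstacle is the clean handling of claims at queried positions in the per-subset bound: for these, the uniform-bijection argument must be supplemented by a hypergeometric estimate on ranks of elements of $V_P$ in the random set $Y$, which is more delicate than the unqueried case but only gives a tighter bound. A secondary issue is arranging the constants ($\alpha$, threshold for "$n$ sufficiently large") so that the final inequality is literally $2^{-k/\lceil\log_2 n\rceil}$ rather than a constant multiple thereof; both points are standard and are exactly the kind of slack the statement's existential $\alpha$ and asymptotic $n$ absorb.
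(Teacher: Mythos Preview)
The paper does not prove this proposition; it is quoted from~\cite{BC82} and used as a black box in Theorem~\ref{thm-classical-sort-cm-bound}. So there is no ``paper's own proof'' to compare against, and the question is whether your sketch stands on its own.

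Your overall template (condition on the path $P$, bound $\Pr[S\text{ all correct}\mid P]$ for each $k$-subset $S\subseteq O_P$, sum to control $\mathbb{E}\bigl[\binom{c(x)}{k}\bigr]$, apply Markov) is sound, and the bound $\frac{(n-h-|S|)!}{(n-h)!}$ for subsets of \emph{unqueried} claims is correct. The gap is exactly where you flag it, but with the inequality pointing the wrong way. For a queried claim $(j,i)$ with $x_i=v\in V_P$, correctness is the event that a hypergeometric count (how many of the $n-h$ random unqueried values fall below $v$) hits a prescribed target; the mode of that hypergeometric is $\Theta(1/\sqrt{n-h})$, which is \emph{larger}, not smaller, than the $1/(n-h)$ factor you get per unqueried claim. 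Worse, the branching program chooses its queried-position claims after seeing $V_P$, so on paths where $V_P$ contains an extreme value a queried claim can be correct with probability~$1$. Hence the uniform per-subset bound $\frac{(n-h-|S|)!}{(n-h)!}$ fails for general $S$, and the clean $(2e/k)^k$ does not follow. If one retreats to the trivial bound of $1$ per queried claim, the sum over $k$-subsets yields only $\mathbb{E}\bigl[\binom{c(x)}{k}\bigr]\le (1+h)^k/k!$, which is useless since $h$ may be $\Theta(n)$. The Borodin--Cook argument handles the interaction between path constraints and rank constraints more carefully than a product of per-claim factors; the combinatorial core of your sketch needs to be redone along those lines.

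A secondary issue: your handling of small $k$ via ``$2^{-k/\lceil\log_2 n\rceil}\to 1$'' does not close the case $k=1$, since the target is strictly below~$1$ while a height-$0$ program that outputs $(j,1)$ for every $j\in[n]$ has exactly one correct output with probability~$1$. This suggests the proposition as literally stated needs $k$ bounded below by an absolute constant; this is harmless for the application, where $k_i\ge \lceil\log_2 n\rceil\log_2(2T)$.
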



\begin{theorem}\label{thm-classical-sort-cm-bound}
Let $P$ be a branching program computing
$Sort_{n,n^3}$ with
probability at least
$n^{-O(1)}$ and $T=T(P)$. Then $T $ is $\Omega(n^2/\log^2 n)$ or $CM(P)$ is $\Omega(n^2 / \log n)$. Further, any random access machine computing
$Sort_{n,n^3}$ with $n^{-O(1)}$ probability requires cumulative memory of $\Omega(n^2 / \log n)$ bits.
\end{theorem}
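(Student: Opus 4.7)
My plan is to use the Sort-to-Rank reduction to move to a ranking problem under the distribution $\mu$ of \cref{lem-sort-cond-c}, and then run a Borodin-Cook-style block decomposition in which block boundaries are chosen adaptively at time layers of small width. By \cref{prop-rank-sort}(a), any branching program $P$ for $Sort_{n,n^3}$ with $T(P)=T$, $CM(P)=CM$, and per-input success probability $\ge n^{-c}$ yields a branching program $P'$ for $Rank_{n,n^2}$ with $T(P')\le T$, $CM(P')\le CM$, and per-input success probability $\ge n^{-c}$. Averaging over the random tape then gives a deterministic instantiation $P''$ of $P'$ succeeding under $\mu$ with probability $\ge n^{-c}$.

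Setting $h=\alpha n$, I partition $[0,T]$ into $N=\lceil 2T/h\rceil$ consecutive windows of length $h/2$, and in each window $W_i$ pick $t_i^*\in W_i$ minimizing $\log_2|L_{t_i^*}|$. Averaging inside each window gives $\log_2|L_{t_i^*}|\le(2/h)\sum_{t\in W_i}\log_2|L_t|$, so
\[
\sum_{i=1}^N \log_2|L_{t_i^*}|\ \le\ \frac{2\,CM(P'')}{h}.
\]
Because consecutive $t_i^*$ lie in adjacent windows, the sub-program between them has height at most $h=\alpha n$, placing it in the hypothesis of \cref{lem-sort-cond-c}.

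Next I perform a dichotomy. If some $\log_2|L_{t_i^*}|>\alpha n/(2\log_2 n)$, then window $W_i$ alone already contributes $CM_{W_i}\ge (h/2)\log_2|L_{t_i^*}|=\Omega(n^2/\log n)$ and we are done. Otherwise, set $k_i=\lceil\log_2 n\rceil(\log_2|L_{t_i^*}|+(c+3)\log_2 n)$, which is at most $2\alpha n$ for large $n$. \cref{lem-sort-cond-c} with a union bound over the $|L_{t_i^*}|$ starting nodes gives probability at most $n^{-(c+3)}$ per block of producing $\ge k_i$ correct outputs, and a further union bound over the $N=O(T/n)$ blocks gives probability at most $n^{-(c+1)}$ that any block does so. Intersecting with the success event of $P''$ leaves positive probability that $P''$ is fully correct while each block produces fewer than $k_i$ correct outputs, so $n\le\sum_i k_i$, i.e.,
\[
n\ \le\ \frac{2\lceil\log_2 n\rceil\,CM(P'')}{\alpha n}\ +\ O\!\left(\frac{T\log_2^2 n}{n}\right).
\]
If $T=o(n^2/\log^2 n)$, the second term is $o(n)$, forcing $CM(P)\ge CM(P'')=\Omega(n^2/\log n)$; otherwise $T=\Omega(n^2/\log^2 n)$ already. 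For the RAM statement, a RAM with $s(t)$ bits at step $t$ levels into a branching program with $\log_2|L_t|\le s(t)+O(1)$, so $CM_{\mathrm{RAM}}\ge CM_{\mathrm{BP}}-O(T)$; when $T=\Omega(n^2/\log^2 n)$, the $\Omega(\log n)$ bits per step required to hold an input or memory address further give $CM_{\mathrm{RAM}}\ge\Omega(T\log n)=\Omega(n^2/\log n)$.

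The main obstacle is engineering the block decomposition. The original Borodin-Cook argument uses equal-length blocks calibrated to a single worst-case space bound and lower-bounds only the sum of per-block maximum widths $\sum_i S_i^*$, which for cumulative memory can vastly exceed $CM(P)$ when space usage is spiky. Picking boundaries at minimum-width layers instead is what lets me replace $\sum_i S_i^*$ by $O(CM/h)$, and the dichotomy in the third paragraph is what keeps the union-bound threshold $k_i$ inside the range $k\le 2\alpha n$ where \cref{lem-sort-cond-c} is non-trivial.
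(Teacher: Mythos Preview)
Your proposal is correct and follows essentially the same approach as the paper: reduce $Sort_{n,n^3}$ to $Rank_{n,n^2}$ via \cref{prop-rank-sort}(a), partition time into windows of length $\Theta(n)$, pick the minimum-width layer in each window, use \cref{lem-sort-cond-c} with a union bound to cap the number of correct outputs per block, and derive the inequality $n\lesssim (\log n)\,CM/n + (T\log^2 n)/n$. Your explicit dichotomy on $\log_2|L_{t_i^*}|$ to keep $k_i\le 2\alpha n$ is a welcome bit of care that the paper leaves implicit; the paper's version silently relies on the fact that $k_i>2\alpha n$ would force $S_i=\Omega(n/\log n)$ and hence $CM\ge S_i\cdot H=\Omega(n^2/\log n)$ directly.
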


\begin{proof}
We prove the same bounds
for branching programs $P$ computing $Rank_{n,n^2}$ which, by~\cref{prop-rank-sort}, implies the bounds for computing
$Sort_{n,n^3}$.

For simplicity we first assume that $P$ is determistic and is always correct. Let $\alpha$ be the constant 
and $\mu$ be the probability distributuon on $[n^2]^n$ from \cref{lem-sort-cond-c},
and let $H=\floor{\frac{\alpha}{2} n}$. 
We partition $P$ into $\ell = \ceil{T/H}$ intervals $\{I_1, \ldots, I_{\ell}\}$, all of length $H$ except for the first, which may be shorter than the rest. 
Let $t_1 = 0$, $t_{\ell+1} = T$, and for $i \in [2,\ell]$, $t_i$ be the time-step in $I_{i}$ with the fewest number of nodes. We define $S_i = \log_2(\norm{L_{t_i}})$ where $L_j$ is the set of nodes of $P$ in layer $j$. The $i$-th time block $B_i$ will contain all layers from $t_i$ to $t_{i+1}$. We observe: 
\begin{equation}\label{eqn-cmc-sort}
CM(P) \geq \sum_{i=2}^{\ell} S_i \,H=H\ \sum_{i=1}^\ell S_i
\end{equation}
since $S_1=0$.
Define $k_i = \ceil{\ceil{\log_2 n } (S_i + \log_2 (2T))}$, which will be our target number of outputs for block $B_i$. 
By our choice of $B_i$ we know its length is at most $\alpha n$ and it starts at a layer with $2^{S_i}$ nodes. 
So, by \cref{lem-sort-cond-c}, combined with a union bound, the probability for $x\sim \mu$ that $B_i$ produces at least $k_i$ correct output values of $Rank_{n,n^2}$ on input $x \sim \mu$ is at most $1/(2T)$. 
Thus the probability over $\mu$ that at least one block $B_i$ produces at least $k_i$ correct output values is at most $1/2$ and the probability that the total number of outputs produced is at most $\sum_{i=1}^{\ell} (k_i -1)$ is at least $1/2$. 
Since $P$ must always produce $n$ correct outputs, we must have:
\begin{displaymath}\sum_{i=1}^\ell (k_i - 1) \geq n.\end{displaymath}
Inserting the definition of $k_i$ we get:
\begin{displaymath}\sum_{i=1}^\ell (\ceil{\log_2 n} (S_i + \log_2 (2T))) \geq n.\end{displaymath}
Using \cref{eqn-cmc-sort} to express this in terms of $CM(P)$ gives us:
\begin{displaymath}CM(P)/H + \ell\, \log_2(2T) \geq \frac{n}{\ceil{\log_2 n}}\end{displaymath}
or
\begin{displaymath}CM(P) + T \log_2(2T) \geq \frac{n\floor{\frac{\alpha}{2} n}}{\ceil{\log_2 n}} \geq \frac{\alpha n^2}{3 \log_2 n}.\end{displaymath}
Thus at least one of
$T \log_2 (2T)$  
or
$CM(P)$ is at least $\alpha n^2/(6 \log_2 n)$, as required, since $\log T$
is $O(\log n)$ wlog.
The bound for random-access machines comes from observing that such a machine requires at least one memory cell of $\Omega(\log T)$ bits at every time step.

To prove the bound for algorithms with success probability $n^{-c}$, we multiply
$\log_2(2T)$ in the above argument by $(c+1)$.   
Since any sorting algorithm must have
$T\ge n$, on randomly chosen inputs the probability that it produces at least
$\sum_{i=1}^{\ell} (k_i -1)$ correct outputs becomes
$\frac{1}{2n^c}<\frac{1}{n^c}$ and hence the above
bounds (reduced by the constant factor $c+1$) apply to deterministic
algorithms with success probability $1/n^c$ for inputs from the uniform distribution over lists of $n$ distinct integers from $[n^2]$.  
By Yao's lemma this implies the same lower bound for 
randomized algorithms with success probability $n^{-c}$.
\end{proof}

\cref{thm-classical-sort-cm-bound} applies to cumulative working memory of any algorithm that produces its sorted output in a write-only output
vector and can compute those values in
arbitrary time order.   
If the algorithm is constrained to produce its sorted
output in the natural time order then, following~\cite{DBLP:journals/siamcomp/Beame91}, one can obtain a slightly
stronger bound.

\begin{theorem}\label{thm-sort-fixed-order}
    Any branching program $P$
    computing the outputs of $Sort_{n,n}$ in order in time $T$ and probability at least $4/5$ requires
    $T$ to be $\Omega(n^2/\log n)$ or
    $CM(P)$ to be $\Omega(n^2)$. 
    Further,
    any random access machine computng $Sort_{n,n}$ in order with probability at least $4/5$
    requires cumulative memory
    $\Omega(n^2)$.
\end{theorem}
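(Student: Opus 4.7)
My plan is to follow the block-decomposition strategy from the proof of \cref{thm-classical-sort-cm-bound}, but strengthen its key hardness lemma (\cref{lem-sort-cond-c}) by exploiting the in-order output constraint to remove the $\log n$-factor loss. Partition $P$ into $\ell = \lceil T/H \rceil$ time blocks of length $H = \lfloor \alpha n/2 \rfloor$, pick $t_i$ to be the layer of minimum width $2^{S_i}$ inside block $I_i$, let $B_i$ span the layers from $t_i$ to $t_{i+1}$, and observe as before that $CM(P) \geq H \sum_{i=1}^{\ell} S_i$.

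The heart of the proof is a strengthened hardness lemma: for a suitable distribution $\mu$ on $[n]^n$ and a constant $\alpha > 0$, for any branching program $B$ of height $h \leq \alpha n$ starting at a single node and any $k \leq 2\alpha n$, the probability for $x \sim \mu$ that $B$ produces at least $k$ correct in-order outputs of $Sort_{n,n}(x)$ is at most $2^{-k}$. The exponent $k$ (rather than $k/\log n$) is the key gain. The intuition is that when the program must commit to the $i$-th smallest value in real time, each such commitment is a $\Theta(1)$-bit event under a $\mu$ that puts constant conditional entropy on each order statistic (for instance, a product distribution on $[n]^n$ chosen so that the $i$-th order statistic lies in an $O(1)$-size window but is not forced, even after a few queries); Borodin--Cook's $\log n$ loss arises precisely because unordered outputs can be batched and produced out of time order, a loophole the in-order constraint closes.

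With the lemma in hand, set $k_i = \lceil S_i + \log_2(10\ell) \rceil$ for each block. The union bound across starts gives $\Pr[B_i \text{ produces } \geq k_i \text{ correct in-order outputs}] \leq 2^{S_i}\cdot 2^{-k_i} \leq 1/(10\ell)$, so every block stays under its budget on at least a $9/10$-fraction of inputs. Since $P$ succeeds with probability $\geq 4/5$, both events co-occur on a $\geq 7/10$-fraction of inputs, and on any such input the $n$ correct outputs must be distributed across the blocks, forcing $\sum_{i=1}^{\ell}(k_i - 1) \geq n$. Substituting the definition of $k_i$, using $\log_2 \ell = O(\log n)$ (WLOG $T = \poly(n)$), and recalling $H = \Theta(n)$ yields $CM(P) + O(T \log n) \geq \Omega(n^2)$, which gives the branching-program conclusion: either $T$ is $\Omega(n^2/\log n)$ or $CM(P)$ is $\Omega(n^2)$. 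The RAM bound follows automatically, since a RAM uses $\Omega(\log n)$ bits per step and so $CM \geq T \log n$, absorbing the $T \log n$ slack term.

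The principal obstacle is the strengthened hardness lemma: selecting the distribution $\mu$ and analyzing in-order computations on it so that each correct in-order output contributes an $\Omega(1)$-bit probability penalty instead of the $\Omega(1/\log n)$-bit penalty of Borodin--Cook. I expect this to follow the approach in \cite{DBLP:journals/siamcomp/Beame91}: combine the rank-identification structure with a counting/entropy argument specific to the real-time constraint so that a program with $h \leq \alpha n$ queries and only a single start state has its successful in-order output sequences of length $k$ occupy at most a $2^{-k}$ mass under $\mu$.
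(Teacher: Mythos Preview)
Your approach differs from the paper's. The paper gives only a one-line proof sketch: an algorithm that produces the sorted output of $Sort_{n,n}$ in time order can be augmented, with $O(\log n)$ additional bits per step to remember the previous output value, into an algorithm for $Unique_{n,n}$ (an element $y_i$ is unique iff $y_{i-1}<y_i<y_{i+1}$ in the sorted stream). The theorem then follows immediately from the cumulative-memory bound for Unique Elements, \cref{thm-unique-elements}, which in turn is an application of the general \cref{thm:full-general-bound-classical} to \cref{prop:unique}.

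Your direct block-decomposition plan is structurally sound, but the ``strengthened hardness lemma'' you flag as the principal obstacle is essentially the Unique Elements hardness statement in disguise. Your intuition that each in-order output should carry $\Theta(1)$ bits of surprise under some $\mu$ on $[n]^n$ is right, but the clean way to cash it out is precisely \cref{prop:unique}(B'): under uniform $\mu$, each correctly named unique element costs $\tfrac{1}{2}\log_2 e$ bits irrespective of any $n/4$ queries. Your proposed route of arguing directly about order statistics is delicate---individual order statistics of a uniform $[n]^n$ input are highly concentrated, so the ``$O(1)$-size window but not forced'' distribution you sketch would need real care to construct, and any successful construction would almost certainly rediscover the unique-elements counting argument. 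The reduction buys you that lemma for free and keeps the proof to one line; your route, once the lemma is in hand, re-derives \cref{thm-unique-elements} as a special case rather than invoking it.
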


\begin{proof}[Proof Sketch]
    Any such algorithm can easily determine all
    the elements of the input that
    occur uniquely and the lower bounds follow from the bounds
    on Unique Elements that we prove in~\cref{sec-classical-applications}.
\end{proof}
\section{Quantum cumulative memory complexity of sorting}
\label{sec-qsort}

As an illustrative example, we 
first show 
that the quantum cumulative memory complexity of sorting is $\Omega(n^3/T)$, matching the $TS$ complexity bounds given in \cite{KSdW07, HM21}. 
This involves the quantum circuit model which, as we have noted, produces each output position at a predetermined input-independent layer. We restrict our attention to circuits that output all elements in the input in some fixed rank order. While our proof is inspired by the time-space lower bound of \cite{KSdW07}, it can be easily adapted to follow the proof in \cite{HM21} instead.
We start by constructing a probabilistic reduction from the $k$-threshold problem to sorting.
\begin{definition}
In the \emph{$k$-threshold problem} we receive an input $X = x_1, \ldots, x_n$ where $x_i \in \{0,1\}$. We want to accept iff there are at least $k$ distinct values for $i$ where $x_i = 1$.
\end{definition}


\begin{proposition}[Theorem 13 in \cite{KSdW07}]\label{thm-q-kthresh-dp}
For every $\gamma > 0$ there is an $\alpha > 0$ such that any quantum $k$-threshold circuit with at most $T \leq \alpha \sqrt{kn}$ queries and with perfect soundness must have completeness $\sigma \leq e^{-\gamma k}$ on inputs with Hamming weight $k$.
\end{proposition}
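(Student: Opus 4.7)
The plan is to use the polynomial method combined with a strong direct product argument for symmetric functions. First, I would invoke the standard fact that the acceptance probability of any $T$-query quantum algorithm on a Boolean input $X \in \{0,1\}^n$ is a multilinear polynomial $p(x_1,\ldots,x_n)$ of total degree at most $2T$. Since the $k$-threshold function is invariant under permutations of coordinates, Minsky-Papert symmetrization yields a univariate polynomial $q(w)$ of degree at most $2T$ in the Hamming weight $w$, satisfying $q(w) \in [0,1]$ for every integer $w \in \{0,1,\ldots,n\}$, and whose value $q(k)$ equals precisely the completeness $\sigma$ averaged over inputs of Hamming weight exactly $k$.

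Second, I would exploit perfect soundness: $q(w) = 0$ for all integer $w \in \{0,1,\ldots,k-1\}$, so $q$ has $k$ known roots and factors as $q(w) = \binom{w}{k}\,\tilde q(w)$ with $\deg \tilde q \le 2T - k$. The problem reduces to bounding $\tilde q(k)$ subject to the constraint that $\binom{w}{k}\tilde q(w) \in [0,1]$ for integer $w \in [k,n]$. In particular, at large $w$ the rapidly growing factor $\binom{w}{k}$ forces $\tilde q$ to be tiny, which by an extremal polynomial argument (Markov--Bernstein, or Ehlich--Zeller/Coppersmith--Rivlin style bounds for polynomials bounded on an integer grid) then transfers to a bound on $\tilde q(k)$, provided the degree $2T - k$ is sufficiently small compared to $\sqrt{kn}$.

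Third, to get the exponential completeness bound $\sigma \le e^{-\gamma k}$, a single application of Paturi-style bounds is not enough — those give only a constant gap. The plan is to iterate the argument geometrically, as in Klauck--\v{S}palek--de Wolf: one uses that if $q(k)$ were larger than $e^{-\gamma k}$, then composing $q$ suitably with itself (or applying the XOR/direct-sum trick for symmetric functions) produces a low-degree polynomial that contradicts the single-instance quantum query lower bound of $\Omega(\sqrt{kn})$ for the $k$-threshold problem. Equivalently, one can set up a hybrid argument stepping through thresholds $k, 2k, 4k, \ldots$, accumulating a constant-factor suppression at each doubling step, which multiplies out to the claimed $e^{-\gamma k}$ bound once $T \le \alpha\sqrt{kn}$ with $\alpha$ small enough in terms of $\gamma$.

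The main obstacle is the last step: obtaining the \emph{strong} direct product rate rather than a mere polynomial gap. A naive polynomial-method bound only shows $\sigma$ is bounded away from $1$, and it is delicate to amplify this to an exponentially small completeness without losing constant factors in the $\sqrt{kn}$ threshold. The careful bookkeeping on how many "halvings" of acceptance probability are enforced per unit of polynomial degree — equivalently, the tight quantitative form of the extremal polynomial inequality used to pass from $\tilde q(n)$ to $\tilde q(k)$ — is where the proof becomes technically involved and where the choice of the constant $\alpha$ as a function of $\gamma$ is determined.
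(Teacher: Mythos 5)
Your first two steps (polynomial method, Minsky--Papert symmetrization, and using perfect soundness to place $k$ roots at $0,\dots,k-1$ so that $q(w)$ factors through $\binom{w}{k}$) are exactly the right setup and match the route taken in~\cite{KSdW07}, whose Lemma~12 is what actually powers the cited Theorem~13. The gap is in your third step. You assert that a single extremal-polynomial bound gives only a constant gap and that the exponential completeness bound must be earned by iterating through thresholds $k,2k,4k,\dots$ and accumulating constant-factor suppressions. That is not how the proof works, and the amplification step you sketch is not needed. The exponential decay $e^{-\gamma k}$ falls out of \emph{one} application of a Coppersmith--Rivlin/Ehlich--Zeller-type bound: once you factor $q(w)=\binom{w}{k}\tilde q(w)$ with $\deg\tilde q\le 2T-k$, the $[0,1]$ constraint at integers $w>k$ forces $\tilde q(w)\le \binom{w}{k}^{-1}$, which is already exponentially small in $k$ when $w$ is of order $n$. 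Because $\tilde q$ has degree only $2T-k\le 2\alpha\sqrt{kn}$, the extremal inequality transports that smallness back to $w=k$ with at most an $\exp\!\bigl(O(\alpha^2 k + \alpha\sqrt{k}\cdot\text{stuff})\bigr)$ loss, which is absorbed by choosing $\alpha$ small as a function of $\gamma$. This is precisely what \cref{lem-polynomial-threshold} (quoted verbatim from Lemma~12 of~\cite{KSdW07}) says: the bound on $\sigma$ already contains the term $-k-\gamma k$ in the exponent, and \cref{lem-k-thresh-consts} just verifies that the remaining terms stay below $\gamma k/2$ when $\alpha<\min\bigl(1/(16\sqrt{e^{\gamma+1}+1}),\sqrt{1/(8b)}\bigr)$. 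So the dependence of $\alpha$ on $\gamma$ comes from balancing those numerator terms against $\gamma k$, not from counting doubling steps. A hybrid argument would also run into trouble quantitatively, because each "doubling" step costs a polynomial multiplication that doubles the degree budget, and you would lose the $\sqrt{kn}$ threshold after $O(1)$ rounds rather than $O(k)$ rounds.
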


\begin{lemma}\label{lem-qcircuit-fixed-outputs-bound}
Let $\gamma > 0$. Let $n$ be sufficiently large and $\calC(X)$ be a quantum circuit with input $X = x_1, \ldots, x_n$. There is a  $\beta< 1$ depending only on $\gamma$ such that for all $k \leq \beta^2 n$ and $R \subseteq \{n/2+1, \ldots, n\}$ where $\norm{R} = k$, if $\calC(X)$ makes at most $\beta \sqrt{kn}$ queries, then the probability that $\calC(X)$ can correctly output all $k$ pairs $(x_i, r_j)$ where $r_j \in R$ and $x_i$ is the $r_j$-th smallest element of $X$ is at most $e^{(1-\gamma) k-1}$. If $R$ is a contiguous set of integers, then the probability is at most $e^{- \gamma k}$.
\end{lemma}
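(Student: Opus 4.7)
The plan is to reduce the rank-identification task in the lemma to the $k$-threshold problem of \cref{thm-q-kthresh-dp}, so that any $\calC$ violating the claimed probability bound would yield a $k$-threshold circuit violating its completeness guarantee. I embed a hypothetical Boolean input $y \in \{0,1\}^{n'}$ of Hamming weight exactly $k$ into a length-$n$ input $X$ for $\calC$ so that correctly identifying the elements of $X$ at the ranks $R$ reveals exactly the positions of ones in $y$.

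I start with the contiguous case $R = \{r_1, r_1+1, \ldots, r_1+k-1\}$ with $r_1 \ge n/2+1$. Set $n' := r_1 + k - 1$, so $n/2 + k \le n' \le n$. Draw a uniform permutation $\pi : [n] \to [n]$; reserve positions $\pi(1), \ldots, \pi(n')$ for $y$, and use the remaining $n - n'$ positions as ``high fillers.'' For $j \in [n']$ set $x_{\pi(j)} := L_j$ if $y_j = 0$ and $x_{\pi(j)} := H_j$ if $y_j = 1$, with distinct values $L_1 < \cdots < L_{n'-k} < H_1 < \cdots < H_k$ chosen less than the high-filler values. Then $y$'s zero-positions occupy ranks $1, \ldots, r_1 - 1$ and its one-positions occupy exactly the ranks in $R$. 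Each XOR query to $X$ is simulated by at most two XOR queries to $y$: filler positions use predetermined values, while a $y$-indexed position requires one call to $y_j$ to compute the value and one call to uncompute the ancilla. Setting $\beta := \alpha/(2\sqrt{2})$ where $\alpha = \alpha(\gamma)$ is the constant of \cref{thm-q-kthresh-dp}, and using $n' \ge n/2$, the simulator uses $\le 2T \le 2\beta\sqrt{kn} \le \alpha\sqrt{kn'}$ queries to $y$. The output pairs $\{(x_i, r_j) : r_j \in R\}$ are in bijection with the one-positions of $y$, yielding a $k$-threshold circuit on $y$ whose completeness is at least the success probability of $\calC(X)$, which \cref{thm-q-kthresh-dp} bounds by $e^{-\gamma k}$.

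For general non-contiguous $R$, I extend the construction by interleaving ``gap fillers'' with distinct values placed between the $H_j$'s, chosen so that the one-positions of $y$ land at the prescribed (non-contiguous) ranks in $R$ while the fillers occupy the ranks in $\{r_{\min}, \ldots, r_{\max}\} \setminus R$. Because this assignment now encodes which specific one-position of $y$ maps to which specific rank in $R$ (not merely which $y$-indices are ones), I additionally compose $\pi$ with a uniform bijection $\sigma$ on the $k$ one-indices so that the correspondence is uniformly random. A correct output of $\calC(X)$ at rank $r_j$ then identifies one particular one-position of $y$, and union-bounding over the possible bijections on one-indices yields the stated probability $e^{(1-\gamma)k - 1}$, after bounding the combinatorial overhead by $e^{k-1}$ via Stirling.

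The hard part is the non-contiguous case. The construction must inject enough randomness to hit specific non-contiguous ranks while keeping query simulation at only a constant factor over $T$ so that the induced $k$-threshold circuit still satisfies the query precondition of \cref{thm-q-kthresh-dp}. Bounding the combinatorial slack by exactly $e^{k-1}$—rather than a larger factor that would render the bound useless—requires routing the randomization through the small set of bijections on one-indices rather than over all $\binom{n/2}{k}$ possible placements of ones in the top half, which is why the reduction passes through a contiguous intermediate rather than matching $R$ directly.
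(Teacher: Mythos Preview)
Your overall plan---reduce to $k$-threshold and pay a combinatorial overhead of roughly $e^{k-1}$ in the non-contiguous case---matches the paper's, but two genuine gaps remain.

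First, \cref{thm-q-kthresh-dp} requires \emph{perfect soundness}, which your reduction does not provide: you read off indices from $\calC$'s output pairs and accept, so if $\calC$ outputs garbage on a $y$ with fewer than $k$ ones, your threshold circuit may still accept. The paper fixes this by appending the original index $i$ as low-order bits to each value and, after running $\calC$, querying each recovered index to check that $y_i=1$. These $k$ extra queries are why the paper takes $\beta=\sqrt{2}\,\alpha/6$ (so that $2\beta\sqrt{kn}+k\le 3\beta\sqrt{kn}\le \alpha\sqrt{kn/2}$ using $k\le\beta^2 n$) rather than your $\beta=\alpha/(2\sqrt{2})$.

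Second, and more seriously, your non-contiguous construction presupposes knowledge of the one-positions of $y$. Placing ``gap fillers'' with values between consecutive $H$-values of the ones---so that the ones land exactly at the ranks in $R$---requires already knowing which positions carry ones; likewise a ``bijection $\sigma$ on the $k$ one-indices'' is only defined once those indices are known. Neither can be part of an input-oblivious simulation of $\calC$'s queries. The paper avoids this by fixing a uniformly random function $f:[n/2]\to R$ \emph{independent of $y$}, setting $x_i':=f(i)\cdot y_i$ (so zeros become $0$ and ones become their $f$-image in $R$), appending $i$ as low-order bits, and adding one dummy of value $j$ for each $j\in\{n/2+1,\ldots,n\}\setminus R$ together with $k$ dummy zeros. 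Whenever $f$ is injective on the $k$ ones---an event of probability $k!/k^k\ge e^{1-k}$ over the choice of $f$---the ones occupy exactly the ranks $R$, and $\calC$'s correct output recovers them. This is an averaging argument, not a union bound: the threshold circuit's completeness is at least $p\cdot e^{1-k}$, so $p\le e^{-\gamma k}/e^{1-k}=e^{(1-\gamma)k-1}$.
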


A version of 
this lemma
was first proved in \cite{KSdW07} with the additional assumption that the set of output ranks $R$ is a contiguous set of integers;  this was sufficient to show that any quantum circuit that produces its sorted output in sorted time order requires that $T^2S$ is $\Omega(n^3)$. 
The authors stated that their proof can be generalized to any fixed rank ordering, but the generalization is not obvious. We generalize their lemma to non-contiguous $R$, which is sufficient to obtain an $\Omega(n^3/T)$ lower bound on the cumulative complexity of sorting independent of the
time order in which the sorted output is produced.

\begin{proof}[Proof of \cref{lem-qcircuit-fixed-outputs-bound}]
Choose $\alpha$ as the constant for $\gamma$ in \cref{thm-q-kthresh-dp} and let $\beta = \sqrt{2} \alpha / 6$. 
Let $\calC$ be a circuit with at most $\beta \sqrt{kn}$ layers that outputs the $k$ correct pairs $(x_i, r_j)$ with probability $p$. 
Let $R = \{r_1, \ldots r_k\}$ where $r_1 <r_2 < \ldots < r_k$.
We describe our construction of a circuit $\calC'(X)$ solving the $k$-threshold problem on inputs $X = x_1, \ldots, x_{n/2}$ with exactly $k$ ones in terms of 
a function $f: [n/2] \to R$.
Given $f$, we re-interpret the input as follows: we replace each $x_i$ with $x_i' = f(i)x_i$, 
add $k$ dummy values of $0$, and add one dummy value of $j$ for each $j \in \{n/2+1, \ldots, n\} \setminus R$. 
Doing this gives us an input $X' = x_1', \ldots, x_n'$ that has $n/2$ zeroes. 
If we assume that $f$ is 1-1 on the $k$ ones of $X$,
then the image of the ones of $X$ will
be $R$ and there will be precisely one element of $X'$
for each $j\in \{n/2+1,\ldots,n\}$. 
Therefore the
element of rank $j>n/2$ in $X'$ will have value $j$, and hence the rank $r_1, \ldots, r_k$
elements of $X'$ will be the images of precisely those elements of $X$ with $x_i=1$.

To obtain perfect soundness, we cannot rely on the output of $\calC(X')$ and must be able to check that
each of the output ranks was truly mapped to by a distinct one of $X$.
For each element $x_i$ of $X$ we simply append its index $i$ as $\log_2 n$ low order bits to its image $x_i'$ and append an all-zero bit-vector of length $\log_2 n$ to each dummy value to obtain input $X''$.
Doing so will not change the ranks of the elements in $X'$, but will allow recovery of the $k$ indices that should be the ones in $X$.
In particular, circuit $\calC'(X)$ will run $\calC(X'')$ and then for each output $x''_j$ with low order bits $i$, $\calC'(X)$ will query $x_i$, accepting if and only if all of those $x_i=1$.
More precisely, since the mapping from each $x_i$ to the corresponding $x_i''$ is only a function of $f$, $x_i$, and $i$, as long as $\calC'(X)$ has an explicit representation of $f$, it can simulate each query of $\calC(X'')$ with two oracle queries to $X$. 
Since $\calC'$ has at most
\begin{displaymath}2\beta \sqrt{kn} + k \leq 3\beta \sqrt{kn} \leq \alpha \sqrt{kn/2}\end{displaymath}
layers, by \cref{thm-q-kthresh-dp}, it can only accept with probability $\leq e^{-\gamma k}$ on inputs with $k$ ones.

We now observe that for each fixed $X$ with exactly $k$ ones, for a randomly chosen function
$f:[n/2]\to R$, the probability that $f$ is 1-1 on the
ones of $X'$ is exactly $k!/k^k\ge e^{1-k}$. Therefore $\calC'(X)$ will give the indices of the $k$ ones in $X$ with probability\footnote{Note that though this is exponentially small in $k$ it
is still sufficiently large compared to the 
completeness required in the lower bound for the $k$-threshold problem.} at least $p\cdot e^{1-k}$.
However, this probability must be at most $e^{-\gamma k}$, so we can conclude that $p \leq e^{(1-\gamma) k-1}$. In the event that $R$ is a contiguous set of integers, observe that any choice for the function $f$ will make $X''$ have the ones of $X$ become ranks $r_1, \ldots, r_k$. So the probability of finding the ones is at least $p \leq e^{-\gamma k}$.
\end{proof}

By setting $k$ and $\gamma$ appropriately, \cref{lem-qcircuit-fixed-outputs-bound} gives a useful upper bound on the number of fixed ranks successfully output by any $\beta \sqrt{Sn}$ query quantum circuit that has access to $S$ qubits of input dependent initial state. To handle input-dependent initial state, we will need to use the following proposition.

\begin{restatable}[\cite{Aar05}]{proposition}{aaronson}\label{prop-quant-union}
Let $\calC$ be a quantum circuit, $\rho$ be any $S$ qubit (possibly mixed) state, and $I$ be the $S$ qubit maximally mixed state. If $\calC$ with initial state $\rho$ produces some output $\mathcal{O}$ with probability $p$, then $\calC$ with initial state $I$ produces $\mathcal{O}$ with probability at least $p/2^{2S}$.
\end{restatable}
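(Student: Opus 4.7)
The plan is to relate the probabilities under initial states $\rho$ and $I$ by expressing both density matrices as mixtures over a common orthonormal basis and then exploiting linearity of the circuit's action.

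First, I would diagonalize $\rho = \sum_{k=1}^{2^S} \lambda_k \ket{\psi_k}\bra{\psi_k}$, where $\{\ket{\psi_k}\}$ is an orthonormal eigenbasis and the eigenvalues satisfy $\lambda_k \in [0,1]$ with $\sum_k \lambda_k = 1$. The key observation is that the maximally mixed state is basis-independent, so in this same eigenbasis we also have $I = \frac{1}{2^S}\sum_k \ket{\psi_k}\bra{\psi_k}$. This puts $\rho$ and $I$ on the same footing: they differ only in how they weight the pure states $\ket{\psi_k}\bra{\psi_k}$.

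Next, let $p_k$ denote the probability that $\calC$, initialized with the pure state $\ket{\psi_k}$, produces output $\mathcal{O}$. Formally, if $\Pi_{\mathcal{O}}$ is the measurement operator corresponding to that output event and $\ket{0^a}$ is the ancilla register, then $p_k = \mathrm{Tr}\bigl(\Pi_{\mathcal{O}}\,\calC(\ket{\psi_k}\bra{\psi_k}\otimes \ket{0^a}\bra{0^a})\calC^{\dagger}\bigr)$. By linearity of the circuit's action on its input and of the trace against $\Pi_{\mathcal{O}}$, the probability of $\mathcal{O}$ starting from $\rho$ equals $p = \sum_k \lambda_k p_k$, while starting from $I$ it equals $p' = \frac{1}{2^S}\sum_k p_k$. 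Since every eigenvalue satisfies $\lambda_k \leq 1$, we get $p = \sum_k \lambda_k p_k \leq \sum_k p_k = 2^S\, p'$, hence $p' \geq p/2^S \geq p/2^{2S}$, which is even stronger than the stated bound.

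I do not anticipate a real obstacle. An equivalent route uses the operator inequality $\rho \preceq I_{2^S}$ together with the positive operator $M = \calC^{\dagger}\Pi_{\mathcal{O}}\calC$ to conclude $\mathrm{Tr}(M\rho) \leq \mathrm{Tr}(M) = 2^S \mathrm{Tr}(M \cdot I)$, giving the same conclusion, but the spectral-mixture argument is more transparent because it makes clear that no structural assumption on $\calC$, the ancilla register, or the measurement defining $\mathcal{O}$ is used beyond linearity, and that the $2^S$ factor arises purely from the dimension of the eigenbasis of $\rho$.
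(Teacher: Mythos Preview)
The paper does not prove this proposition; it is quoted from \cite{Aar05} and used as a black box, so there is no in-paper proof to compare against. Your argument is correct and in fact establishes the sharper bound $p' \ge p/2^{S}$ rather than $p/2^{2S}$: since every eigenvalue of $\rho$ lies in $[0,1]$ we have $\rho \preceq 2^{S} I$ as operators, and the probability of any fixed output event is a positive linear functional of the initial state, so $p=\mathrm{Tr}(E\rho)\le 2^{S}\,\mathrm{Tr}(E\,I)=2^{S}p'$. The extra factor of $2^{S}$ in the paper's statement is simply slack and is harmless for every application in the paper (it only affects the choice of the constant $\gamma$ in \cref{cor-quantum-sort-outputs} and the generic quantum bounds).

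One small point of presentation: in the paper's model the circuit $\calC$ may make oracle queries and perform intermediate measurements, so it is not literally a single unitary, and writing $p_k=\mathrm{Tr}\bigl(\Pi_{\mathcal O}\,\calC(\,\cdot\,)\calC^{\dagger}\bigr)$ is informal. Your argument does not actually need that form, though: all you use is that $\rho\mapsto \Pr[\mathcal O\mid \text{initial state }\rho]$ is linear and equals $\mathrm{Tr}(E\rho)$ for some effect operator $0\preceq E\preceq \mathbb{I}_{2^S}$, which holds for any CPTP channel followed by any POVM (and for any fixed oracle input, with linearity then preserved under averaging over inputs). Stating it that way removes the only cosmetic wrinkle.
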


This allows us to bound the overall progress made by any short quantum circuit.

\begin{restatable}{lemma}{qsortout}\label{cor-quantum-sort-outputs}
There is a constant $\beta > 0$ such that, for any fixed set of $S\le\beta^2 n$ ranks that are greater
than $n/2$, the probability that any quantum circuit $\calC$ with at most $\beta \sqrt{Sn}$ queries and $S$ qubits of input-dependent initial state correctly produces 
the outputs for these $S$ ranks is at most $1/e$.
\end{restatable}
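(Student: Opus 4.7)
The plan is to combine Aaronson's mixed-state erasure bound (\cref{prop-quant-union}) with \cref{lem-qcircuit-fixed-outputs-bound} applied at $k=S$, tuning the direct-product constant $\gamma$ so that the exponential decay in $k$ comfortably absorbs the $2^{2S}$ penalty incurred when we discard the input-dependent initial state.

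First, suppose $\calC$ is a quantum circuit with at most $\beta\sqrt{Sn}$ queries, $S$ qubits of input-dependent initial state $\rho$, and success probability $p$ on the given rank set $R\subseteq\{n/2+1,\dots,n\}$ of size $S$. Applying \cref{prop-quant-union} to the register carrying $\rho$, I obtain a circuit $\calC'$ that is identical to $\calC$ except that the register is initialized to the $S$-qubit maximally mixed state $I/2^S$, and which produces the same $S$ outputs with probability at least $p/2^{2S}$. Since $I/2^S$ can be prepared without any reference to the input (e.g., by creating $S$ Bell pairs from $|0\rangle^{\otimes 2S}$ and tracing out one half), $\calC'$ may be regarded as a quantum circuit with \emph{no} input-dependent initial state, making the same number of queries as $\calC$.

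Next, I would apply \cref{lem-qcircuit-fixed-outputs-bound} to $\calC'$ with $k=S$ and the same rank set $R$, giving
\begin{displaymath}
p/2^{2S} \;\leq\; e^{(1-\gamma)S-1},
\end{displaymath}
and hence $p \leq 2^{2S}\,e^{(1-\gamma)S-1} = e^{(1+2\ln 2 - \gamma)S - 1}$. Choosing the constant $\gamma$ in the direct product theorem \cref{thm-q-kthresh-dp} to be any fixed value $\gamma \geq 1+2\ln 2$ makes the exponent at most $-1$ for every $S\geq 1$, yielding $p \leq e^{-1}$ as required. This choice fixes $\alpha$ in \cref{thm-q-kthresh-dp} and thereby fixes the $\beta<1$ of \cref{lem-qcircuit-fixed-outputs-bound}, which we adopt as the constant of the present lemma; the query bound $\beta\sqrt{Sn}$ and size bound $S\leq\beta^2 n$ then transfer directly.

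The only real subtlety is the factor-of-$2^{2S}$ blow-up from \cref{prop-quant-union}: it is crucial that the probability bound of \cref{lem-qcircuit-fixed-outputs-bound} decays like $e^{-\Theta(\gamma)\,S}$ with a tunable $\gamma$, because a fixed constant-rate decay would be washed out by the $4^S$ loss. Fortunately the direct product theorem gives us $\gamma$ for free at the cost of a smaller $\beta$, so there is no real obstacle beyond verifying the arithmetic above.
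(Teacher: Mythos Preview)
Your proposal is correct and follows essentially the same approach as the paper: combine \cref{prop-quant-union} with \cref{lem-qcircuit-fixed-outputs-bound} at $k=S$, and choose $\gamma$ so that the $2^{2S}$ penalty is absorbed. Your threshold $\gamma \ge 1+2\ln 2$ is exactly the paper's choice $\gamma = 1+\ln 4$, and the resulting $\beta$ is the same constant.
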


\begin{proof}
Choose $\beta$ as the constant when $\gamma$ is $1+\ln(4)$ in \cref{lem-qcircuit-fixed-outputs-bound}.
 Applying \cref{prop-quant-union} to the bound in \cref{lem-qcircuit-fixed-outputs-bound} gives us that a quantum circuit with $S$ qubits of input-dependent state can produce a fixed set of  $k\le \beta^2 n$ outputs larger than median with a probability at most $2^{2S} e^{(1-\gamma)k-1}$. 
Since $\gamma = 1+\ln(4)$ setting $k=S$ gives that this probability is $\leq 1/e$.
\end{proof}


\begin{theorem}\label{thm-quantum-sort-lb}
When $n$ is sufficiently large, any quantum circuit $\calC$ for sorting a list of length $n$ with success probability at least $1/e$ and at most $T$ layers that produces its sorted outputs in any fixed time order requires cumulative memory that is $\Omega(n^3/T)$.
\end{theorem}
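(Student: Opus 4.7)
The plan is to partition the time axis $[1,T]$ into blocks $B_1,\ldots,B_\ell$ of variable lengths driven by the space profile of $\calC$, apply \cref{cor-quantum-sort-outputs} blockwise to bound the number of high-rank outputs produced per block, and combine these bounds via convexity. Because outputs of $\calC$ appear at predetermined layers, the number $K_i$ of ranks greater than $n/2$ assigned to $B_i$ is determined by the partition and satisfies $\sum_i K_i = n/2$. Since $\calC$ succeeds overall with probability at least $1/e$, the event that $B_i$ correctly produces all of its assigned high-rank outputs occurs with probability at least $1/e$ as well. I would place each block boundary at a layer of locally minimum space, so that, viewed as a sub-circuit, $B_i$ inherits at most $S_i := \min_{t \in B_i} S_t$ qubits of input-dependent initial state.

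For each block I would combine \cref{lem-qcircuit-fixed-outputs-bound} and \cref{prop-quant-union} exactly as in the proof of \cref{cor-quantum-sort-outputs}: with $\gamma$ chosen large enough to absorb the $4^{S_i}$ slack from \cref{prop-quant-union}, the probability that $B_i$ produces its $K_i$ assigned ranks is at most $4^{S_i} e^{(1-\gamma) K_i - 1}$ whenever $|B_i| \leq \beta\sqrt{K_i n}$ and $K_i \leq \beta^2 n$. Requiring this upper bound to be at least $1/e$ yields the dichotomy $S_i \geq c' K_i$ or $|B_i| > \beta\sqrt{K_i n}$; in either case $K_i = O\!\left(\max(S_i, |B_i|^2/n)\right)$. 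Blocks with $K_i > \beta^2 n$ are first split into $O(1)$ sub-blocks so the same bound applies uniformly.

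With the per-block bound in hand, I would choose block lengths to satisfy $|B_i| \asymp \beta\sqrt{S_i n}$, which equalizes the two terms in the max and yields $K_i = O(S_i)$. Combining $\sum_i K_i \geq n/2$ with this gives $\sum_i S_i = \Omega(n)$, while the constraint $\sum_i |B_i| = T$ forces $\sum_i \sqrt{S_i} = O(T/\sqrt{n})$. Since every layer of $B_i$ uses at least $S_i$ qubits of space, $CM(\calC) \geq \sum_i S_i |B_i| = \Theta\!\left(\sqrt{n} \sum_i S_i^{3/2}\right)$, and Cauchy--Schwarz in the form $\left(\sum_i S_i\right)^2 \leq \left(\sum_i \sqrt{S_i}\right)\left(\sum_i S_i^{3/2}\right)$ delivers $\sum_i S_i^{3/2} = \Omega(n^{5/2}/T)$, whence $CM(\calC) = \Omega(n^3/T)$.

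The main obstacle will be the coupled nature of the partition: the ideal length $\beta\sqrt{S_i n}$ depends on $S_i$, which is itself the minimum space within the block being built. I would resolve this by walking through the layers greedily, extending the current block until its length matches $\beta\sqrt{S_i^{\text{run}} n}$ for the running minimum $S_i^{\text{run}}$ of the space seen so far, and invoking the general loss and convexity machinery of \cref{lem-loss} and \cref{lem-concave} to absorb any mismatch between the greedy and idealized partitions into constant factors. Secondary technicalities (blocks with $K_i > \beta^2 n$ or with $S_i$ so small that the ideal block length would be below $1$) would be handled by the same subdivision or merging steps, at negligible cost to the cumulative memory estimate.
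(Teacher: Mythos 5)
Your overall strategy is the same as the paper's: partition the circuit into space-dependent blocks, apply \cref{cor-quantum-sort-outputs} blockwise, and combine via convexity. Your Cauchy--Schwarz step (writing $S_i = S_i^{1/4}\cdot S_i^{3/4}$ to get $\bigl(\sum_i S_i\bigr)^2 \le \bigl(\sum_i \sqrt{S_i}\bigr)\bigl(\sum_i S_i^{3/2}\bigr)$) is correct and a clean alternative to the paper's moment inequality \cref{lem-moments}; they deliver the same bound $\sum_i S_i^{3/2} = \Omega(n^{5/2}/T)$.

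The gap is in the block construction, which you correctly flag as ``the main obstacle'' but do not resolve. You assert that by placing block boundaries at layers of locally minimum space, the block $B_i$ ``inherits at most $S_i := \min_{t\in B_i}S_t$ qubits of input-dependent initial state.'' That is not true: a locally minimum layer is only minimal among its immediate neighbors; the global minimum $\min_{t\in B_i}S_t$ can lie strictly inside $B_i$ and be far smaller than the space at either boundary. The distinction is fatal, because the two quantities you need live on opposite sides of the block: \cref{prop-quant-union} and \cref{lem-qcircuit-fixed-outputs-bound} bound the block's output count in terms of the space at the \emph{start} of the block, while the cumulative-memory contribution $\geq S_i|B_i|$ uses the \emph{minimum} space inside it. If a block starts at a layer with $\Theta(n)$ qubits but dips to space $1$ in the interior, \cref{cor-quantum-sort-outputs} permits it to produce $\Theta(n)$ outputs while its cumulative-memory contribution is only $O(|B_i|)$, and the accounting $\sum_i K_i = O(\sum_i S_i)$ fails. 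The greedy walk you propose does not repair this, since stopping when the running length exceeds $\beta\sqrt{S_i^{\text{run}}n}$ places no constraint on the space at the boundary where the next block begins.

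The paper's proof handles precisely this difficulty by working backwards with the dyadic intervals $I(k,t)$: it picks the smallest $k^*$ such that $I(k^*,t_i)$ contains a layer with at most $4^{k^*}-1$ qubits, starts the block there, and observes that the sub-interval $I(k^*-1,t_i)\subseteq\calC_i$ then consists entirely of layers with at least $4^{k^*-1}$ qubits. Thus the block's initial-state size and the space used over an $\Omega(|\calC_i|)$-length stretch of the block differ by only a constant factor, so both the output bound $4^{k^*(t_i)}$ and the cumulative-memory contribution $\frac{\beta}{2}2^{3k^*(t_i)-3}\sqrt{n}$ are written in the same parameter $k^*(t_i)$. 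To make your proposal rigorous you would need to build in a comparable mechanism aligning the boundary space with the interior minimum (either this dyadic construction or the segment-based one used in \cref{thm:full-general-bound-classical} and \cref{cor-quant-generic}); as written, the proposal leaves this essential step unjustified.
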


\begin{proof}
We partition $\calC$ into blocks with large cumulative memory that can only produce a small number of outputs. We achieve this by starting at last unpartitioned layer and finding a suitably low space layer before it so that we can apply \cref{cor-quantum-sort-outputs} to upper bound the number of correct outputs that can be produced in that block with a success probability of at least $1/e$. Let $\beta$ be the constant from \cref{cor-quantum-sort-outputs} and $k^*(t)$ be the least non-negative integer value of $k$ such that the interval:
\begin{displaymath}I(k,t) = \left[t-\frac{\beta}{2} (2^{k+1}-1)\sqrt{n}, t - \frac{\beta}{2}(2^{k}-1)\sqrt{n}\right]\end{displaymath}
contains some $t'$ such that $S_{t'} \leq 4^k - 1$. We recursively define our blocks as follows. Let $\ell$ be the number of blocks generated by this method. The final block $\calC_\ell$ starts with the first layer $t_{\ell-1} \in I(k^*(T),T)$ where $S_{t_{\ell-1}} \leq 4^{k^*(T)}-1$ and ends with layer $t_{\ell} = T$. Let $t_{i}$ be the first layer of block $\calC_{i+1}$. Then the block $\calC_{i}$ starts with the first layer $t_{i-1} \in I(k^*(t_{i}),t_{i})$ where $S_{t_{i-1}} \leq 4^{k^*(t_{i})} - 1$ and ends with $t_{i}$. See \cref{fig-quantum-segments} for an illustration of our partitioning. Since $S_0 = 0$ we know that $k^*(t) \leq \log(T)$. Likewise since $S_t > 0$ when $t > 0$, for all $t > \frac{\beta}{2} \sqrt{n}$ we know that $0 < k^*(t) \leq \log(T)$.

\begin{figure}[t]
\centering
\includegraphics[scale=.8]{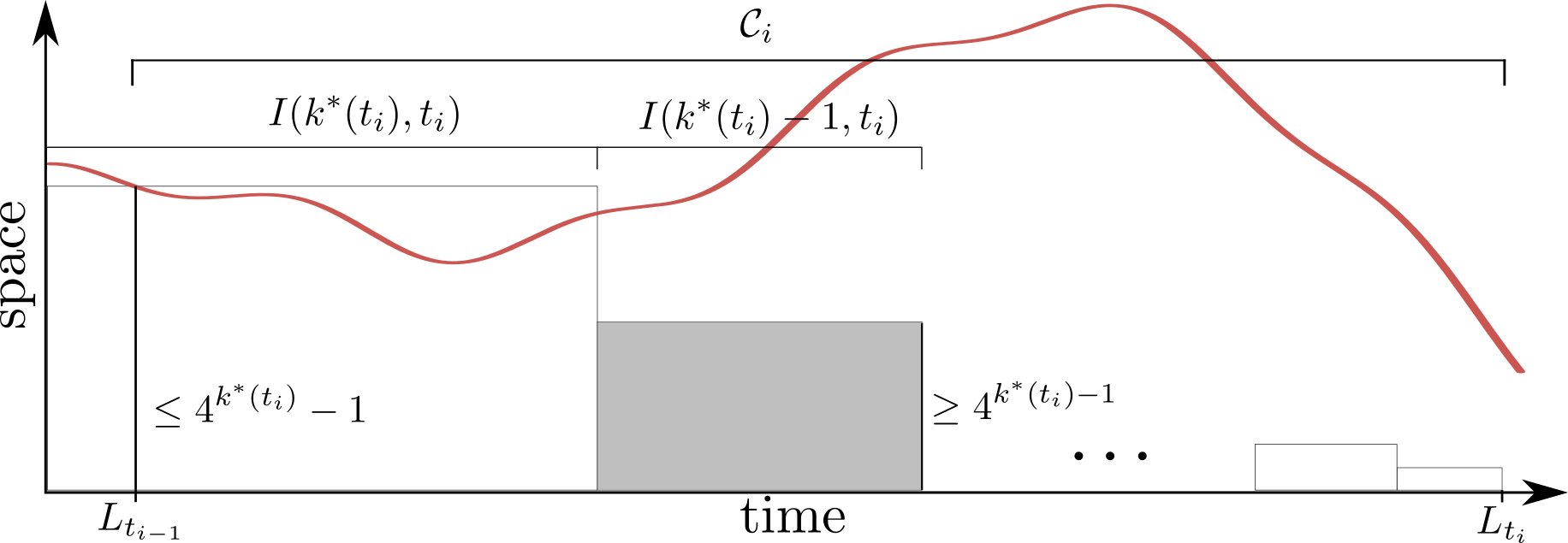}
\caption{How we define the block $\calC_i$ that ends at layer $L_{t_i}$. The red line is a plot of $\calC$'s space over time. The grey layers are the ones used to lower bound the cumulative memory complexity of $\calC_i$, as each of these layers uses at least $4^{k^*(t_i) - 1}$ qubits and the length of this interval is $\frac{\beta}{2} 2^{k^*(t_i)-1} \sqrt{n}$. }\label{fig-quantum-segments}
\end{figure}

Block $\calC_i$ starts with less than $4^{k^*(t_{i})}$ qubits of initial state and has length at most $\beta 2^{k^*(t_i)} \sqrt{ n}$; so by \cref{cor-quantum-sort-outputs}, if
$4^{k^*(t_i)}\le \beta^2 n$, the block $\calC_i$ can output at most $4^{k^*(t_i)}$ inputs with failure probability at most $1/e$. Additionally $\calC_i$ has at least $\frac{\beta}{2} 2^{k^*(t_i) -1}\sqrt{n}$ layers so
\begin{equation}
    \sum_{i=1}^\ell \frac{\beta}{4}2^{k^*(t_i)}\sqrt{n} \le T
\end{equation}
and each of these layers has at least $4^{k^*(t_i) - 1}$ qubits\footnote{This may not hold for $\calC_1$ with length less than $\frac{\beta}{2} \sqrt{N}$, but \cref{lem-qcircuit-fixed-outputs-bound} gives us that this number of layers is insufficient to find a fixed rank input with probability at least $1/e$. Thus we can omit such a block from our analysis.}, so
the cumulative memory of $\calC_i$ is at least $\frac{\beta}{2} 2^{3k^*(t_i) - 3}\sqrt{n}$ so
\begin{equation}
    CM(\calC) \geq \sum_{i=1}^\ell \frac{\beta}{2} 2^{3k^*(t_i) -3} \sqrt{n}.
\end{equation}
We now have two possibilities:  If we have some $i$ such that
$4^{k^*(t_i)}> \beta^2 n$, the cumulative memory of
$\calC_i$ alone is at least $\beta^4 n^2/16$ which is $\Omega(n^2)$
and hence $\calC$ has cumulatively memory $\Omega(n^3/T)$ since $T\ge n$.
Otherwise, since we require that the algorithm is correct with
probability at least $1/e$, each block $\calC_i$ can produce at most
$4^{k^*(t_i)}$ outputs.
Since our circuit must output all $n/2$ elements larger than the median, we know $\sum_{i=1}^\ell 4^{k^*(t_i)} \geq n/2$. For convenience we define $w_i = 2^{k^*(t_i)}$ which allows us to express the constraints as
\begin{equation}
\label{eqn-quant-cm-bound}
CM(\calC) \geq  \frac{\beta}{16} \sqrt{n} \sum_{i=1}^\ell w_i^{3}
\hfil\textrm{ and }\hfil 
    \frac{\beta}{4}\sqrt{n}\sum_{i=1}^\ell  w_i\le T
    \hfil\textrm{ and }\hfil
    \sum_{i=1}^\ell w_i^2 \geq n/2.
\end{equation}
 Minimizing $\sum_{i=1}^\ell w_i^3$ is a non-convex optimization problem and
 can instead be solved using
 \begin{equation}
     \textrm{Minimize}\quad \sum_{i=1}^\ell x_i^3
     \hfil\textrm{ subject to }\hfil
     \sum_{i=1}^\ell x_i^2\ge \xi
     \hfil\textrm{ and }
     \sum_{i=1}^\ell x_i\le \xi
     \hfil\textrm{ and }\hfil\forall i, x_i\ge 0,
 \end{equation}
 for 
  $x_i = \displaystyle{\frac{ 8 T}{ \beta n^{3/2}} w_i}$ and $\xi = \displaystyle{\frac{32T^2}{\beta^2 n^2}}$.
\cref{lem-moments} from Appendix C shows that for non-negative $x_i$ with $\sum x_i \leq \sum x_i^2$, we have $\sum x_i^2 \leq \sum x_i^3$.
Thus \begin{math}\sum x_i^3 \geq \xi\end{math} and applying the variable substitution gives us:
\begin{math}\displaystyle{\sum_{i=1}^\ell w_i^{3} \geq \frac{\beta n^{5/2}}{16T}}.\end{math}
Plugging this into \cref{eqn-quant-cm-bound} gives us the bound:
\begin{math}\displaystyle{ CM(\calC) \geq \frac{\beta^2 n^3}{256 T}}\end{math}
and hence the cumulative memory of $\calC$ is $\Omega(n^3/T)$.
\end{proof}

In \cref{sec-quant-sort-output-bounds} we also show how we can change the length of the blocks to generalize the above proof to arbitrary success probabilities.
\section{General methods for proving cumulative memory lower bounds}\label{sec-generic}

Our method involves adapting techniques previously used to prove tradeoff lower bounds on worst-case time and worst-case space. 
We show that the same properties that yield lower bounds on the product of time and space in the worst case can also be used to produce
nearly identical lower bounds on cumulative memory.
To do so, we first revisit the standard approach to
such time-space tradeoff lower bounds.

\subsection*{The standard method for time-space tradeoff lower bounds for multi-output functions}

Consider a multi-output function $f$ on $\domain^n$
where the output $f(x)$ is either unordered (the output is simply a set of
elements from $\range$) or ordered (the output is a vector of elements from $\range$).  
Then $|f(x)|$ is either the size of the set 
or the length of the vector of elements.
The standard method for obtaining an ordinary time-space tradeoff lower bounds for multi-output functions  on $\domain$-way branching programs is the following:

\subparagraph*{The part that depends on $f$:}

Choose a suitable probability distribution $\mu$ on $\domain^n$, often simply the uniform distribution on $\domain^n$ and then:
\begin{description}
\item[(A)] Prove that  $\Pr_{x\sim \mu}[|f(x)|\ge m]\ge \alpha$.
\item[(B)] Prove that for all $k\le m'$ and
any branching program $B$ of height $\le h'(k,n)$,
the probability for $x\sim \mu$ that $B$ produces at least $k$ correct output values of $f$ on input $x$ is at most $C\cdot K^{-k}$ for some $m'$, $h'$, $K=K(R,n)$, and constant $C$ independent of $n$.
\end{description}
Observe that under any distribution $\mu$, a branching program with ordered outputs that makes no queries can produce $k$ outputs that are all correct with probability at least $|\range|^{-k}$, so the bound in (B) shows that, roughly, up to the difference between $K$ and $|\range|$ there is not much gained by using a branching program of
height $h$.  

\subparagraph*{The generic completion:}
In the following outline we omit integer rounding for readability.
\begin{itemize}
\item Let $S'=S+\log_2 T$ and suppose that 
\begin{equation}
\label{eqn-big-S}
S'\le   m'\log_2 K - \log_2(2C/\alpha).
\end{equation}
\item 
Let $k=[S'+\log_2 (2C/\alpha)]/\log_2 K$, which is at most $m'$ by hypothesis on $S'$, and define $h(S',n)=h'(k,n)$.
\item Divide time $T$ into $\ell=T/h$ blocks of length $h=h(S',n)$.  
\item 
The original branching program can be split into at most $T\cdot 2^S=2^{S'}$ sub-branching programs of height $\le h$,
each beginning at a boundary node between layers. 
By Property (B) and a union bound, for $x\sim \mu$ the probability that at least one of these $\le 2^{S'}$ sub-branching programs of height at most $h$ produces $k$ correct outputs on input $x$ is at most
\begin{math}2^{S'}\cdot C\cdot K^{-k}\le \alpha/2\end{math}
by our choice of $k$.
\item Under distribution $\mu$, by (A), with probability at least $\alpha$,
an input $x\sim\mu$ has some block of time where at least $m/\ell=m\cdot h(S',n)/T$ outputs of $f$ must be produced on input $x$.
\item If $m \cdot h(S',n)/T\le k$, this can occur for at most an $\alpha/2$ fraction of inputs under $\mu$.
Therefore we have \begin{math}m \cdot h(S',n)/T> k=[S'+\log_2 (2C/\alpha)]/\log_2 K\end{math} and hence since
$h(S',n)\ge h(S,n)$, combining with
\cref{eqn-big-S}, we have
\begin{displaymath}T\cdot (S+\log_2 T)=T\cdot S' \ge \min \left(m\  h(S,n),\ m'\  n'\right)\  \log_2 K-\log_2(C/\alpha)\cdot T\end{displaymath}
where $n'\le n$ is the decision tree complexity of $f$ and
hence a lower bound on $T$.
\end{itemize}

\begin{remark}
Though it will not impact our argument, for many instances of the above outline, the proof of Property (B) is shown for
a decision tree of the same height by proving an analog for the conditional probability along each path in the decision tree separately; this will apply to the tree as a whole since the paths are followed by disjoint inputs, so
Property (B) follows from the alternative property below:
\begin{description}
\item[(B')]  For any partial assignment $\tau$ of $k\le m'$ output values over $\range$ and any restriction (i.e., partial assignment) $\pi$ of $h'(k,n)$ coordinates within $\domain^n$,
\begin{displaymath}\Pr_{x\sim \mu}[f(x) \mbox{ is consistent with }\tau\mid x\mbox{ is consistent with }\pi]\le C\cdot K^{-k}.\end{displaymath}
\end{description}
Observe that Property (B') is only a slightly more general version of Property (*) from the introduction where $C=1$, $m'$ is arbitrary, and $h'$ is used instead of $h$.
\end{remark}

\begin{remark}
The above method still gives lower bounds for many multi-output functions $g:\domain^N\rightarrow \range^M$ that have individual output values that are easy to compute or large portions of the input space on which they are easy to compute.   The bounds follow by applying the method to some
subfunction $f$ of $g$ given by $f(x)=\Pi_O(g(x,\pi))$ where $\pi$ is 
a partial assignment to the input coordinates and 
$\Pi_O$ is a projection onto a subset $O$ of output coordinates.
In the subsequent discussions we ignore this issue, but the idea can
be applied to all of our lower bound methods.
\end{remark}

\subsection*{A general extension to cumulative memory bounds}

To give a feel for the basic ideas of the method, we first 
show this for a simple case.
Observe that, other than the separate bound on time, the lower bound on cumulative memory usage we prove in this case is asymptotically identical to the bound achieved for the product of time and worst-case space using the standard outline. 

\begin{theorem}\label{thm-simple-general-bound-classical}
Let $c>0$.  Suppose that properties (A) and (B) apply for $h'(k,n)=h(n)$, $m'=m$, and $\alpha=C=1$.
If
\begin{math}T \log_2 T \le \frac{m\  h(n)\  \log_2 K}{6 (c+1) }\end{math}
then the cumulative memory used in computing $f:\domain^n\rightarrow\range^m$ in time $T$ with success
probability at least $T^{-c}$ is at least \begin{math}\frac{1}{6}\, m\  h(n)   \log_2 K.\end{math}
\end{theorem}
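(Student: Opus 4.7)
The plan is to mirror the cumulative-memory argument from Theorem \ref{thm-classical-sort-cm-bound}, but abstracted to rely only on the hypothesized Properties (A) and (B). First, by Yao's lemma it suffices to prove the bound for a deterministic $\domain$-way branching program $P$ of length $T$ that computes $f$ correctly with probability at least $T^{-c}$ under the distribution $\mu$ from (A), since the success probability $T^{-c}$ transfers between randomized and worst-case deterministic on $\mu$.

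Next I would partition the layers $\{0,1,\ldots,T\}$ into $\ell=\lceil 2T/h(n)\rceil$ consecutive intervals $I_1,\ldots,I_\ell$ each of length $H=\lfloor h(n)/2\rfloor$ (the first possibly shorter), set $t_1=0$, $t_{\ell+1}=T$, and for $i\ge 2$ let $t_i\in I_i$ be a layer of minimum width, defining $S_i=\log_2|L_{t_i}|$ and $S_1=0$. The block $B_i$ then consists of the layers from $t_i$ to $t_{i+1}$ and has height at most $2H\le h(n)$, while, exactly as in the sorting proof, $CM(P)\ge H\sum_{i=1}^\ell S_i$ because every layer in $I_i$ for $i\ge 2$ has at least $2^{S_i}$ nodes.

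The key step is to bound outputs per block. For each $i$, set $k_i=\min\bigl(m,\lceil(S_i+(c+1)\log_2(2T))/\log_2 K\rceil\bigr)$. Viewing $B_i$ as the union of at most $2^{S_i}$ sub-branching programs of height $\le h(n)$ (one per node of $L_{t_i}$), Property (B) together with a union bound gives that the probability (over $x\sim\mu$) that $B_i$ produces at least $k_i$ correct outputs of $f$ is at most $2^{S_i}K^{-k_i}\le (2T)^{-(c+1)}$. Summing over the $\ell\le T$ blocks, with probability at least $1-T^{-c}/2$ every block produces strictly fewer than $k_i$ correct outputs, so the total number of correct outputs is at most $\sum_i(k_i-1)$. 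Since $P$ succeeds with probability $\ge T^{-c}$, the two events co-occur on some input, forcing $m\le \sum_i(k_i-1)$, i.e., $\sum_i k_i\ge m+\ell$.

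Substituting the definition of $k_i$ (and handling the cap separately) yields
\begin{equation*}
m\log_2 K\ \le\ \sum_{i=1}^\ell S_i\ +\ \ell(c+1)\log_2(2T)\ \le\ \frac{CM(P)}{H}\ +\ \frac{2T(c+1)\log_2(2T)}{h(n)},
\end{equation*}
and multiplying through by $H=h(n)/2$ and using the hypothesis $T\log_2 T\le m\,h(n)\log_2 K/(6(c+1))$ to absorb the $T$-term into $m\,h(n)\log_2 K/3$ produces the target bound $CM(P)\ge m\,h(n)\log_2 K/6$. The one subtlety, which I expect to be the main technical obstacle, is handling the cap $k_i\le m$: if the unclamped value exceeded $m$ for some $i$, then $S_i$ must be $\Omega(m\log_2 K)$, in which case the contribution $H\cdot S_i$ of that single block already dominates the desired bound, so one may assume the cap is inactive. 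The rest of the complication is purely in tracking ceilings and the factor $\log_2(2T)$ versus $\log_2 T$, which are absorbed by the slack in the constant $1/6$.
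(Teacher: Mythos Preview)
Your proposal is correct and follows essentially the same argument as the paper: the same partitioning into $\ell=\lceil T/H\rceil$ segments with $H=\lfloor h(n)/2\rfloor$, the same choice of $t_i$ as the minimum-width layer in each segment, the same targets $k_i$, the same union-bound probability calculation, and the same final arithmetic. The only noticeable difference is your treatment of the cap $k_i\le m$: the paper simply leaves $k_i$ uncapped and observes that if some $k_i>m$ then $\sum_i(k_i-1)\ge m$ holds trivially (one term alone is $\ge m$), so the inequality $\sum_i(S_i+(c+1)\log_2 T)\ge m\log_2 K$ follows immediately without a separate case---slightly cleaner than your argument that $H\cdot S_i$ alone dominates, which needs an extra check that $(c+1)\log_2(2T)$ is small relative to $m\log_2 K$.
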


\begin{proof}
Fix a deterministic branching program $P$ of length $T$ computing $f$.
Rather than choosing fixed blocks of height $h=h(n)$, layers of nodes at a fixed distance from each other, and a fixed target of $k$ outputs per block, we choose the block boundaries depending on the properties of $P$ and the target $k$ depending on the property of
the boundary layer chosen.

Let $H=\lfloor h(n)/2\rfloor$.  
We break $P$ into $\ell=\lceil T/H\rceil $ time segments of length $H$ working backwards from step $T$ so that the first
segment may be shorter than the rest.    
We let $t_1=0$ and for $1< i\le \ell$ we let 
\begin{math}t_{i}=\arg\min\{\ |L_t|\ :\ T-(\ell-i+1)\cdot H\le t < T-(\ell-i)\cdot H\ \}\end{math}
be the time step with the fewest nodes among all time steps $t\in [T-(\ell-i+1)\cdot H,T-(\ell-i)\cdot H]$.

\begin{figure}[t]
    \centering
    \includegraphics[width=.8\textwidth]{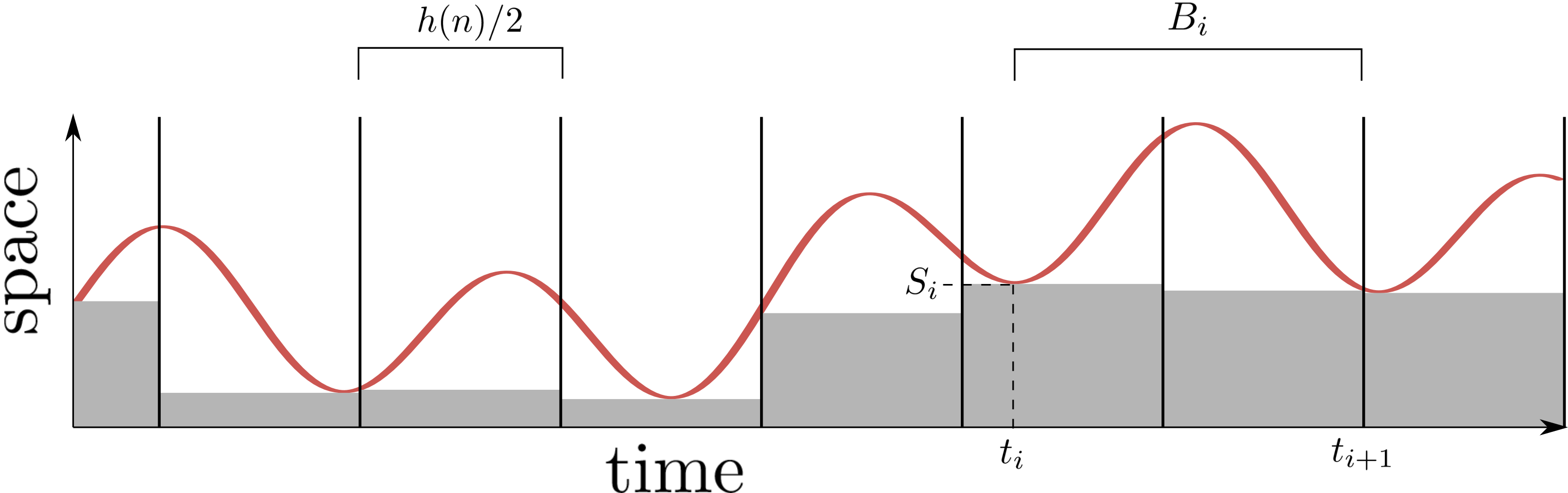}
    \caption{Our generic method for choosing blocks when $h(k,n) = h(n)$.
    The area marked in grey corresponds to the cumulative memory lower bound we obtain.
    }
    \label{fig-simple-generic-construction}
\end{figure}

The $i$-th time block of $P$ will be between times $t_i$ and $t_{i+1}$.
Observe that by construction $|t_{i+1}-t_i|\le h(n)$ so each block has length at most $h(n)$.
This construction is shown in \cref{fig-simple-generic-construction}.
Set $S_i=\log_2 |L_{t_i}|$ so that $L_{t_i}$ has at $2^{S_i}$ nodes.
By definition of each $t_i$, the cumulative memory used by
$P$, 
\begin{equation}
CM(P)\ge \sum_{i=1}^{\ell} S_i\cdot H.\label{cmc-formula}
\end{equation}  
(Note that since $S_1=0$, it does not matter that the first segment is shorter than the rest\footnote{This simplifies some calculations and is the prime reason for starting the time segment boundaries at $T$ rather than at 0.}.)

We now define the target $k_i$ for the number of output values produced in each time block to be the smallest integer such that
$K^{-k_i}\le 2^{-S_i}/T^{c+1}$.
That is,
\begin{displaymath}k_i=\lceil (S_i +(c+1)\log_2 T)/\log_2 K\rceil.\end{displaymath}
For $x\sim \mu$, for each $i\in [\ell]$ and each sub-branching program $B$ rooted at some
node in $L_{t_i}$ and extending until time $t_{i+1}$, by our choice
of $k_i$ and
Property (B), if $k_i\le m$, the probability that
$B$ produces at least $k_i$ correct outputs on input $x$ is at most
$2^{-S_i}/T^{c+1}$.
Therefore, by a union bound, for $x\sim \mu$ the probability that
$P$ produces at least $k_i$ correct outputs in the $i$-th time
block on input $x$ is at most
\begin{math}|L_{t_i}| \cdot 2^{-S_i}/T^{c+1}=1/T^{c+1}.\end{math}
Therefore, if each $k_i\le m$, the probability for $x\sim \mu$ that there is some $i$ such that $P$ produces at least $k_i$ correct outputs on input $x$
during the $i$-th block is at most $\ell/T^{c+1}< T^c$.   
Therefore, if each $k_i\le m$, the probability for $x\sim \mu$ that $P$ produces at most $\sum_{i=1}^\ell (k_i-1)$ correct outputs in total on input $x$ is $>1-1/T^c$.

If each $k_i\le m$, since $P$ must produce $m$ correct outputs on $x\in \domain^n$ with probability at least $1/T^c$, we must have $\sum_{i=1}^\ell (k_i-1)\ge m$. 
On the other hand, if some $k_i>m$ we have the same bound.
Using our definition of $k_i$ we have
\begin{math}\sum_{i=1}^\ell [ (S_i +(c+1)\log_2 T)]/\log_2 K]\ge m\end{math}
or 
\begin{math}\sum_{i=1}^\ell (S_i +(c+1)\log_2 T)\ge m\cdot \log_2 K.\end{math}
Plugging in the bound (\ref{cmc-formula}) on the cumulative
memory and the value of $\ell$, it implies that
\begin{math}CM(P)/H + (c+1) \lceil T/H\rceil\cdot \log_2 T\ge m\cdot \log_2 K\end{math}
or that
\begin{math}CM(P) + (c+1) T\log_2 T\ge \frac{1}{3}\  m\cdot h(n)\cdot \log_2 K,\end{math}
where the 3 on the right rather than a 2 allows us to remove
the ceiling. 
Therefore either
\begin{displaymath}T\log_2 T> \frac{m\cdot h(n)\cdot \log_2 K}{6(c+1)}\label{large-T}\hfil\textrm{ or }\hfil
CM(P)\ge \frac{1}{6}\  m\  h(n)\  \log_2 K.\qedhere\end{displaymath}
\end{proof}
 
In the general version of our theorem there
are a number of additional complications, most especially because the branching program
height limit $h(k,n)$ in Property (B) \emph{can depend on}  $k$, the target for the number of
outputs produced. 
This forces the lengths of the blocks and the
space used at the boundaries 
between blocks to depend on each other in a
quite delicate way.
In order to discuss the impact of that dependence and state our general theorem,
we need the following definition.

\begin{definition}
Given a non-decreasing function $p:\mathbb{R}\rightarrow\mathbb{R}$ with $p(1)=1$, we define $p^{-1}:\mathbb{R}\rightarrow\mathbb{R}\cup \{\infty\}$ by $p^{-1}(R)=\min\{j\mid p(j)\ge k\}$.
We also define the \emph{loss}, $\Loss_p$, of $p$ by
\begin{displaymath}\Loss_p(n)= \min_{1\le k\le p(n) } \frac{\sum_{j=1}^k p^{-1}(j)}{k\cdot p^{-1}(k)}.\end{displaymath}
\end{definition}

\begin{restatable}{lemma}{losslemma}
\label{lem-loss}
The following hold for every non-decreasing function $p:\mathbb{R}\rightarrow\mathbb{R}$ with $p(1)=1$:
\begin{enumerate}[label=(\alph*)]
\item $1/p(n)\le \Loss_p(n)\le 1$. 
\item If $p$ is a polynomial function $p(s)=s^{1/c}$
then $\Loss_p(n)>1/2^{c+1}$.
\item For any $c>1$, $\Loss_p(n)\ge\displaystyle \min_{1\le s\le n}\frac{p(s)-p(s/c)}{cp(s)}$.
\item We say that $p$ is \emph{nice} if it is differentiable and there is an integer $c>1$ such that for all $x$, $p'(cx)\ge p'(x)/c$.  If $p$ is nice then $\Loss_p(n)$ is $\Omega(1/\log_2 n)$.
This is tight for $p$ with $p(s)=1+\log_2 s$.
\end{enumerate}
\end{restatable}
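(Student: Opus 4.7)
The plan is to prove the four parts essentially independently, reserving part (c) as a lemma for part (d). Part (a) is immediate from monotonicity of $p^{-1}$: every $p^{-1}(j)\le p^{-1}(k)$ for $j\le k$, so the ratio is at most $1$; and since $p(1)=1$ forces $p^{-1}(j)\ge 1$ for $j\ge 1$, retaining only the $j=k$ term gives $\sum_{j=1}^k p^{-1}(j)\ge p^{-1}(k)$, so the ratio is at least $1/k\ge 1/p(n)$. For (b), when $p(s)=s^{1/c}$ we have $p^{-1}(j)=j^c$, and the standard Riemann-sum comparison $\sum_{j=1}^k j^c\ge \int_0^k x^c\,dx=k^{c+1}/(c+1)$ yields a ratio of at least $1/(c+1)$, which exceeds $1/2^{c+1}$ for all $c\ge 0$ (check at $c=0$, then compare growth rates).

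For (c), the key move is: given $k\in[1,p(n)]$, set $s:=p^{-1}(k)$, so $p(s)\ge k$. Every index $j\in(p(s/c),k]$ forces $p^{-1}(j)>s/c$ by monotonicity of $p$, and there are at least $k-p(s/c)$ such $j$, each contributing at least $s/c$ to the sum. Therefore $\sum_{j=1}^k p^{-1}(j)\ge (k-p(s/c))\,s/c$, and dividing by $k\cdot p^{-1}(k)=k\,s$ produces the quantity $(p(s)-p(s/c))/(c\,p(s))$ in the clean case $p(s)=k$. Since $s=p^{-1}(k)$ ranges over $[1,n]$ as $k$ ranges over $[1,p(n)]$, taking the minimum over $s$ on the right-hand side yields a valid lower bound on $\Loss_p(n)$. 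Rounding artifacts for non-continuous $p$ are handled with standard floor/ceiling bookkeeping and only change constants.

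For (d), I would combine (c) with a dyadic unpacking of the nice condition. Define $a_i:=p(s/c^i)-p(s/c^{i+1})=\int_{s/c^{i+1}}^{s/c^i}p'(y)\,dy$. The substitution $y=z/c^i$ rewrites this as $a_i=\int_{s/c}^{s}p'(z/c^i)\,c^{-i}\,dz$, and iterating the nice inequality $p'(cx)\ge p'(x)/c$ gives $p'(z/c^i)\le c^i p'(z)$, hence $a_i\le a_0$ for every $i\ge 0$. Telescoping yields $p(s)-p(s/c^{j+1})\le (j+1)\,a_0$; choosing $j=\lceil\log_c s\rceil$ forces $s/c^{j+1}\le 1$, so $p(s/c^{j+1})\le p(1)=1$, and therefore $p(s)-p(s/c)\ge (p(s)-1)/(1+\log_c n)$ for every $s\in[1,n]$. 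Plugging this into (c) gives a bound of $(1-1/p(s))/(c(1+\log_c n))$; when $p(s)\ge 2$ this is $\Omega(1/\log n)$, and for $k=1$ the ratio equals $1$ trivially, so $\Loss_p(n)$ is $\Omega(1/\log n)$ overall. Tightness for $p(s)=1+\log_2 s$ follows by direct calculation: $p^{-1}(j)=2^{j-1}$, $\sum_{j=1}^k 2^{j-1}=2^k-1$, so the ratio is $(2^k-1)/(k\,2^{k-1})=\Theta(1/k)$, attaining $\Theta(1/\log n)$ at $k=p(n)=1+\log_2 n$.

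The main obstacle will be making part (d) fully airtight: the change of variables in the nice step must use the inequality in exactly the right direction, and the boundary case $k=1$ (equivalently $p(s)<2$) must be peeled off by hand because the factor $(1-1/p(s))$ otherwise vanishes. A secondary nuisance is the rounding in part (c) when $p$ is not continuous, but this only affects lower-order constants and can be absorbed into the $\Omega$.
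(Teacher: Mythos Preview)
Your proposal is correct and follows essentially the same approach as the paper's proof: the same monotonicity argument for (a), the same counting of indices $j$ with $p^{-1}(j)\ge s/c$ for (c), and the same substitution-and-telescope argument exploiting the nice condition for (d), with the identical tightness computation for $p(s)=1+\log_2 s$.

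The only noteworthy differences are cosmetic. In (b) you use the integral comparison $\sum_{j=1}^k j^c\ge \int_0^k x^c\,dx=k^{c+1}/(c+1)$, which actually yields the sharper constant $1/(c+1)>1/2^{c+1}$; the paper instead keeps only the top half of the sum to get $(k/2)^{c+1}$ directly. In (d) your case split ($k=1$ versus $k\ge 2$, so $p(s)\ge 2$) is slightly cleaner than the paper's ($p(s)<2p(c)$ versus $p(s)\ge 2p(c)$), but both serve the same purpose of isolating the regime where $1-1/p(s)$ (respectively $1-p(c)/p(s)$) is bounded away from zero. Your acknowledged ``rounding artifacts'' in (c) when $p(s)\ne k$ are also present in the paper's proof (which writes an equality that should be an inequality); since $p$ is continuous in the application to (d), this does not affect the final result.
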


We prove these technical statements in \cref{sec-loss}.    
Here is our full general theorem.

\begin{theorem}\label{thm:full-general-bound-classical}
Let $c>0$.
Suppose that function $f$ defined on $\domain^n$ has properties (A) and (B) with $\alpha$ that is $1/n^{O(1)}$ and
$m'$ that is $\omega(\log_2 n)$.
For $s>0$, define $h(s,n)$ to be $h'(k,n)$ for $k=s/\log_2 K$. 
Suppose that
$h(s,n)= h_0(s)\, h_1(n)$ with $h_0(1)=1$ and $h_0$ is constant or a differentiable function such that
$s/h_0(s)$ is increasing and concave.
Define $S^*=S^*(T,n)$ by
\begin{displaymath}\frac{S^*}{h_0(S^*)}=\frac{m\  h_1(n)\   \log_2 K}{6T}.\end{displaymath}
\begin{enumerate}[label=(\alph*)]
\item
Either
\begin{math}\displaystyle{T \log_2 (2CT^{c+1}/\alpha) > \frac{1}{6}\, m\  h_1(n)\  \log_2 K},
\end{math}
which implies that $T$ is $\Omega(\frac{m \  h_1(n)\  \log K}{\log n}),$
or the cumulative memory used by a
randomized branching program in computing $f$ in time $T$ with error $\varepsilon\le \alpha(1-1/(2T^c))$ is at least \begin{displaymath}\frac{1}{6}\,\Loss_{h_0}(n\log_2 |\domain|)\cdot \min\left(m\  h(S^*(T,n),n),\ 3m'\  h'(m'/2,n)\right) \cdot \log_2 K.\end{displaymath}
\item Further any randomized
random-access machine computing $f$ in time $T$ with error
$\varepsilon \le \alpha(1-1/(2T^c))$
requires cumulative memory
\begin{displaymath}\Omega\left(\Loss_{h_0}(n\log_2 |\domain|)\cdot 
\min\left(m\ h(S^*(T,n),n),\ m'\ h'(m'/2,n)\right)\cdot \log_2 K\right).\end{displaymath}
\end{enumerate}
\end{theorem}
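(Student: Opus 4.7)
The plan is to extend the block-decomposition argument of \cref{thm-simple-general-bound-classical} to accommodate the space-dependence of $h_0$. I fix a deterministic branching program $P$ of length $T$ computing $f$ with success probability $\ge \alpha/(2T^c)$ under $\mu$ (combining Property (A) with the error hypothesis $\varepsilon \le \alpha(1-1/(2T^c))$; the randomized lower bound then follows by Yao's lemma). As in the simple case, I will partition $P$ into blocks $B_1,\ldots,B_\ell$, apply Property (B) inside each block to bound the number of correct outputs it can produce, union-bound over the $2^{S_i}$ starting nodes of $B_i$ and over all $\ell$ blocks, and combine with the requirement of $\ge m$ correct outputs to constrain the boundary profile $(S_i)$.

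The key novelty is that the allowable block length $h_0(S_i)\,h_1(n)$ now depends on the boundary space $S_i$, so fixed-length blocks no longer suffice. Proceeding backwards from $t_{\ell+1}=T$, for each block $B_i$ I choose the smallest nonnegative integer $s_i$ such that the window $W_i := [t_{i+1}-h_0(s_i)h_1(n)/2,\, t_{i+1}]$ contains some layer of width at most $2^{s_i}$, and take $t_i$ to be such a layer, setting $S_i := \log_2 |L_{t_i}| \le s_i$. Minimality of $s_i$ forces all layers in the corresponding window to have width $>2^{s_i-1}$ (except at $t_i$ itself), so $B_i$ contributes $\Omega(s_i\, h_0(s_i)\, h_1(n))$ to $CM(P)$; simultaneously $|B_i| \le h_0(s_i)\, h_1(n) = h'(s_i/\log_2 K,\, n)$, so Property (B) applied with target $k_i := \ceil{(S_i + (c+1)\log_2 T)/\log_2 K}$ upper-bounds the probability that any fixed sub-program rooted at a node of $L_{t_i}$ produces $\ge k_i$ correct outputs by $C\cdot K^{-k_i}\le C\cdot 2^{-S_i}/T^{c+1}$. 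Union-bounding over the $2^{S_i}$ starts of $B_i$ and over the $\ell \le T$ blocks shows that the total number of correct outputs exceeds $\sum_i(k_i-1)$ with probability at most $\alpha/(2T^c)$, so the success-probability assumption forces $\sum_i k_i \ge m$; the dichotomy hypothesis $T\log_2(2CT^{c+1}/\alpha) \le \tfrac{1}{6}m\,h_1(n)\log_2 K$ absorbs the $(c+1)\log_2 T$ overhead to yield $\sum_i S_i \ge \tfrac{1}{2}m\log_2 K$. If instead some $k_i$ exceeds $m'$, a single direct application of Property (B) with $k=m'/2$ inside that block yields the alternative $3m'\, h'(m'/2, n)\log_2 K$ term in the $\min(\cdot,\cdot)$ of the conclusion.

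The main obstacle is the resulting optimization: one must lower-bound
\begin{displaymath}
CM(P) \;\ge\; \tfrac{1}{2}\sum_{i=1}^\ell S_i\, h_0(S_i)\, h_1(n)
\end{displaymath}
subject to $\sum_i S_i \ge \tfrac{1}{2}m\log_2 K$ and $\sum_i h_0(S_i)\, h_1(n) \le T$. The balanced choice $S_i\equiv S^*$ recovers $CM \ge \tfrac{1}{2}m\, h(S^*, n)\, \log_2 K$ directly from the definition of $S^*$; the worry is that an adversarial space profile may concentrate the $S_i$ on small values to deflate $\sum_i S_i\, h_0(S_i)$. This is precisely what $\Loss_{h_0}$ captures: since the $S_i$ lie in an effectively discrete set of size $O(n\log_2|\domain|)$ (bounded by the decision tree complexity of $f$), the concavity of $s/h_0(s)$ together with the moment inequality supplied by \cref{append-optimization-lemma} shows that every feasible profile satisfies $\sum_i S_i\, h_0(S_i) \;\ge\; \Loss_{h_0}(n\log_2|\domain|)\cdot m\, h_0(S^*)\log_2 K$, giving the claimed $\Omega\bigl(\Loss_{h_0}(n\log_2|\domain|)\cdot m\, h(S^*, n)\, \log_2 K\bigr)$. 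Part (b) for random-access machines is then immediate from part (a): a RAM uses at least one $\Omega(\log T)$-bit cell per step, so the $T\log T$ overhead absorbed via the dichotomy hypothesis is in any case dominated by the RAM's own cumulative memory.
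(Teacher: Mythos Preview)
Your overall architecture is right—block decomposition, Property (B) per block, union bound, then an optimization over the boundary profile—but you have misplaced the $\Loss_{h_0}$ factor and your block construction does not quite deliver the per-block cumulative-memory bound you claim.

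\textbf{Where $\Loss_{h_0}$ actually enters.} In the paper's proof, $\Loss_{h_0}$ does \emph{not} come from the optimization. The optimization lemma (\cref{lem-concave}) takes the constraints $\sum_i S_i \ge \tfrac12 m\log_2 K$ and $\sum_i h_0(S_i)\le T/H$ and returns $\sum_i S_i h_0(S_i)\ge S^*\cdot T/H$ with no loss at all—this is exact. The $\Loss_{h_0}$ factor appears one step earlier, in the passage from $CM(P)$ to $\sum_i S_i h_0(S_i)$. The paper first cuts time into \emph{fixed} segments of length $H=\lfloor h_1(n)/2\rfloor$, records the minimum space $\sigma_j$ in each, and then groups consecutive segments into a block by taking the smallest $k$ with $h_0(\sigma_{\ell-k+1})<k$. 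The block then has $k$ segments; the minimality of $k$ forces $\sigma_{\ell-j+1}\ge h_0^{-1}(j)$ for $j<k$, so the block contributes at least $H\sum_{j=1}^{k} h_0^{-1}(j)$ to $CM(P)$, and $\Loss_{h_0}$ is precisely the worst-case ratio of this sum to the ``ideal'' $k\cdot h_0^{-1}(k)\approx h_0(S_i)\cdot S_i$. That is how the paper obtains $CM(P)\ge \Loss_{h_0}(S^{\max})\cdot H\cdot \sum_i S_i h_0(S_i)$ \emph{before} invoking \cref{lem-concave}. Your write-up asserts the stronger bound $CM(P)\ge \tfrac12\sum_i S_i h_0(S_i)h_1(n)$ without $\Loss_{h_0}$, and then tries to recover $\Loss_{h_0}$ from the optimization and from ``discreteness of the $S_i$''; neither of these is the source of the loss, and \cref{lem-concave} will not produce it.

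\textbf{The block construction.} Your rule ``smallest integer $s_i$ such that $W_i(s_i)=[t_{i+1}-h_0(s_i)h_1(n)/2,\,t_{i+1}]$ contains a layer of width $\le 2^{s_i}$'' only guarantees that the \emph{smaller} window $W_i(s_i-1)$ has all layers of width $>2^{s_i-1}$; it says nothing about $W_i(s_i)\setminus W_i(s_i-1)$. Hence the per-block CM you can justify is $(s_i-1)\cdot h_0(s_i-1)\cdot h_1(n)/2$, coming from $W_i(s_i-1)$. But $t_i$ may lie inside $W_i(s_i-1)$ (any layer there with space in $(s_i-1,s_i]$ qualifies), in which case $W_i(s_i-1)$ is not contained in $B_i=[t_i,t_{i+1}]$ and the windows $W_i(s_i-1)$ and $W_{i-1}(s_{i-1}-1)$ can overlap; summing the per-block contributions then double-counts. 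The paper's fixed-segment construction avoids this because the segments are disjoint by fiat and the CM lower bound for a block is literally $H\sum_{j\in\text{block}}\sigma_j$. (The exponentially-growing-shell variant used in \cref{sec-qsort} also avoids overlap, but it uses disjoint annuli $I(k,t)$ rather than nested windows, and the block boundary is placed in the outermost shell, strictly to the left of the shells that witness the CM contribution.)

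In short: keep your reduction and union-bound skeleton, but replace your window rule by the paper's two-stage decomposition (fixed segments of length $H$, then greedy grouping via the smallest $k$ with $h_0(\sigma_{\ell-k+1})<k$); derive $CM(P)\ge \Loss_{h_0}(S^{\max})\cdot H\cdot\sum_i S_i h_0(S_i)$ from the Claim in the paper's proof; and only then apply \cref{lem-concave} to the profile $(S_i)$ to get $\sum_i S_i h_0(S_i)\ge S^*\cdot T/H$.
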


Before we give the proof of the theorem, 
we note that by \cref{lem-loss}, in the case that $h_0$ is constant or $h_0(s)=s^\Delta$ for some constant $\Delta>0$, 
which together account for all existing applications we are aware of, the function $\Loss_{h_0}$ is
lower bounded by a constant.
In the latter case, $h_0$ is differentiable, has $h_0(s)=1$, and the function $s/h_0(s)=s^{1-\Delta}$ is increasing and concave so it satisfies the conditions of our theorem.  By using $\alpha=1$, $m'=m$, and $C=1$ with $h$ from Property (*) in place of $h'$ in Property (B'), \cref{thm:full-general-bound-classical} yields \cref{thm:generic-poly}.

More generally, the value $S^*$ in the statement
of this theorem is at least a constant factor times the value of $S$
used in the generic time-space tradeoff lower bound methodology.
Therefore, for example, the cumulative memory lower bound
derived for random-access machines via \cref{thm:full-general-bound-classical} is close to the 
lower bound on the product of time and worst-case space given by standard methods.

\begin{proof}[Proof of \cref{thm:full-general-bound-classical}]
We prove both (a) and (b) directly for
branching programs, which can model random-access machines, and will describe the small variation that occurs in the case that the
branching program in question comes from a random-access machine.
To prove these properties for randomized branching programs,
by Yao's Lemma~\cite{DBLP:conf/focs/Yao77} it suffices to prove
the properties for deterministic branching programs that have error at most $\varepsilon$
under distribution $\mu$.
Fix a (deterministic) branching program $P$ of length $T$ computing $f$ with error at most
$\varepsilon$ under distribution $\mu$. 
Without loss of generality, $P$ has maximum space usage at most $S^{max}=n\log_2 |\domain|$ space since there
are at most $|\domain^n|$ inputs.

Let $H=\lfloor h_1(n)/2\rfloor$.  
We break $P$ into $\ell=\lceil T/H\rceil $ time segments of length $H$ working backwards from step $T$ so that the first
segment may be shorter than the rest.    
We then choose a sequence of \emph{candidates} for the time steps in which to begin new blocks, as follows:
We let $\tau_1=0$ and for $1< i\le \ell$ we let 
\begin{displaymath}\tau_{i}=\arg\min\{\ |L_t|\ :\ T-(\ell-i+1)\cdot H\le t < T-(\ell-i)\cdot H\ \}\end{displaymath}
be the time step with the fewest nodes among all time steps $t\in [T-(\ell-i+1)\cdot H,T-(\ell-i)\cdot H]$.
Set $\sigma_i=\log_2 |L_{\tau_i}|$ so that $L_{\tau_i}$ has at $2^{\sigma_i}$ nodes.
This segment contributes at least $\sigma_i\cdot H$ to the cumulative memory bound of $P$.

To choose the beginning $t_{i^*}$ of the last time block\footnote{Since we are working backwards from
the end of the branching program and we do not know
how many segments are included in each block, we don't actually know this index until things stop with $t_1=0$}.
we find the smallest $k$ such that $h_0(\sigma_{\ell-k+1})< k$.
Such a $k$ must exist since $h_0$ is a non-decreasing non-negative function, $h_0(1)=1$ and $\sigma_1=0< 1$. 
We now observe that the length of the last block 
is at most $k\cdot H$ which by choice of $k$ is less than
$h(\sigma_{\ell-k+1},n)$ and hence we have 
satisfied the requirements for Property (B) to apply at
each starting node of the last time block.

By our choice of each $\tau_i$, the cumulative memory used in the last $k$ segments is at least
\begin{math}\sum_{j=1}^{k}\sigma_{\ell+1-j} \cdot H.\end{math}
Further, since $k$ was chosen as smallest with the above
property,
we know that for every $j\in [k-1]$ we have \begin{math}h_0(\sigma_{\ell-j+1})\ge j\end{math}
Hence we have $\sigma_{\ell-j+1}\ge h_0^{-1}(j)$ and we
get a cumulative memory bound for the last $k$ segments of at least
\begin{equation}
(\sigma_{\ell-k+1}+\sum_{j=1}^{k-1} h_0^{-1}(j))\cdot H.\label{cm-lb-per-segment}
\end{equation}

\begin{claim}
$\sigma_{\ell-k+1}+\sum_{j=1}^{k-1} h_0^{-1}(j)\ge \Loss_{h_0}(S^{max})\cdot \sigma_{\ell-k+1}\cdot k$.
\end{claim}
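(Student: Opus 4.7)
The plan is to invoke the defining inequality of $\Loss_{h_0}(S^{max})$ at the index $K=k$, leveraging the single structural fact that, by minimality in the choice of $k$, we have $h_0(\sigma)<k$ with $\sigma:=\sigma_{\ell-k+1}$. Since $h_0^{-1}(R)=\min\{j:h_0(j)\ge R\}$ and $h_0$ is non-decreasing, this translates immediately to $h_0^{-1}(k)>\sigma$, and that is the only place where $\sigma$ will be compared to the $h_0^{-1}(j)$ terms.

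In the main case $k\le h_0(S^{max})$, the loss inequality at $K=k$ reads
\begin{displaymath}
\sum_{j=1}^{k} h_0^{-1}(j)\;\ge\;\Loss_{h_0}(S^{max})\cdot k\cdot h_0^{-1}(k),
\end{displaymath}
so isolating the $j=k$ term yields $\sum_{j=1}^{k-1}h_0^{-1}(j)\ge(\Loss_{h_0}(S^{max})\,k-1)\,h_0^{-1}(k)$. From here I would split on the sign of $\Loss_{h_0}(S^{max})\,k-1$. If $\Loss_{h_0}(S^{max})\,k\ge 1$, the right-hand side is non-negative, so I may replace $h_0^{-1}(k)$ by $\sigma$ to obtain $(\Loss_{h_0}(S^{max})\,k-1)\sigma$; adding $\sigma$ to both sides then gives exactly the claimed bound. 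If $\Loss_{h_0}(S^{max})\,k<1$, the claim is automatic, since $\Loss_{h_0}(S^{max})\,k\,\sigma<\sigma\le\sigma+\sum_{j=1}^{k-1}h_0^{-1}(j)$ by non-negativity of the sum.

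The main obstacle is the edge case $k>h_0(S^{max})$, where the loss inequality is not directly instantiable at $K=k$. The idea to handle it is that for every integer $j>h_0(S^{max})$ we have $h_0^{-1}(j)>S^{max}\ge\sigma$, so every tail term beyond $K^{*}:=\lfloor h_0(S^{max})\rfloor$ already exceeds $\sigma$. I would therefore apply the loss definition at $K^{*}$ to bound the initial portion $\sum_{j=1}^{K^{*}}h_0^{-1}(j)\ge\Loss_{h_0}(S^{max})\,K^{*}\,h_0^{-1}(K^{*})$, bound each of the remaining $k-1-K^{*}$ terms below by $\sigma$, and then reassemble through the same two-case split as above. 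The delicacy is purely bookkeeping: one has to check that the slack coming from the fact that all tail terms are at least $\sigma$ exactly compensates for the gap between $h_0^{-1}(K^{*})$ and $h_0^{-1}(k)$ in the loss bound, but since $k$ is now beyond the range of the loss definition this is forced and routine.
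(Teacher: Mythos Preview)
Your main-case argument is correct and is essentially the paper's proof. The paper phrases it more compactly: it asserts that it suffices to prove the inequality with $\sigma:=\sigma_{\ell-k+1}$ replaced on both sides by the larger quantity $h_0^{-1}(k)$, after which the statement is literally the defining inequality of $\Loss_{h_0}(S^{max})$ at index $k$. Your explicit split on the sign of $\Loss_{h_0}(S^{max})\,k-1$ is exactly what justifies that ``replace by a larger quantity'' step (the replacement is only monotone in the needed direction when $\Loss_{h_0}(S^{max})\,k\ge 1$; the other case is trivial, as you note), so on this point you are more careful than the paper.

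Your edge case $k>h_0(S^{max})$ is not treated in the paper either; it invokes the loss inequality at $k$ as though $k$ were always in range. Your sketch for this case does not close, however. Since one always has $k\le\lfloor h_0(S^{max})\rfloor+1$, in the edge case $K^{*}=k-1$ and the tail $\sum_{j>K^{*}}^{k-1}$ is empty, so after applying the loss bound at $K^{*}$ the remaining task reduces (using $\Loss_{h_0}(S^{max})\le 1$) to showing $h_0^{-1}(k-1)\ge\sigma$. But from $h_0(\sigma)<k$ you cannot conclude $h_0(\sigma)<k-1$; if $k-1\le h_0(\sigma)<k$ then $h_0^{-1}(k-1)\le\sigma$, which is the wrong direction. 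Moreover, since $\Loss_{h_0}(S^{max})\ge 1/h_0(S^{max})$ by \cref{lem-loss}(a) and $k>h_0(S^{max})$, you are never in the trivial sub-case here, so the two-case split does not help. This is not routine bookkeeping; it is a genuine (if narrow) gap that the paper also leaves open.
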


\begin{proof}[Proof of Claim]
Observe that it suffices to prove the claim when we
replace
$\sigma_{\ell-k+1}$, which appears on both sides, by a larger quantity.
In particular, we show how to prove the claim
with $h_0^{-1}(k)$ instead, which is larger since
$h_0(\sigma_{\ell-k+1})<k$.
But this follows immediately
since by definition \begin{math}\Loss_{h_0}(S^{max})\le \frac{\sum_{j=1}^{k} h_0^{-1}(j)}{k\cdot h_0^{-1}(k)},\end{math}
which is equivalent to what we want to prove.
\end{proof}

Write $S_{i^*}=\sigma_{\ell-k+1}$. 
By the claim, the cumulative memory contribution associated with the
last block beginning at $t_{i^*}$ is at least 
\begin{math}\Loss_{h_0}(S^{max})\cdot S_{i^*}\cdot h_0(S_{i^*}) H.\end{math}

We repeat this in turn to find the time step for the beginning of the next
block from the end, $t_{i^*-1}$.  
One small difference now is that there is a last partial
segment of height at most $H$ from the beginning of segment containing $t_{i^*}$ to layer $t_{i^*}$.
However, this only adds at most $h_1(n)/2$ to the 
length of the segment which still remains well within the height bound of $h(S_{i^*-1},n)=h_0(S_{i^*-1})h_1(n)$ for Property (B) to apply.    

Repeating this back to the beginning of the branching program we obtain a decomposition of the branching
program into some number $i^*$ of blocks, the $i$-th block beginning at
time step $t_i$ with $2^{S_i}$ nodes, height between $h_0(S_i)H$ and 
$h_0(S_i)H+H\le 2h_0(S_i) H$, and with an associated cumulative
memory contribution in the $i$-th block of \begin{math}\ge \Loss_{h_0}(S^{max})\cdot S_{i}\cdot h_0(S_{i}) H.\end{math}
(This is correct even for the partial block starting
at time $t_1=0$ since $S_1=0$.)
Since we know that $i^*\le \ell$, for convenience, we also define $S_i=0$ for $i^*+1\le i\le \ell$.
Then, by definition
\begin{align}
&CM(P)\ge \Loss_{h_0}(S^{max})\cdot \left (\sum_{i=1}^{i^*}S_i\cdot h_0(S_i)\right)\cdot H=\Loss_{h_0}(S^{max})\cdot \left (\sum_{i=1}^{\ell}S_i\cdot h_0(S_i)\right) \label{cm:general}
\\
\label{general:length}
&\qquad\textrm{ and }\qquad\sum_{i=1}^\ell h_0(S_i) \le T/H.
\end{align}

As in the previous argument for the simple case, for $i\le i^*$, we define the target $k_i$ for the number of output values produced in each time block to be the smallest integer such that
$C\dot K^{-k_i}\le 2^{-S_i}\alpha/(2T^{c+1})$.
That is,
\begin{math}k_i=\lceil (S_i + \log_2(2CT^{c+1}/\alpha))/\log_2 K\rceil.\end{math}

If $k_i>m'$ for some $i$, then $S_i\ge m'\cdot \log_2 K - \log_2 (2CT^{c+1}/\alpha)\ge (m' \log_2 K)/2$ since $m'$ is $\omega(\log n)$ and $1/\alpha$ and $T$ are $n^{O(1)}$.
Therefore
$h_0(S_i)\ge h'(m'/2,n)$ and hence
\begin{displaymath}CM(P)\ge \frac{1}{2}\, \Loss_{h_0}(S^{max})\cdot m'\cdot h'(m'/2,n)\cdot \log_2  K\end{displaymath}

Suppose instead that $k_i\le m'$ for all $i\le i^*$.
Then, for $x\sim \mu$, for each $i\in [i^*]$ and each sub-branching program $B$ rooted at some
node in $L_{t_i}$ and extending until time $t_{i+1}$, by our choice
of $k_i$ and
Property (B), the probability that
$B$ produces at least $k_i$ correct outputs on input $x$ is at most
\begin{math}\alpha \cdot 2^{-S_i}/(2T^{c+1}).\end{math}
Therefore, by a union bound, for $x\sim \mu$ the probability that
$P$ produces at least $k_i$ correct outputs in the $i$-th time
block on input $x$ is at most
\begin{displaymath}|L_{t_i}| \cdot \alpha\cdot 2^{-S_i}/(2T^{c+1})=\alpha/(2T^{c+1})\end{displaymath}
and hence the probability for $x\sim \mu$ that there is some $i$ such that $P$ produces at least $k_i$ correct outputs on input $x$
during the $i$-th block is at most $\ell\cdot \alpha/(2T^{c+1})< \alpha/(2T^c)$.   
Therefore, the probability for $x\sim \mu$ that $P$ produces at most $\sum_{i=1}^\ell (k_i-1)$ correct outputs in total on input $x$ is 
$>1-\alpha/(2T^c)$.

Since, by Property (A) and the maximum error it allows, $P$ must produce at least $m$ correct outputs with probability at least
$\alpha-\epsilon\ge \alpha-\alpha(1-1/(2T^c))=\alpha/(2T^c)$  for $x\sim \mu$, we must have $\sum_{i=1}^{i^*} (k_i-1)\ge m$. 
Using our definition of $k_i$ we obtain
\begin{displaymath}\sum_{i=1}^{i^*} (S_i +\log_2 (2CT^{c+1}/\alpha))\ge m\,\log_2 K.\end{displaymath}

This is the one place in the proof where there is a distinction between an arbitrary 
branching program and one that comes from 
a random access machine.

We first start with the case of arbitrary
branching programs:
Note that $i^*\le \ell=\lceil T/H\rceil=\lceil T/\lfloor h_1(n)/2\rfloor\rceil$.
Suppose that $T \log_2 (2CT^{c+1}/\alpha) \le \frac{1}{6}\, m \cdot h_1(n)\cdot \log_2 K$.
Then,
even with rounding, we obtain
$\sum_{i=1}^{i^*} S_i\ge \frac{1}{2}\,m\, \log_2 K.$ 

\begin{sloppypar}
Unlike an arbitrary branching program that may do non-trivial computation with sub-logarithmic $S_i$, a random-access 
machine with even one register requires at least $\log_2 n$ bits of memory (just to index
the input for example) and hence $S_i+\log_2 (2CT^{c+1}/\alpha)$ will be $O(S_i)$, since $T$ is at most polynomial in $n$ without loss of generality and $1/\alpha$ is at most polynomial in $n$ by assumption.
Therefore we obtain that
$\sum_{i=1}^{i^*} S_i$ is $\Omega(m\, \log_2 K)$ without the
assumption on $T$.
\end{sloppypar}

In the remainder we continue the argument
for the case of arbitrary branching programs
and track the constants involved.   
The same argument obviously applies for programs coming from random-access machines with slightly different constants that we will not track.  In particular, since $S_i=0$ for $i>i^*$ we have
\begin{equation}
\sum_{i=1}^{\ell} S_i\ge \frac{1}{2}\, m\cdot \log_2 K.  \label{general:space-constraint}
\end{equation}

From this point we need to do something different from
the argument in the simple case because the lower bound on the total cumulative memory contribution is given by
\cref{cm:general} and is not simply $\sum_{i=1}^{\ell} S_i\cdot H$.
Instead, we combine \cref{general:space-constraint} and \cref{general:length} using the following technical lemma that we prove in \cref{append-optimization-lemma}.

\begin{restatable}{lemma}{concavelemma}
\label{lem-concave}
Let 
$p:\mathbb{R}^{\ge 0}\rightarrow \mathbb{R}^{\ge 0}$ be a differentiable function such that $q(x)=x/p(x)$ is a concave increasing function of $x$.
For $x_1,x_2,\ldots\in \mathbb{R}^{\ge 0}$, if
$\sum_i x_i \ge K$ and $\sum_i p(x_i)\le L$  then $\sum_i x_i p(x_i) \ge q^{-1}(K/L)\cdot L$.
\end{restatable}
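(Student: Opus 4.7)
The plan is to apply Jensen's inequality using the concavity of $q$ with weights proportional to $p(x_i)$, and then use a monotonicity argument to pass from the resulting bound (which depends on $P := \sum_i p(x_i)$) to the stated bound at $L$.

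Concretely, write $P = \sum_i p(x_i) \le L$ and $X = \sum_i x_i \ge K$, and set $w_i = p(x_i)/P$ so that $(w_i)_i$ is a probability distribution. Since $q(x_i) = x_i/p(x_i)$, we get $\sum_i w_i q(x_i) = (1/P)\sum_i x_i = X/P$. Applying Jensen's inequality to the concave function $q$ gives
\begin{equation*}
X/P \;=\; \sum_i w_i\, q(x_i) \;\le\; q\Bigl(\sum_i w_i x_i\Bigr) \;=\; q\!\left(\frac{\sum_i x_i p(x_i)}{P}\right).
\end{equation*}
Because $q$ is increasing, $q^{-1}$ is non-decreasing, so this yields $\sum_i x_i p(x_i) \ge P\cdot q^{-1}(X/P) \ge P\cdot q^{-1}(K/P)$, where the last inequality uses $X \ge K$.

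It then remains to show that $g(P) := P \cdot q^{-1}(K/P)$ is non-increasing in $P$ on $(0, L]$, which gives $g(P) \ge g(L) = L\cdot q^{-1}(K/L)$ as required. Substituting $u = K/P$ and $x = q^{-1}(u)$, we have $g(P) = K\cdot q^{-1}(u)/u = K\cdot x/q(x)$. Since $q$ is concave with $q(0)\ge 0$ (which holds because $q$ takes values in $\mathbb{R}^{\ge 0}$), the standard fact that $q(x)/x$ is non-increasing in $x>0$ applies, so $x/q(x)$ is non-decreasing in $x$. Composing with the fact that $u = q(x)$ is increasing in $x$, we conclude $q^{-1}(u)/u$ is non-decreasing in $u$, i.e., $g$ is non-increasing in $P$, completing the chain.

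The main obstacle is this last monotonicity step: a priori, replacing $P$ by the larger value $L$ could decrease the bound significantly since $q^{-1}(K/P)$ shrinks as $P$ grows. The key point is that concavity of $q$ forces $q^{-1}$ to grow sub-linearly — precisely that $q^{-1}(u)/u$ is non-decreasing — so the factor of $P$ in front exactly compensates for this shrinkage, making $P \cdot q^{-1}(K/P)$ actually non-increasing. Thus the concavity hypothesis on $q(x) = x/p(x)$ is used in a second essential way, beyond merely enabling Jensen's inequality in the first step.
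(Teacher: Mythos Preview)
Your proof is correct and takes a genuinely different route from the paper's. The paper argues directly and elementarily: starting from $\sum_i (x_i - K p(x_i)/L) \ge 0$, it splits the sum according to whether $x_i$ lies above or below $a := q^{-1}(K/L)$, introduces the ratio function
\[
f(x) \;=\; x\cdot \frac{p(x) - aL/K}{x - Kp(x)/L},
\]
proves via a derivative computation (using the chord-versus-tangent form of concavity of $q$) that $f$ is non-decreasing, and then uses this monotonicity to transform the inequality on $\sum_i (x_i - K p(x_i)/L)$ into the desired inequality on $\sum_i (x_i p(x_i) - (aL/K)\,x_i)$.

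Your approach via Jensen with weights $w_i = p(x_i)/P$ is considerably shorter and more conceptual: Jensen immediately delivers $\sum_i x_i p(x_i) \ge P\cdot q^{-1}(K/P)$ for $P = \sum_i p(x_i)$, and the monotonicity of $P \mapsto P\cdot q^{-1}(K/P)$ (equivalently, that $u \mapsto q^{-1}(u)/u$ is non-decreasing, which follows from concavity together with $q \ge 0$) finishes the argument. Both proofs use concavity of $q$ in two distinct places, but yours isolates the two uses cleanly. The paper's version is self-contained (no named inequality invoked) and visibly generalizes the bare-hands argument of the special case in \cref{lem-moments}; your version is the one most readers would find first and is the natural ``textbook'' proof. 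One small remark: you should note that $q(0^+)\ge 0$ follows because $q$ is increasing and nonnegative (since $q(x)=x/p(x)$), which is what you need for the standard fact that $q(x)/x$ is non-increasing.
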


In our application of \cref{lem-concave},
$p=h_0$, $K=\frac{1}{2}\, m\cdot \log_2 K$, and $L=T/H$.
Let $S^*$ be the solution to
\begin{math}\frac{S^*}{h_0(S^*)}=K/L=\frac{m\cdot H\cdot  \log_2 K}{2T}\ge \frac{m\cdot h_1(n) \log_2 K}{6T}.\end{math}
Then 
\cref{lem-concave} implies that
\begin{math}\sum_{i=1}^{\ell} S_i\cdot h_0(S_i)\ge S^*\cdot T/H=\frac{1}{2}, m\cdot h_0(S^*) \cdot  \log_2 K.\end{math}
and hence
\begin{displaymath}CM(P)\ge\Loss_{h_0}(S^{max})\cdot\frac{1}{2}\, m\cdot h_0(S^*) \cdot H\cdot  \log_2 K
\ge \frac{1}{6}\, \Loss_{h_0}(S^{max})\cdot m\cdot h(S^*,n)\cdot \log_2 K
\end{displaymath}
since $H=\lfloor h_1(n)/2\rfloor$ and
$h(S^*,n)=h_0(S^*)\cdot h_1(n)$. 
\end{proof}
 
In the special case that $h_0(s)=s^\Delta$ (and indeed for any nice function $h_0$), there is an alternative variant of the above 
in which one breaks up time into exponentially growing
segments starting with time step $T$.
We used that alternative approach in \cref{sec-qsort}.

\begin{remark}\label{rem-generic-space-bounded}
If we restrict our attention to $o(m' \log K)$-space bounded computation, then each $k_i \leq m'$ and the cumulative memory bound for a branching program in \cref{thm:full-general-bound-classical} becomes
\begin{math}\frac{1}{6}\, \Loss_{h_0}(n\log_2 |\domain|)\cdot m\cdot h(S^*(T,n),n) \cdot \log_2 K.\end{math}
And the bound for RAM cumulative memory becomes
\begin{math}\Omega\left(\Loss_{h_0}(n\log_2 |\domain|)\cdot 
m\cdot h(S^*(T,n),n)\cdot\log_2 K\right).\end{math}
\end{remark}
\subsection*{Generic method for quantum time-space tradeoffs}\label{sec-q-generic}
Quantum circuit time-space lower bounds have the same general structure as their classical branching program counterparts. They require a lemma similar to (B) that gives an exponentially small probability of producing $k$ outputs with a small number of queries.

\begin{lemma}[Quantum generic property]\label{lem-quant-generic-lemma}
For all $k \leq m'$ and any quantum circuit $\calC$ with at most $h'(k,n)$ layers, there exists a distribution $\mu$ such that when $x \sim \mu$, the probability that $\calC$ produces at least $k$ correct output values of $f(x)$ is at most $C \cdot K^{-k}$.
\end{lemma}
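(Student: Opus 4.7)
This lemma is best read as a hypothesis template in the style of classical Property (B): for a particular function $f$ (together with specific $m'$, $h'$, $C$, $K$) one must verify the exponential decay $C \cdot K^{-k}$, and the forthcoming quantum general theorem will consume this as its hypothesis. With that reading in mind, a ``proof'' is really a recipe for establishing the property in concrete instances, and my plan would follow the paradigm already exercised for sorting in \cref{lem-qcircuit-fixed-outputs-bound}.

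First I would fix an input distribution $\mu$ on $\domain^n$ tailored to $f$, typically the uniform distribution or uniform over a symmetric subset (as with lists of distinct elements in the sorting case); the distribution should be one under which a single output of $f$ already demands many quantum queries. Then I would identify an appropriate \emph{quantum strong direct product theorem} for a base single-output subproblem of $f$---for example the $k$-threshold bound of \cref{thm-q-kthresh-dp}, the polynomial-method direct products of Klauck et al., or a multiplicative adversary argument. This is where the $C \cdot K^{-k}$ decay comes from: such a theorem asserts that any quantum circuit using $o(k q)$ queries can correctly answer $k$ independent copies of the base problem only with probability at most $C \cdot K^{-k}$, where $q$ is the single-copy query complexity.

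Next I would construct a \emph{reduction} showing that any quantum circuit producing $k$ correct outputs of $f$ under $\mu$ can be turned into one that solves $k$ instances of the base subproblem with only a constant-factor blow-up in query count. The sorting construction inside the proof of \cref{lem-qcircuit-fixed-outputs-bound} is the prototype: one plants the $k$ target output positions via a random injection and dummy inputs, appends index tags so that correctness can be verified by raw queries to $X$, and simulates each outer oracle call with $O(1)$ inner calls. For a general $f$, the reduction must additionally randomise over which $k$ outputs are being produced, so that the bound is uniform over partial output assignments $\tau$ rather than only contiguous or structured ones.

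The hardest step is this last reduction. Two technical hurdles recur. First, input-dependent initial state of $S$ qubits generically costs a factor of $2^{2S}$ via \cref{prop-quant-union}, and this loss must be absorbed into the $K^{-k}$ decay by choosing the parameters so that $\log_2 K$ comfortably exceeds the per-output state cost. Second, simulating each outer query by inner queries must preserve the $h'(k,n)$-layer budget up to constants, which typically forces care about the structure of the target output set and about tie-breaking among outputs of $f$; this is precisely the subtlety that was resolved when generalising \cref{lem-qcircuit-fixed-outputs-bound} from contiguous rank sets to arbitrary ones, and I would expect to replay the same $f$-specific argument here.
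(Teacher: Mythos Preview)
You correctly recognize that \cref{lem-quant-generic-lemma} is a hypothesis template rather than a theorem to be proved in general; the paper treats it exactly this way and offers no proof, only the one-line remark that such properties ``have historically been prov[ed] using direct product theorems \cite{KSdW07, ASdW09} or the recording query technique \cite{HM21}.'' Your recipe via strong direct product theorems plus a reduction is precisely the first of these paradigms and matches the paper's concrete instantiations (sorting in \cref{lem-qcircuit-fixed-outputs-bound}, Boolean matrix multiplication in \cref{lem-quant-matrix-mult-lemma}). You do not mention the recording query technique, which is the paper's second cited route and is what underlies the $k$-collision instance in \cref{thm-k-collision-output-bound}.

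Two small corrections to your framing. First, you list the $2^{2S}$ cost of input-dependent initial state via \cref{prop-quant-union} as a hurdle in \emph{establishing} the lemma. In the paper's architecture the lemma is stated for circuits with \emph{no} input-dependent initial state; \cref{prop-quant-union} is invoked separately, in the generic completion (see the paragraph immediately following the lemma and the derivation of \cref{cor-quant-generic}), to transfer the bound to blocks in the middle of a larger circuit. So that factor belongs to the \emph{application} of the lemma, not its verification. Second, the lemma allows the distribution $\mu$ to depend on the circuit $\calC$---as it explicitly does in \cref{lem-quant-matrix-mult-lemma}, where the distribution is written $\mu_\calC$---whereas your recipe fixes $\mu$ up front depending only on $f$. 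For some applications the circuit-dependent freedom is essential.
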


Such lemmas have historically been proving using direct product theorems \cite{KSdW07, ASdW09} or the recording query technique \cite{HM21}. Quantum time-space tradeoffs use the same blocking strategy as branching programs; however, they cannot use union bounds to account for input dependent state at the start of a block. Instead, \cref{prop-quant-union} lets us apply \cref{lem-quant-generic-lemma} to blocks in the middle of a quantum circuit.

The $2^{2S}$ factor in \cref{prop-quant-union} means that a quantum time-space or cumulative memory lower bound will be half of what you would expect from a classical bound with the same parameters. Since a quantum circuit must have $\log_2 n$ qubits to make a query, we know that the space between layers is always at least $\log_2 n $. Therefore the generic time-space tradeoff for quantum circuits is
\begin{displaymath}
T\cdot S \text{ is } \Omega \left(\min\{m \  h'(S,n), m' \  Q(f)\} \cdot  \log_2 K\right)
\end{displaymath}
where $Q(f)$ is the bounded-error quantum query complexity of $f$.

\subsection*{Generic method for quantum cumulative complexity bounds}
Our generic argument can just as easily be applied to quantum lower bounds for problems where we have an instance of \cref{lem-quant-generic-lemma} using \cref{prop-quant-union} to bound the number of outputs produced even with initial input-dependent state.
Since quantum circuits require at least $\log_2 n$ qubits to hold the query
index, the bounds derived are like those from \cref{thm:full-general-bound-classical}(b).
\begin{corollary} \label{cor-quant-generic}
Let $c>0$. Suppose that function $f: \domain^n \to \range^m$ satisfies generic \cref{lem-quant-generic-lemma} with $m'$ that is $\omega(\log_2 n)$. For $s > 0$, let $h(s,n) = h'(s/\log_2 K, n)$. Let $h(s,n) = h_0(s)h_1(n)$ where $h_0(1) = 1$ and $h_0$ is constant or a differentiable function where $s/h_0(s)$ is increasing and concave. Let $S^*$ be defined by:
\begin{displaymath}\frac{S^*}{h_0(S^*)} = \frac{m \  h_1(n) \  \log_2 K}{6 T}\end{displaymath}
Then 
the cumulative memory used by a quantum circuit that computes $f$ in time $T$ with error $\varepsilon\le (1-1/(2T^c))$ is at least \begin{displaymath}\frac{1}{6}\, \Loss_{h_0}(n\log_2 |\domain|)\cdot \min\left\{m\  h(S^*,n),\ 3m'\  h'(m'/2,n)\right\} \cdot \log_2 K.\end{displaymath}
Additionally if the quantum circuit uses $o(m' \log K)$ qubits, then the cumulative memory bound instead is
\begin{math}\frac{1}{6}\, \Loss_{h_0}(n\log_2 |\domain|)\cdot m\cdot h(S^*,n) \cdot \log_2 K.\end{math}
\end{corollary}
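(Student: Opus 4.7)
The plan is to follow the proof of \cref{thm:full-general-bound-classical} essentially verbatim, substituting \cref{prop-quant-union} for the classical union bound over the starting nodes of each block. Fix a quantum circuit $\calC$ with $T$ layers computing $f$ with error at most $\varepsilon \le 1 - 1/(2T^c)$, and let $S_t$ be its space at layer $t$. Set $H = \lfloor h_1(n)/2 \rfloor$ and partition time backwards from layer $T$ into $\ell = \lceil T/H \rceil$ segments of length at most $H$; in each segment, pick the candidate layer $\tau_i$ of minimum space $\sigma_i$. Build blocks greedily from the end exactly as in the classical proof: the last block starts at the smallest $k$ with $h_0(\sigma_{\ell-k+1}) < k$, so its length $k \cdot H$ is at most $h(\sigma_{\ell-k+1},n)$; iterating produces blocks starting at layers $t_i$ with space $S_i = S_{t_i}$ and length at most $h(S_i,n)$.

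For each block $\calC_i$ starting with an arbitrary input-dependent state on $S_i$ qubits, \cref{prop-quant-union} implies that if $\calC_i$ produces a fixed set of $k_i$ correct outputs with probability $p$, then the same block started from the maximally mixed state produces them with probability at least $p/2^{2 S_i}$. Applying \cref{lem-quant-generic-lemma} to that variant bounds the latter probability by $C \cdot K^{-k_i}$, so the original block succeeds with probability at most $2^{2 S_i} \cdot C \cdot K^{-k_i}$. Choosing
\begin{displaymath}
k_i = \lceil (2 S_i + \log_2(2 C T^{c+1}))/\log_2 K \rceil
\end{displaymath}
makes this at most $1/(2 T^{c+1})$, and a union bound over the $\le \ell$ blocks shows that with probability exceeding $1 - 1/(2 T^c)$ the circuit produces at most $\sum_i (k_i - 1)$ correct outputs. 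Since $\calC$ must produce all $m$ outputs with probability at least $1/(2 T^c)$, we obtain $\sum_i (2 S_i + O(\log T)) \ge m \log_2 K$.

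From here the argument parallels the classical case. If some $k_i > m'$ then the corresponding $S_i$ is $\Omega(m' \log K)$, and the block's own cumulative-memory contribution $\Loss_{h_0}(n \log_2|\domain|) \cdot S_i h_0(S_i) H$ already realises the second branch of the minimum. Otherwise, assuming $T \log T = o(m h_1(n) \log K)$ — else the time bound alone is stronger — the $O(\log T)$ contributions are absorbed and we get $\sum_i S_i \ge \Omega(m \log_2 K)$. Combining with the block-length constraint $\sum_i h_0(S_i) \le T/H$, \cref{lem-concave} applied with $p = h_0$ yields $\sum_i S_i h_0(S_i) \ge S^* \cdot T/H$ for $S^*$ as defined in the statement, and plugging into $CM(\calC) \ge \Loss_{h_0}(n \log_2|\domain|) \cdot H \cdot \sum_i S_i h_0(S_i)$ gives the claim. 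The $o(m' \log K)$-space variant follows because that space bound forces every $k_i$ to be $o(m')$, eliminating the second branch of the minimum.

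The main obstacle is the $2^{2 S_i}$ factor from \cref{prop-quant-union}: it doubles the effective weight of $S_i$ inside $k_i$ and thereby halves the constant in the resulting lower bound on $\sum_i S_i$. One has to verify that this perturbation does not alter the shape of the final bound — in particular, that the same $S^*$ still emerges from \cref{lem-concave} — which it does, because the factor only affects multiplicative constants that are absorbed into the $1/6$ prefactor in the statement.
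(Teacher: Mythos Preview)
Your overall structure is correct and matches the paper's intended argument: rerun the proof of \cref{thm:full-general-bound-classical} with \cref{prop-quant-union} in place of the union bound over starting nodes, picking up a $2^{2S_i}$ factor that doubles the coefficient on $S_i$ in $k_i$.

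There is, however, one real gap. The corollary is stated \emph{without} the alternative ``either $T\log T$ is large'' clause that appears in part~(a) of \cref{thm:full-general-bound-classical}. You handle this by writing ``assuming $T\log T = o(m\,h_1(n)\log K)$ --- else the time bound alone is stronger,'' but that fallback is not justified and in general fails: when $h_0$ is non-constant (say $h_0(s)=s^\Delta$), the trivial bound $CM(\calC)\ge T\log_2 n$ only gives $\Omega(m\,h_1(n)\log K)$, which is smaller than the claimed $\Omega(m\,h_0(S^*)\,h_1(n)\log K)$ by a factor of $h_0(S^*)$ that can be polylogarithmic.

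The fix is the one the paper flags just before the corollary: a quantum circuit must hold at least $\log_2 n$ qubits in every layer merely to index the input, so $S_i\ge \log_2 n$ always. Since $T$ is polynomially bounded without loss of generality, the additive $\log_2(2CT^{c+1})$ term in $k_i$ is $O(\log n)=O(S_i)$, and you obtain $\sum_i S_i \ge \Omega(m\log_2 K)$ \emph{unconditionally}, with no case split on $T$. This is exactly the mechanism behind part~(b) of \cref{thm:full-general-bound-classical} for random-access machines, and it is why the corollary reads like~(b) rather than~(a). Once you invoke $S_i\ge\log_2 n$, the rest of your argument goes through as written.
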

\section{Applications of our general theorems to classical and quantum computation}   
\label{sec-classical-applications}
Theorems~\ref{thm-simple-general-bound-classical} and \ref{thm:full-general-bound-classical} are powerful tools that can convert most existing time-space lower bounds into asymptotically equivalent lower bounds on the required cumulative memory.
We give a few examples to indicate how our general
theorems can be used.

\paragraph*{Unique elements}
Define $Unique_{n,N} : [N]^n \to \calP([N])$ by
$Unique_{n,N}(x)=\set{x_i\mid x_j\ne x_i\mbox{ for all }j\ne i}$.

\begin{proposition}[Lemmas 2 and 3 in \cite{DBLP:journals/siamcomp/Beame91}]
\label{prop:unique}
For the uniform distribution $\mu$ on  $[N]^n$ with $N\ge n$,
\begin{description}
\item{(A)}
$\Pr_{x \sim \mu}[\norm{Unique_{n,N}(x)} \geq n/(2e)] \geq 1/(2e-1)$
\item{(B')}
For any partial assignment $\tau$ of $k \leq n/4$ output values over $[N]$ and any restriction $\pi$ of $n/4$ coordinates in $[n]^n$,
$\Pr_{x \sim \mu}[Unique_{n,N}(x) \text{ is consistent with } \tau\mid x \text{ is consistent with } \pi] \leq e^{-k/2}$.
\end{description}
\end{proposition}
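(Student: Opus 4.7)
Part (A) is a first-moment calculation combined with a reverse Markov argument. By linearity of expectation,
\begin{displaymath}
\E_{x\sim\mu}[\norm{Unique_{n,N}(x)}] = \sum_{i=1}^{n} \Pr_{x\sim\mu}[x_i \neq x_j \text{ for all } j\neq i] = n(1-1/N)^{n-1}.
\end{displaymath}
Using the elementary inequality $(1-1/N)^{N-1} \ge 1/e$ together with the hypothesis $N\ge n$, this expectation is at least $n/e$. Since $\norm{Unique_{n,N}(x)}$ is always bounded in $[0,n]$, writing $p = \Pr_{x\sim\mu}[\norm{Unique_{n,N}(x)} \ge n/(2e)]$ and splitting the expectation as $n/e \le np + (n/(2e))(1-p)$ rearranges to $p \ge 1/(2e-1)$.

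For Part (B'), I would condition on $\pi$, so the remaining $3n/4$ coordinates of $x$ are independent and uniform on $[N]$, and then count the completions that realise $\tau$. Call a value $v_j$ specified by $\tau$ \emph{placed} if it appears exactly once in $\pi$ and \emph{missing} if it does not appear; if some $v_j$ appears in $\pi$ more than once, the conditional probability is already zero and we are done. Writing $a$ for the number of placed and $b=k-a$ for the number of missing values, a consistent completion must put each missing value in exactly one of the $3n/4$ free positions and assign every other free position a value in $[N]\setminus\{v_1,\dots,v_k\}$; the number of such completions is
\begin{displaymath}
\binom{3n/4}{b}\,b!\,(N-k)^{3n/4-b}.
\end{displaymath}

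Dividing by $N^{3n/4}$ and applying the crude bounds $\binom{3n/4}{b}b! \le (3n/4)^b$ and $1-k/N \le e^{-k/N}$ yields the conditional probability bound $(3n/(4N))^b\, e^{-k(3n/4-b)/N}$. The main step is then to verify that this quantity is at most $e^{-k/2}$ uniformly over the adversary's choice of $(a,b)$ with $a+b=k$. Examining the ratio of consecutive values in $b$ gives $(3n/4-b+1)/(N-k)$, which is at most $1$ whenever $N-k \ge 3n/4$; this holds under the hypotheses $N\ge n$ and $k\le n/4$, so the maximum is attained at $b=0$. At $b=0$ the bound reduces to $(1-k/N)^{3n/4} \le e^{-3kn/(4N)}$, which in the intended regime where $N$ is comparable to $n$ is at most $e^{-k/2}$. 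This monotonicity-plus-single-case comparison is the only delicate step; everything else reduces to linearity of expectation and direct multinomial counting.
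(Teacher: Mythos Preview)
The paper does not prove this proposition; it is quoted from the cited 1991 paper, so there is no in-paper argument to compare against. I evaluate your proof on its own merits.

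Part~(A) is correct as written.

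Part~(B') contains a real gap, which you half-acknowledge. Your monotonicity step is right: the exact conditional probability $P(b)=\binom{3n/4}{b}\,b!\,(N-k)^{3n/4-b}/N^{3n/4}$ satisfies $P(b)/P(b-1)=(3n/4-b+1)/(N-k)\le 1$ whenever $N\ge n$ and $k\le n/4$, so the maximum over $b$ is at $b=0$. But at $b=0$ you have $P(0)=(1-k/N)^{3n/4}\le e^{-3kn/(4N)}$, and this is at most $e^{-k/2}$ only when $N\le 3n/2$. Your phrase ``in the intended regime where $N$ is comparable to $n$'' is precisely where the argument stops proving the stated proposition. In fact, Part~(B') as written is \emph{false} for large $N$: if $\pi$ places each of the $k$ target values exactly once, the conditional probability that all of them stay unique is exactly $(1-k/N)^{3n/4}$, which tends to $1$ as $N\to\infty$.

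The hypothesis ``$N\ge n$'' in the proposition is therefore almost certainly a transcription slip; the cited source treats $N=n$, and the paper's own application (and the tightness remark immediately after the theorem) are for $N=n$. For $N=n$ your argument is complete, since $P(0)=(1-k/n)^{3n/4}\le e^{-3k/4}<e^{-k/2}$. So your proof is correct in the only regime that is actually used; you should state the constraint on $N$ explicitly rather than burying it inside the word ``intended''.
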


The above lemma is sufficient to prove that $TS$ is  $\Omega(n^2)$ for the unique elements problem, and can be easily extended to a cumulative complexity bound using \cref{thm:full-general-bound-classical}.

\begin{theorem}\label{thm-unique-elements}
For $n\ge N$,
any branching program
    computing $Unique_{n,N}$ in time $T$ and probability at least $4/5$ requires
    $T$ to be $\Omega(n^2/\log n)$ or
    $CM(P)$ to be $\Omega(n^2)$. 
    Further,
    any random access machine computing $Unique_{n,N}$ with probability at least $4/5$
    requires cumulative memory
    $\Omega(n^2)$
\end{theorem}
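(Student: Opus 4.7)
The plan is to apply our general theorem \cref{thm:full-general-bound-classical} to $Unique_{n,N}$ using \cref{prop:unique} to supply Properties (A) and (B'), and observe that the shape of the resulting parameters is especially simple, so the general theorem essentially reduces to the generic bound without any loss factor.

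More concretely, \cref{prop:unique} gives us Property (A) with $m = \lceil n/(2e)\rceil$ and $\alpha = 1/(2e-1)$, both constants (in the sense that $\alpha$ is $1/n^{O(1)}$ and $m = \Theta(n)$), and Property (B') with $C = 1$, $K = e^{1/2}$ (so $\log_2 K = \Theta(1)$), $m' = \lfloor n/4\rfloor$ (which is $\omega(\log n)$ as required), and $h'(k,n) = \lfloor n/4\rfloor$ independent of $k$. The crucial observation is that the height function does not depend on $k$, so in the factorization $h(s,n) = h_0(s)\cdot h_1(n)$ demanded by \cref{thm:full-general-bound-classical} we may take $h_0 \equiv 1$ and $h_1(n) = \Theta(n)$; then $h_0$ is a constant function with $h_0(1) = 1$, so the case hypothesis of the theorem is met, and by \cref{lem-loss}(a) (or directly from the definition) $\Loss_{h_0}(n\log_2|\domain|) = 1$.

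With these parameters, both quantities appearing inside the $\min$ in the conclusion are $\Theta(n^2)$: the first is $m\,h(S^*,n)\,\log_2 K = \Theta(n)\cdot\Theta(n)\cdot\Theta(1)$, which is independent of the value of $S^*$ because $h_0$ is constant, and the second is $m'\,h'(m'/2,n)\,\log_2 K = \Theta(n)\cdot\Theta(n)\cdot\Theta(1)$. Likewise $\frac{1}{6}\,m\,h_1(n)\,\log_2 K$ is $\Theta(n^2)$, so the alternative in part (a) of the theorem reads $T\log_2 T = \Omega(n^2)$, i.e.\ $T = \Omega(n^2/\log n)$. Thus part (a) applied to any branching program succeeding with probability $4/5 \ge 1 - \alpha(1-1/(2T^c))$ (valid for $c=1$ and $T$ at least a constant, since $\alpha = 1/(2e-1) > 1/5$) yields the branching-program dichotomy: either $T$ is $\Omega(n^2/\log n)$ or $CM(P)$ is $\Omega(n^2)$. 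Part (b) of the theorem, applied with the same parameters and the same error check, gives the unconditional RAM bound $CM = \Omega(n^2)$.

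There is essentially no obstacle beyond verifying hypotheses: the only mild subtlety is confirming that the error tolerance of $4/5$ success fits inside the $\alpha(1-1/(2T^c))$ budget of \cref{thm:full-general-bound-classical}, which holds for $T$ above a fixed constant because $\alpha > 1/5$; for $T$ below that constant the claimed lower bounds are trivial. All other routine rounding (in $m$, $m'$, and $h_1$) is absorbed into the $\Omega$ notation.
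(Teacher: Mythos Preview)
Your proposal is correct and follows essentially the same approach as the paper's proof: invoke \cref{prop:unique} to supply Properties (A) and (B') with the same parameters, observe that $h'(k,n)$ is independent of $k$ so $h_0\equiv 1$ and $\Loss_{h_0}\equiv 1$, and read off the bounds from \cref{thm:full-general-bound-classical}. Your explicit verification of the error-tolerance inequality $1/5 \le \alpha(1-1/(2T^c))$ is a detail the paper omits, and your value $K=e^{1/2}$ is the correct reading of \cref{prop:unique}(B') (the paper's stated $K=1/(2\ln N)$ appears to be a typo).
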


\begin{proof}
By \cref{prop:unique}, $Unique_{n,N}$ satisfies
conditions (A) and (B) of \cref{sec-generic}
with $h'(k,n)=n/4$, $m'=n/4$, $m=n/(2e)$, $C=1$,
$K=1/(2\ln N)$ and $\alpha = 1/(2e-1)\ge 0.2254$.
Since $h'(k,n)$ is independent of $k$, the
function $h_0$ defined in \cref{thm:full-general-bound-classical} is the constant function 1 and
$h_1(n)=n/4$ so $\Loss_{h_0}\equiv 1$.
We then apply \cref{thm:full-general-bound-classical} to obtain the claimed lower bounds.
\end{proof}

The above theorem is tight for $N=n$ using the
algorithm in~\cite{DBLP:journals/siamcomp/Beame91}.

\paragraph*{Linear Algebra}
We consider linear algebra over some finite field $\F$. Let $\domain$ be a subset of $\F$ with $d$ elements. 

\begin{definition}
An $m \times n$ matrix is \emph{$(g,h,c)$-rigid} iff every $k \times w$ submatrix where $k \leq g$ and $w \geq n-h$ has rank at least $ck$. We call $(g,h,1)$-rigid matrices $(g,h)$-rigid.
\end{definition}

Matrix rigidity is a robust notion of rank and is an important property for proving time-space and cumulative complexity lower bounds for linear algebra. Fortunately, Abrahamson proved that there are always rigid square matrices.

\begin{proposition}[Lemma 4.3 in \cite{Abr91}]\label{lem-rigid-matrices}
There is a constant $\gamma \in (0, \frac{1}{2})$ where at least a $1-d^{-1}(2/3)^{\gamma n}$ fraction of the matrices over $\domain^{n \times n}$ are $(\gamma n,\gamma n)$-rigid.
\end{proposition}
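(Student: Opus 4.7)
The plan is a two-level union bound: outer over the combinatorial data that witnesses non-rigidity, inner over a row-by-row exposure of the matrix. A matrix $M \in \domain^{n \times n}$ fails to be $(\gamma n, \gamma n)$-rigid precisely when there exist $k \leq \gamma n$, a set $R$ of $k$ row indices, and a set $C$ of $c \leq \gamma n$ column indices such that the $k \times (n-c)$ submatrix $M[R, [n]\setminus C]$ has rank strictly less than $k$. I would union bound the failure probability over all such triples $(k,R,C)$.

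The key subroutine is a counting lemma: for any subspace $V \subseteq \F^w$ of dimension $j$, we have $|V \cap \domain^w| \leq d^j$. This follows by choosing $j$ coordinates on which the projection $V \to \F^j$ is an isomorphism and observing that any element of $V$ is determined by its values on those coordinates; for the element to lie in $\domain^w$, those $j$ coordinate values must lie in $\domain$, giving at most $d^j$ possibilities. Using this, I bound the failure probability of a fixed $k \times (n-c)$ submatrix with i.i.d.\ uniform entries in $\domain$ by exposing the rows sequentially: conditioned on the span $V_i$ of the first $i-1$ rows (which has dimension at most $i-1$), the probability that row $i$ lies in $V_i$ is at most $d^{(i-1)-(n-c)}$. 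Summing for $i=2,\dots,k$ gives
\begin{displaymath}
\Pr[\,M[R,[n]\setminus C]\ \text{has rank}\ <k\,] \;\leq\; \sum_{i=2}^{k} d^{-(n-c-i+1)} \;\leq\; 2\,d^{-(n-c-k+1)}.
\end{displaymath}

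Combining with the outer union bound over $(k,R,C)$ yields
\begin{displaymath}
\Pr[M\ \text{not}\ (\gamma n,\gamma n)\text{-rigid}] \;\leq\; 2\,d^{-(n+1)}\!\left(\sum_{k\leq \gamma n}\binom{n}{k}d^k\right)\!\left(\sum_{c\leq \gamma n}\binom{n}{c}d^c\right).
\end{displaymath}
For $\gamma$ below a threshold depending on $d$, each parenthesized sum is dominated by its last term and, using the standard entropy estimate $\binom{n}{\gamma n}\leq 2^{H(\gamma)n}$, is at most $\poly(n)\cdot 2^{H(\gamma)n}d^{\gamma n}$. Substituting and using $d\ge 2$ reduces the whole bound to the form $d^{-1}\cdot \poly(n)\cdot 2^{(2H(\gamma)-(1-2\gamma))n}$. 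I would then fix a constant $\gamma\in(0,1/2)$ small enough that $2H(\gamma)-(1-2\gamma)+\gamma\log_2(3/2)<0$, which is possible since the left-hand side equals $-1$ at $\gamma=0$ and is continuous; this absorbs the polynomial factor and yields the claimed $d^{-1}(2/3)^{\gamma n}$ tail for all large $n$.

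The step I expect to be the main obstacle is this final balancing: the double binomial factor $\binom{n}{\gamma n}^2$ must be dominated by the $d^{-(1-2\gamma)n}$ savings \emph{with enough slack} to also produce the claimed further factor $(2/3)^{\gamma n}$, and the choice of $\gamma$ must work uniformly in $d\geq 2$. The other subtle point is the counting lemma, which is the one place where it matters that $\domain$ is merely a subset of $\F$ rather than itself closed under field operations; without it, one cannot convert a dimension bound on $V$ into a cardinality bound on $V\cap \domain^w$.
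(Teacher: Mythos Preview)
The paper does not give its own proof of this proposition; it is quoted verbatim as Lemma~4.3 of Abrahamson~\cite{Abr91} and used as a black box. So there is no in-paper argument to compare against.

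Your proof is correct and is the standard probabilistic argument one would expect Abrahamson's original to follow. A couple of minor remarks. First, the outer union bound over all $c\le \gamma n$ is slightly wasteful: since adding columns can only increase rank, a witness to non-rigidity with $c$ deleted columns yields one with exactly $\gamma n$ deleted columns, so it suffices to take $c=\gamma n$. This does not change the asymptotics. Second, your parenthetical ``for $\gamma$ below a threshold depending on $d$'' understates your own argument: for any $\gamma<1/2$ and $d\ge 2$ the ratio of consecutive terms $\binom{n}{k+1}d^{k+1}/\binom{n}{k}d^k\approx d(1-\gamma)/\gamma>1$, so the last term always dominates and the final inequality $2H(\gamma)-(1-2\gamma)+\gamma\log_2(3/2)<0$ is indeed $d$-free, as you note at the end. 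The counting lemma $|V\cap \domain^w|\le d^{\dim V}$ is exactly the right ingredient for handling $\domain\subsetneq\F$.
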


Abrahamson shows in \cite{Abr91} that for any constant $c \in (0,\frac{1}{2})$ and $m \times n$  matrix $A$ that is $(c m,cn,c)$-rigid, any $\domain$-way branching program that computes the function $f(x) = Ax$ with expected time $\overline{T} \geq n$ and expected space\footnote{\cite{Abr91} defines expected space as the expected value of the $\log_2$ of the largest number of a branching program node that is visited during a computation under best case node numbering.} $\overline{S}$ has $\overline{T}\overline{S} = \Omega(nm \log d)$
where $d=|\domain|$. 
We restate the key property used in that proof.

\begin{proposition}[Theorem 4.6 in \cite{Abr91}]\label{thm-Ax-probability}
Let $c \in (0,\frac{1}{2}]$, $A$ be any $m \times n$ matrix that is $(g,h,c)$-rigid and $f$ be the function $f(x) = Ax$ over $\mathbb{F}$. 
Let $\mu$ be the uniform distribution on $\domain^n$ for $\domain\subseteq \mathbb{F}$
with $|\domain|=d$. 
For any restriction $\pi$ of $h$ coordinates to values in $\domain$ and any partial assignment $\tau$ of $k \leq g$ output coordinates over $\mathbb{F}^{m}$,
\begin{displaymath}\Pr_{x\sim \mu}[f(x) \text{ is consistent with } \tau \mid x \text{ is consistent with } \pi] \leq d^{-ck}\end{displaymath}
\end{proposition}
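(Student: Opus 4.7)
The plan is to reduce the conditional probability estimate to a linear-algebra counting problem and then exploit the rigidity hypothesis on $A$. Let $J \subseteq [n]$ be the set of $h$ input coordinates fixed by $\pi$ and let $I \subseteq [m]$ be the set of $k$ output coordinates fixed by $\tau$, so $|I| = k \leq g$ and the complement $\overline{J}$ has size $n - h$. Let $A_{I,\overline{J}}$ denote the $k \times (n-h)$ submatrix of $A$ on rows $I$ and columns $\overline{J}$. After substituting the values fixed by $\pi$ into the equation $Ax = f(x)$, the event that $f(x)_I$ equals the value prescribed by $\tau$ becomes a single affine system $A_{I,\overline{J}} \, x_{\overline{J}} = b$ for an explicit vector $b \in \mathbb{F}^k$ depending on $\tau$, $\pi$, and the fixed columns of $A$.

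By the $(g,h,c)$-rigidity of $A$ applied to the row set $I$ (with $k \leq g$) and the column set $\overline{J}$ (of size exactly $n-h$), the submatrix $A_{I,\overline{J}}$ has rank at least $ck$ over $\mathbb{F}$. I would then pick $ck$ rows of $A_{I,\overline{J}}$ that form a linearly independent set and run Gaussian elimination on those rows to locate distinct \emph{pivot columns} $j_1, \ldots, j_{ck} \in \overline{J}$ such that, after suitable unimodular row operations, the $i$-th equation has a non-zero coefficient on $x_{j_i}$ and zero coefficients on $x_{j_1}, \ldots, x_{j_{i-1}}$. The right-hand side $b$ transforms accordingly; the other variables in $\overline{J} \setminus \{j_1, \ldots, j_{ck}\}$ appear as free parameters in each equation.

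To bound the probability, I would reveal $x_{\overline{J}}$ in two stages under the uniform distribution on $\domain^{n-h}$: first reveal all $x_j$ for $j \in \overline{J} \setminus \{j_1, \ldots, j_{ck}\}$, then reveal $x_{j_1}, \ldots, x_{j_{ck}}$ in that order. Conditioned on any values of the earlier variables, the $i$-th pivot equation forces $x_{j_i}$ to equal a single value in $\mathbb{F}$ (since its pivot coefficient is nonzero and everything else on that row is now constant), so the uniformly chosen $x_{j_i} \in \domain$ hits the required value with probability at most $1/d$. Chaining the $ck$ stages multiplicatively yields the claimed bound of $d^{-ck}$.

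The main obstacle, and the reason I would not simply count affine solutions in $\mathbb{F}^{n-h}$ and divide by $d^{n-h}$, is that $\domain$ need not be a subfield of $\mathbb{F}$: the forced value at each pivot may lie outside $\domain$, and the uniform distribution on $\domain^{n-h}$ is not the uniform distribution on any affine $\mathbb{F}$-subspace. The stepwise reveal finesses this cleanly because each conditional probability is bounded by $1/d$ regardless of whether the forced $\mathbb{F}$-value lies in $\domain$. A secondary subtlety worth checking is that the rigidity quantifier requires $w \geq n - h$, which exactly matches the size of $\overline{J}$, so no further shrinking of the column set is necessary.
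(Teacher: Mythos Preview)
The paper does not supply its own proof of this proposition; it is quoted from Abrahamson~\cite{Abr91} and used as a black box. Your argument is the natural one and is essentially correct: reduce to an affine system on the unfixed coordinates, use rigidity to lower-bound the rank of the relevant submatrix, and expose pivot variables one at a time to collect a factor of $1/d$ each. Your remark about why one cannot simply count $\mathbb{F}$-affine solutions when $\domain$ is not a subfield is exactly the right point.

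Two small bookkeeping slips are worth fixing. First, $ck$ need not be an integer; since the rank is an integer at least $ck$, work with $r=\lceil ck\rceil$ pivots and conclude $d^{-r}\le d^{-ck}$. Second, with the echelon shape you describe (row $i$ has zero coefficients on $x_{j_1},\ldots,x_{j_{i-1}}$ but possibly nonzero coefficients on $x_{j_{i+1}},\ldots,x_{j_r}$), revealing the pivot variables in the order $x_{j_1},\ldots,x_{j_r}$ does \emph{not} make ``everything else on row $i$'' constant at stage $i$: the later pivots are still unrevealed. Either reverse the reveal order (expose $x_{j_r}$ first and use row $r$, then $x_{j_{r-1}}$ with row $r-1$, and so on), or pass to reduced row-echelon form so that each pivot column has a single nonzero entry. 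With either fix the chaining goes through verbatim.
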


\begin{theorem}\label{thm-mat-vec}
Let $c \in (0, \frac{1}{2}]$. Let $A$ be an $m \times n$ matrix 
over $\domain$, with $|\domain|=d$ that is $(g(m), h(n),c)$-rigid.  Then, for any $\domain$-way branching program $P$ computing $f(x) = Ax$ in $T$ steps with probability at least $n^{-O(1)}$, either $T$ is $\Omega(g(m)h(n) \log_n d)$ or $CM(P)$ is $ \Omega (g(m) h(n) \log d)$.   Further, computing $f$ on a random access machine requires cumulative memory
$\Omega(g(m) h(n) \log d)$ unconditionally.
\end{theorem}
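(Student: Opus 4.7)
The plan is a direct application of Theorem \ref{thm:full-general-bound-classical}, using Proposition \ref{thm-Ax-probability} to supply the required generic property. This parallels closely the proof of Theorem \ref{thm-unique-elements}.

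First, identify the parameters for the generic framework. Take $\mu$ to be the uniform distribution on $\domain^n$. Since $f(x)=Ax$ is a total function that always emits exactly $m$ coordinates, Property (A) holds with $\alpha=1$ and target $m$. Proposition \ref{thm-Ax-probability} gives Property (B$'$) with $C=1$, $K=d^{c}$, $m'=g(m)$, and $h'(k,n)=h(n)$, the last being \emph{independent} of $k$. In the notation of Theorem \ref{thm:full-general-bound-classical}, this means $h_0\equiv 1$ (a constant function) and $h_1(n)=h(n)$; by Lemma \ref{lem-loss}(b) (trivially, in this constant case) we have $\Loss_{h_0}\equiv 1$.

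Next, plug these parameters into Theorem \ref{thm:full-general-bound-classical}(a). The dichotomy gives either
\begin{displaymath}
T\log_2(2T^{c+1}) \;>\; \tfrac{1}{6}\, m\, h(n)\, c\log_2 d,
\end{displaymath}
which, using $T\leq n^{O(1)}$ without loss of generality so that $\log T = O(\log n)$, and absorbing the constant $c$, yields $T=\Omega(m\, h(n)\,\log d/\log n)\geq \Omega(g(m)\,h(n)\,\log_n d)$ since $g(m)\leq m$; or else the cumulative memory bound
\begin{displaymath}
CM(P)\;\geq\;\tfrac{1}{6}\,\Loss_{h_0}(n\log_2 d)\cdot \min\!\bigl(m\, h(S^\ast,n),\; 3g(m)\, h'(g(m)/2,n)\bigr)\cdot\log_2 K.
\end{displaymath}
Because $h_0\equiv 1$, $h(S^\ast,n)=h(n)$ regardless of $S^\ast$, and the minimum collapses to $\min(m,3g(m))\cdot h(n)=\Theta(g(m)\,h(n))$ using $g(m)\leq m$. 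Combined with $\log_2 K=c\log_2 d$ and $\Loss_{h_0}\equiv 1$, this gives the claimed $\Omega(g(m)\,h(n)\,\log d)$ bound. The unconditional RAM bound follows from part (b) of the same theorem, which removes the dichotomy because a RAM requires at least $\log_2 n$ bits of memory at every step so the $T\log T$ term is automatically subsumed by $CM(P)$.

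The only genuine checks are that the hypotheses of Theorem \ref{thm:full-general-bound-classical} are met: (i) $m'=g(m)$ must be $\omega(\log_2 n)$, which is the regime of interest (for the explicit rigid matrices from Proposition \ref{lem-rigid-matrices} one has $g(m)=\Omega(n)$, while the theorem as stated implicitly presumes $g(m)$ is nontrivial); (ii) $\alpha=1$ is trivially $n^{-O(1)}$; and (iii) the success-probability condition $\varepsilon\leq \alpha(1-1/(2T^{c}))$ translates, via $\alpha=1$ and $T$ polynomial in $n$, exactly to success at least $n^{-O(1)}$, allowing any constant in the $n^{-O(1)}$ to be absorbed into the constant $c$ of the theorem. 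No new technique is needed; the main accounting step is the constant-factor book-keeping in the $\log d$ versus $\log_n d$ conversion in the $T$ bound, and the observation that $h_0$'s being constant makes the whole machinery of $S^\ast$ and $\Loss_{h_0}$ collapse to trivialities.
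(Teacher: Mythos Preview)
Your proof is correct and follows essentially the same approach as the paper. The only difference is that the paper invokes the simpler Theorem~\ref{thm-simple-general-bound-classical} (which is tailored to the case $h'(k,n)=h(n)$ independent of $k$, with the output target taken to be $g(m)$), whereas you route through the full Theorem~\ref{thm:full-general-bound-classical} and then observe that $h_0\equiv 1$ collapses all the extra machinery; either route yields the same bound.
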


\begin{proof}
We invoke \cref{thm-simple-general-bound-classical} using \cref{thm-Ax-probability} to obtain Property (B') with $K=d^c$ and $C=1$. Property (A) is trivial since $|f(x)| = m$.
\end{proof}
By \cref{lem-rigid-matrices} we know that for some constant $\gamma$, a random matrix has a good chance of being $(\gamma m, \gamma n)$-rigid. This means that computing $f(x) = Ax$ for a random matrix $A$ in time at most $T$ is likely to require either the cumulative memory or $T \log T$ to be $\Omega(mn \log d)$. 
Since Yesha \cite{Yes84} proved that the $n \times n$ DFT matrix is $(n/4, n/4, 1/2)$-rigid, the DFT is a concrete example where the cumulative memory or $T \log T$ is $\Omega(n^2 \log d)$; other examples
include generalized Fourier transform
matrices over finite fields~\cite[Lemma 28]{DBLP:journals/jcss/BeameJS01}.

\begin{corollary}\label{cor-fourier-mat-vec}
If $A$ is an $n \times n$ generalized Fourier transform matrix over field $\F$ with characteristic relatively prime to $n$
then any random-access machine computing $f(x) = Ax$ for $x\in \domain^n$ where $\domain\subseteq \F$ has $\norm{\domain}=d$ with probability at least $n^{-O(1)}$ requires cumulative memory that is $\Omega(n^2 \log d)$.
\end{corollary}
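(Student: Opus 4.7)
The plan is to invoke \cref{thm-mat-vec} directly, with the rigidity parameters supplied by the known combinatorial structure of generalized Fourier transform matrices. Specifically, I would first recall (or quote) the fact that when $\F$ has characteristic relatively prime to $n$, the $n\times n$ generalized Fourier transform matrix over $\F$ is $(n/4,\, n/4,\, 1/2)$-rigid. This is exactly Yesha's result for the DFT over $\C$ (or over any field whose characteristic does not divide $n$), and its extension to generalized Fourier transform matrices is Lemma 28 of \cite{DBLP:journals/jcss/BeameJS01}, cited just before the corollary in the text.

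Given this, the proof reduces to a direct instantiation of \cref{thm-mat-vec} with $m=n$, $g(m)=n/4$, $h(n)=n/4$, and $c=1/2$. The theorem asserts, in its unconditional random-access machine part, that computing $f(x)=Ax$ with success probability at least $n^{-O(1)}$ requires cumulative memory
\begin{displaymath}
\Omega\!\left(g(m)\, h(n)\, \log d\right)=\Omega\!\left((n/4)(n/4)\log d\right)=\Omega(n^2\log d),
\end{displaymath}
which is exactly the claim. Note that we do not need the branching-program dichotomy between $T$ being large or $CM(P)$ being large, since the RAM version of \cref{thm-mat-vec} is stated unconditionally; the $\Omega(\log n)$ per-step minimum memory for a RAM absorbs the $T\log T$ term that appears in the branching-program case.

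There is essentially no obstacle here beyond confirming the applicability of the rigidity statement. The only subtlety is to verify that the hypothesis ``characteristic of $\F$ is relatively prime to $n$'' is precisely what is needed for the generalized Fourier transform matrix to be non-singular on the relevant submatrices, which is the content of the cited rigidity lemma; once that is in place, \cref{thm-mat-vec} does all the work. Thus the proof will consist of a single sentence citing the rigidity fact and a second sentence applying \cref{thm-mat-vec}.
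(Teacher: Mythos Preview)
Your proposal is correct and matches the paper's approach exactly: the paper does not even give a separate proof of the corollary, simply noting in the sentence before it that Yesha's $(n/4,n/4,1/2)$-rigidity result for the DFT and its generalization in \cite[Lemma~28]{DBLP:journals/jcss/BeameJS01} make these matrices instances to which \cref{thm-mat-vec} applies. Your unpacking of the RAM clause of \cref{thm-mat-vec} with $g(m)=h(n)=n/4$ and $c=1/2$ is precisely the intended one-line derivation.
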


It is easy to see that our lower bound is asymptotically optimal in these cases.

\begin{proposition}[Theorem 7.1 in \cite{Abr91}]\label{thm-matrix-multiplication-output-bound}
Let $f:\domain^{2n^2} \to \F^{n^2}$ for 
$\domain\subseteq \mathbb{F}$ and $d=|\domain|$ be the matrix multiplication function, $\gamma$ be the constant from \cref{lem-rigid-matrices}, and $\mu$ be the uniform distribution over $(\gamma m, \gamma n)$-rigid matrices. 
Choose any integers $h$ and $k$ such that $2(h/\gamma n)^2 \leq k$. 
If $\gamma n \geq 1$ then for any $\domain$-way branching program $B$ of height $\leq h$ the probability that $B$ produces at least $k$ correct output values of $f$ is at most $d^{2-\gamma k /4}$.
\end{proposition}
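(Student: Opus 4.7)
The plan is to view matrix multiplication as $n$ simultaneous matrix-vector products $Xy_j$ (for the columns $y_j$ of $Y$) and to invoke the matrix-vector rigidity bound of Proposition~\ref{thm-Ax-probability} on each column separately, then to combine the per-column bounds via the independence of the columns of $Y$ under $\mu$. First I would reduce from the branching program to its individual root-to-leaf paths: any path of height at most $h$ corresponds to a restriction $\pi$ of at most $h$ input coordinates together with a partial output assignment $\tau$ of size at least $k$, and it suffices to bound $\Pr_{(X,Y)\sim\mu}[f(X,Y)\text{ extends }\tau\mid \pi]$ uniformly in $(\pi,\tau)$, absorbing the polynomial union bound over paths (and over choices of which $k$ outputs are claimed) into the $d^2$ slack of the target bound.

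Next I would split $\pi$ and $\tau$ column-wise. Let $q_j$ be the number of entries of $y_j$ constrained by $\pi$ and $k_j$ the number of entries in column $j$ of $XY$ constrained by $\tau$, so $\sum_j q_j \leq h$ and $\sum_j k_j = k$; queries to $X$ are handled by treating $X$ as worst-case $(\gamma n,\gamma n)$-rigid, since rigidity of $X$ is preserved under conditioning on any subset of its entries. Call column $j$ \emph{heavy} if $q_j > \gamma n$ and \emph{light} otherwise. By pigeonhole there are at most $h/(\gamma n)$ heavy columns, together contributing at most $n\cdot h/(\gamma n) = h/\gamma$ of the claimed outputs; under the hypothesis $2(h/\gamma n)^2\leq k$ this heavy contribution is at most $n\sqrt{k/2}$, small enough to be absorbed into the additive $d^2$ slack. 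For each light column $j$ with $k_j\leq \gamma n$, Proposition~\ref{thm-Ax-probability} applied to the rigid $X$ (with $c=1$) gives $\Pr[Xy_j\text{ extends }\tau_j\mid \pi_j]\leq d^{-k_j}$, and for $k_j > \gamma n$ truncation to a size-$\gamma n$ sub-assignment yields the still-useful bound $d^{-\gamma n}$.

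Since the columns of $Y$ are mutually independent under $\mu$, these per-column bounds multiply across the light columns, yielding a joint bound roughly of the form $d^{-\Omega(k - h/\gamma)}$, where the constant in the exponent absorbs the loss from truncating over-loaded light columns down to size $\gamma n$. A careful accounting then produces the advertised $d^{2-\gamma k/4}$ bound. The main obstacle is precisely this bookkeeping: verifying that the sum of the heavy-column slack $h/\gamma$, the per-column truncation losses, and the union-bound factors over paths and output choices all fit inside the $d^{2-\gamma k/4}$ envelope given only the quadratic hypothesis $2(h/\gamma n)^2\leq k$, and in particular that this quadratic threshold is exactly the right one to make the heavy-column loss $n\sqrt{k/2}$ negligible compared to $k$ in the exponent. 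The conceptual content is entirely contained in Proposition~\ref{thm-Ax-probability}; the remainder is a column-wise direct-product argument combined with pigeonhole on the distribution of queries among columns.
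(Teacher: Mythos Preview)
The paper does not give its own proof of this proposition---it is quoted from Abrahamson~\cite{Abr91}---so I assess your argument on its own terms. The column-wise direct-product strategy is the right idea, but your heavy-column accounting fails. You bound the number of claimed outputs lying in heavy columns of $Y$ by $n\cdot h/(\gamma n)=h/\gamma\le n\sqrt{k/2}$ and call this absorbable; but for any $k=o(n^2)$ one has $n\sqrt{k/2}\gg k$, so after discarding these outputs the remaining light-column contribution $k-n\sqrt{k/2}$ is negative and the bound is vacuous. This is precisely why the hypothesis is \emph{quadratic} in $h/(\gamma n)$ rather than linear: it is meant to control the \emph{product} of the heavy-row set of $X$ and the heavy-column set of $Y$. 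Each of these sets has size at most $h/(\gamma n)\le\sqrt{k/2}$, so the outputs $(i,j)$ with row $i$ of $X$ heavy \emph{and} column $j$ of $Y$ heavy number at most $(h/\gamma n)^2\le k/2$. Among the remaining $\ge k/2$ outputs, either at least $k/4$ lie in light columns of $Y$ or at least $k/4$ lie in light rows of $X$. In the first case condition on the (rigid) matrix $X$ and run your per-column argument; since every $k_j\le n$ one has $\min(k_j,\gamma n)\ge\gamma k_j$, and the product over light columns is at most $d^{-\gamma k/4}$. In the second case condition on the (rigid) matrix $Y$ and decompose by rows of $X$ instead, using that row $i$ of $XY$ equals $Y^T$ applied to $(\text{row }i\text{ of }X)^T$ and that $Y^T$ inherits $(\gamma n,\gamma n)$-rigidity. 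This two-sided use of rigidity---exploiting that $\mu$ makes \emph{both} factors rigid---is the missing ingredient.

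Two smaller points. First, the columns of $Y$ are not literally independent when $\mu$ is uniform over rigid matrices; one passes to the unconditioned uniform distribution on $\domain^{n\times n}$ and pays the density factor from Proposition~\ref{lem-rigid-matrices}, which is part of what the $d^2$ prefactor absorbs. Second, there is no ``polynomial union bound over paths'': the paths of a branching program partition the input space, so one takes a maximum rather than a sum over paths. Likewise, the passage from a per-restriction bound of type~(B') to the branching-program statement of type~(B) cannot go through a union bound over size-$k$ output subsets, since $\binom{n^2}{k}$ would swamp any $d^{O(1)}$ slack; that step needs separate care.
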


\begin{theorem}\label{thm-mat-mult}
Multiplying two random matrices in $\domain^{n^2}$ with $D\subseteq \mathbb{F}$ and $d=|\domain|$ 
with probability at least $n^{-O(1)}$ requires time $T$ that is $\Omega((n^3 \sqrt{\log d})/\log n)$ or cumulative memory  $\Omega((n^6 \log d )/ T)$.
On random access machines, the cumulative memory bound is unconditional.
\end{theorem}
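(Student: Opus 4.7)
}
The plan is to apply \cref{thm:full-general-bound-classical} directly, using \cref{thm-matrix-multiplication-output-bound} as the source of Property (B) for the matrix-multiplication function $f:\domain^{2n^2}\to \F^{n^2}$. Property (A) is trivial with $\alpha=1$, since the output always consists of $m=n^2$ field elements. For Property (B), \cref{thm-matrix-multiplication-output-bound} states that a height-$h$ branching program produces $k$ correct outputs with probability at most $d^{2-\gamma k/4}$ whenever $2(h/(\gamma n))^2\le k$, which gives $C=d^2$, $K=d^{\gamma/4}$, and $h'(k,n)=\gamma n\sqrt{k/2}$. The value $m'$ can be taken as $\Theta(n^2)$, which is certainly $\omega(\log_2 n)$, and the distribution $\mu$ is the uniform one on rigid matrix inputs guaranteed by \cref{lem-rigid-matrices}.

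Next I would factor $h(s,n)=h'(s/\log_2 K, n)=\gamma n\sqrt{s/(2\log_2 K)}$ into $h_0(s)h_1(n)$ with $h_0(1)=1$: take $h_0(s)=\sqrt{s}$ and $h_1(n)=\gamma n/\sqrt{2\log_2 K}=\Theta(n/\sqrt{\log d})$. Then $s/h_0(s)=\sqrt{s}$ is increasing and concave, and since $h_0$ is a pure power with $\Delta=1/2$, \cref{lem-loss}(b) gives $\Loss_{h_0}(n\log_2|\domain|)=\Omega(1)$, so none of the loss factors shrink the bound.

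I would now compute the critical space $S^*$ from the defining equation $S^*/h_0(S^*)=m\,h_1(n)\log_2 K/(6T)$, i.e. $\sqrt{S^*}=\Theta(n^3\sqrt{\log d}/T)$, giving $S^*=\Theta(n^6\log d/T^2)$. Substituting back, $h(S^*,n)=h_0(S^*)h_1(n)=\Theta(n^4/T)$ and hence the theorem's conclusion $m\,h(S^*,n)\log_2 K=\Theta(n^6\log d/T)$, which is the claimed cumulative memory bound. The alternative case of the theorem, $T\log_2(2CT^{c+1}/\alpha)>\tfrac16 m\,h_1(n)\log_2 K=\Theta(n^3\sqrt{\log d})$, translates (using $T\le n^{O(1)}$ without loss of generality and treating $d$ as $n^{O(1)}$ so $\log C=O(\log n)$) into $T=\Omega(n^3\sqrt{\log d}/\log n)$. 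Part (b) of the theorem then yields the unconditional RAM bound.

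The one point I expect to require some care is bookkeeping the $\log d$ factors: they appear in $K$, in $h_1(n)$, and in $C=d^2$, and the interaction of these in the side condition $T\log_2(2CT^{c+1}/\alpha)\le \tfrac16 m\,h_1(n)\log_2 K$ is what determines the exact shape of the time cutoff. Everything else (convexity, the $\Loss_{h_0}$ bound, the existence of rigid matrices with high probability, and the verification that $m'=\omega(\log n)$) is immediate from the lemmas already cited, so modulo that one calculation the proof is a direct plug-in into \cref{thm:full-general-bound-classical}.
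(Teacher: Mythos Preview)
Your proposal is correct and follows essentially the same route as the paper's proof: invoke \cref{thm:full-general-bound-classical} with the parameters supplied by \cref{thm-matrix-multiplication-output-bound}, set $h_0(s)=\sqrt{s}$, and read off $S^*$ and the two alternatives. The one step you leave implicit is handling the $\min$ in the conclusion of \cref{thm:full-general-bound-classical}: the cumulative memory bound there is $\min\bigl(m\,h(S^*,n),\,3m'\,h'(m'/2,n)\bigr)\log_2 K$, and you need to argue that the first term is the smaller one. The paper does this by observing that the decision tree complexity of matrix multiplication is $\Omega(n^2)$, so $T\ge\Omega(n^2)$, which forces $(n^6\log d)/T$ below the second term; with your choice $m'=\Theta(n^2)$ this check is immediate.
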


\begin{proof}
\cref{thm-matrix-multiplication-output-bound} lets us apply \cref{thm:full-general-bound-classical} with $m = n^2$, $h'(k,n) = \gamma n \sqrt{k/2}$, $C=d^2$, $\alpha=1$, and $K=d^{\gamma/4}$. This gives us that $h(s,n)=n \sqrt{2\gamma s/ \log_2 d}$, so $h_0(s)=\sqrt{s}$. 
Then we get that 
$\sqrt{S^*}=\frac{m n \sqrt{2\gamma /\log_2 d}\cdot  \log_2 K}{6T}$ and hence
\begin{displaymath}S^* \text{ is } \Omega \left(\frac{n^6 \log d}{T^2}\right).\end{displaymath}
Therefore we get that either
\begin{displaymath}T \text{ is } \Omega\left(\frac{n^3 \log^{1/2} d}{\log n}\right)\end{displaymath}
or, since the loss function for $h_0$ is a constant, the cumulative memory is 
\begin{displaymath}\Omega\left(\min \left( (n^6 \log d) / T,  n^5 \log^{1/2} d\right)\right).\end{displaymath}
Since the decision tree complexity of matrix multiplication is $\Omega(n^2)$, this is
$\Omega((n^6 \log d)/T)$.
For random access machines, the same cumulative memory bound applies without the condition on $T$.
\end{proof}

\subsection*{Quantum applications of the generic method}\label{sec-q-examp}
\paragraph*{Disjoint Collision Pairs Finding}
In \cite{HM21} the authors considered the problem of finding $k$ disjoint collisions in a random function $f:[m] \to [n]$, and were able to prove a time-space tradeoff that $T^3S $ is $ \Omega(k^3 n)$ for circuits that solve the problem with success probability $2/3$. Specifically, they consider circuits that must output triples $(x_{j_{2i}}, x_{j_{2i+1}}, y_{j_i})$ where $f(x_{j_{2i}}) = f(x_{j_{2i+1}}) = y_{j_i}$. To obtain this result, they prove the following theorem using the recording query technique:

\begin{proposition}[Theorem 4.6 in \cite{HM21}]\label{thm-k-collision-output-bound}
For all $1 \leq k \leq n/8$ and any quantum circuit $\calC$ with at most $t$ quantum queries to a random function $f:[m] \to [n]$, the probability that $\calC$ produces at least $k$ disjoint collisions in $f$ is at most $O(t^3 / (k^2n))^{k/2} + 2^{-k}$.
\end{proposition}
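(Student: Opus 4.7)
The plan is to prove this via Zhandry's recording-query (compressed oracle) technique, as used by Hamoudi and Magniez in their original work. First I would replace the uniformly random $f:[m]\to[n]$ by its purification together with the compressed standard oracle representation, in which the oracle register is a superposition over partial databases $D\subseteq[m]\times[n]$ whose support has size at most $t$ after $t$ queries. Since the distribution over the adversary's classical output is unchanged when the database is measured at the end, it suffices to bound the amplitude of the final joint state on databases that themselves contain $k$ disjoint collisions, plus an additive ``cheating'' term for adversaries that output collisions never recorded in the database.

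Next I would introduce progress projectors $\Pi_j$ onto databases containing at least $j$ disjoint collision pairs, i.e., $2j$ distinct inputs partitioned into $j$ pairs that agree on $j$ distinct output values. The key lemma is a per-query transition bound: since a compressed-oracle query changes the database by adding, removing, or modifying at most one entry, only $\Pi_{j-1}$ and $\Pi_j$ are linked by a single query, and the explicit $1/\sqrt{n}$ amplitude for planting any specified output value in the compressed oracle gives an operator-norm bound
\begin{displaymath}
\| \Pi_j \, \mathrm{Query} \, \Pi_{j-1} \| \leq O(\sqrt{t/n}),
\end{displaymath}
because the newly added entry must hit one of the at most $t$ already-recorded output values to form a fresh collision, and the $j-1$ already-used values are excluded by the disjointness requirement. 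I would then chain these via a hybrid/telescoping argument: summing over which $k$ of the $t$ queries are the progress queries, the amplitude of the final state on the $\Pi_k$-subspace is at most $\binom{t}{k}\cdot O(\sqrt{t/n})^k$, and squaring together with $\binom{t}{k}\leq(et/k)^k$ yields a probability bound of $(Ct^3/(k^2 n))^{k/2}$ for the event that the compressed database contains $k$ disjoint collisions.

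To pass from ``database contains $k$ disjoint collisions'' to ``adversary outputs $k$ disjoint collisions,'' I would invoke the standard Zhandry lemma that for each fixed $2k$-tuple of distinct inputs, the conditional probability that the compressed oracle returns $k$ specified colliding output pairs not already recorded is at most $n^{-k}\leq 2^{-k}$; since the adversary commits to one such guess, this contributes the additive $2^{-k}$ term, completing the claim. The main obstacle will be obtaining the sharp combinatorics that yields the $k^{-2}$ scaling inside the base rather than the naive $(t^2/n)^k$ that one would get from a union bound over pairs of entries forming collisions. Achieving this requires simultaneously distributing the $k$ progress steps over the $t$ queries and exploiting the fact that disjointness forbids reusing a recorded output, which is the delicate content of the compressed-oracle inequalities that Hamoudi and Magniez prove in the lemmas preceding their Theorem~4.6.
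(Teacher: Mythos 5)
This proposition is imported: the paper states it as Theorem 4.6 of \cite{HM21} and does not reprove it, noting only that the cited authors obtain it ``using the recording query technique.'' Your compressed-oracle sketch, with progress projectors onto databases containing $j$ disjoint collisions, per-query transition bounds of order $\sqrt{t/n}$, a chaining/telescoping sum over which queries make progress, and a Zhandry-type conversion term for collisions not recorded in the database, is exactly the strategy of the cited source, so there is nothing to compare against within this paper itself. Two small cautions: squaring the amplitude bound $\binom{t}{k}\,O(\sqrt{t/n})^{k} \le \bigl(O(t^{3}/(k^{2}n))\bigr)^{k/2}$ gives probability exponent $k$, not $k/2$ (harmless, since that is stronger than the stated bound when the base is below $1$, and trivial otherwise); and the additive $2^{-k}$ term does not come from a naive $n^{-k}$ guessing bound for a single committed tuple, but from the more careful lemma converting ``database contains $k$ disjoint collisions'' into ``the output is correct,'' which is where the hypothesis $k \le n/8$ is actually used.
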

The above theorem can be extended to a lemma matching \cref{lem-quant-generic-lemma} by choosing a sufficiently small constant $\delta$ and setting $T=\delta\, k^{2/3}n^{1/3}$ to obtain a probability of at most $2^{1-k}$.
This is sufficient to obtain a matching lower bound on the cumulative memory complexity using \cref{cor-quant-generic}.
\begin{theorem}\label{thm-q-k-disj-col}
Finding $\omega(\log_2 n) \leq k \leq n/8$ disjoint collisions in a random function $f: [m] \to [n]$ with probability at least $2/3$ requires time $T$ is $\Omega(kn^{1/3}/\log n)$ or cumulative memory $\Omega(k^3 n / T^2)$.
\end{theorem}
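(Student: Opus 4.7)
The plan is to instantiate the quantum generic machinery of \cref{cor-quant-generic} using \cref{thm-k-collision-output-bound} as the supplied instance of \cref{lem-quant-generic-lemma}. First I would pin down the parameters: the task produces $m = k$ disjoint collision triples, and I set $h'(k,n) := \delta k^{2/3} n^{1/3}$ for a sufficiently small constant $\delta$. Plugging $t = h'(k,n)$ into \cref{thm-k-collision-output-bound} shows that any $t$-query circuit finds $k$ disjoint collisions with probability at most $(O(\delta^3))^{k/2} + 2^{-k}$; choosing $\delta$ small and using the hypothesis $k = \omega(\log_2 n)$ so that the $2^{-k}$ tail is dominated, this is at most $2 \cdot 2^{-k}$. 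This gives \cref{lem-quant-generic-lemma} with $K = 2$, $C = 2$, and $m' = n/8$ (the valid range in the proposition).

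With these parameters, $h(s,n) = h'(s/\log_2 K, n) = \delta s^{2/3} n^{1/3}$ factors as $h_0(s) h_1(n)$ with $h_0(s) = s^{2/3}$ and $h_1(n) = \delta n^{1/3}$. The function $h_0$ is differentiable with $h_0(1) = 1$, and $s/h_0(s) = s^{1/3}$ is increasing and concave, so the hypotheses of \cref{cor-quant-generic} hold; by \cref{lem-loss}(b) the loss $\Loss_{h_0}$ is bounded below by a positive constant. Solving the defining equation $(S^*)^{1/3} = S^*/h_0(S^*) = k \delta n^{1/3}/(6T)$ yields $S^* = \Theta(k^3 n / T^3)$, and hence
\begin{displaymath}
m \cdot h(S^*, n) \;=\; k \cdot \delta (S^*)^{2/3} n^{1/3} \;=\; \Theta(k^3 n / T^2).
\end{displaymath}
The other term inside the min, $3 m' \cdot h'(m'/2, n) = \Theta(n^2)$, is no smaller in the regime of interest: the quantum query complexity forces $T = \Omega(k^{2/3} n^{1/3})$, and combining with $k \leq n/8$ gives $k^3 n/T^2 \leq O(k^{5/3} n^{1/3}) \leq O(n^2)$. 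So the corollary's min evaluates to $\Theta(k^3 n / T^2)$.

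Finally I would extract the ``either/or'' from the side condition of the generic theorem (carried over from \cref{thm:full-general-bound-classical}(a) to the quantum setting): either $T \log_2(4 T^{c+1}) > \tfrac{1}{6} m \cdot h_1(n) \cdot \log_2 K = \Theta(k n^{1/3})$, which rearranges to $T = \Omega(k n^{1/3}/\log n)$ using that $T$ is polynomial in $n$ without loss of generality, or the condition holds and \cref{cor-quant-generic} delivers $CM = \Omega(k^3 n / T^2)$. The main obstacle is the very first step: the probability bound from \cref{thm-k-collision-output-bound} is $(O(t^3/(k^2 n)))^{k/2} + 2^{-k}$ rather than the clean $C \cdot K^{-k}$ form required by \cref{lem-quant-generic-lemma}, so the hypothesis $k = \omega(\log_2 n)$ must be used carefully to absorb the additive $2^{-k}$ tail into a single $K^{-k}$-style bound. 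Once that bookkeeping is done, the rest is a transparent application of \cref{cor-quant-generic} in the polynomial-$h_0$ case of \cref{lem-loss}.
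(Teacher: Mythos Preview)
Your proposal is correct and follows essentially the same route as the paper: instantiate \cref{cor-quant-generic} via \cref{thm-k-collision-output-bound} with $h'(j,n)=\delta j^{2/3}n^{1/3}$, $K=C=2$, compute $S^*=\Theta(k^3n/T^3)$, and use the quantum query lower bound $T=\Omega(k^{2/3}n^{1/3})$ to show the $\min$ is attained by the $k^3n/T^2$ term. The only substantive difference is your choice $m'=n/8$ versus the paper's $m'=k$; both are valid (for $j>k$ the ``$\ge j$ correct outputs'' event is vacuous), and your larger $m'$ simply makes the alternate branch $3m'\,h'(m'/2,n)=\Theta(n^2)$ instead of $\Theta(k^{5/3}n^{1/3})$, which you then dispose of with the extra step $k^{5/3}n^{1/3}\le O(n^2)$. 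One small point: absorbing the additive $2^{-k}$ tail into $2\cdot 2^{-k}$ only requires $\delta$ small (so that $(O(\delta^3))^{k/2}\le 2^{-k}$), not $k=\omega(\log n)$; the latter hypothesis is what the paper actually uses, via its choice $m'=k$, to meet the requirement $m'=\omega(\log_2 n)$ in \cref{cor-quant-generic}.
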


\begin{proof}
Our discussion based on \cref{thm-k-collision-output-bound} lets us apply \cref{cor-quant-generic} with $m=m'=k, h'(k,n) = \delta k^{2/3}n^{1/3}$, and $C=K=2$. Thus we have $h(s,n) = h'(s,n)$ and $h_0$ is a differentiable function where $s/h_0(s)$ is an increasing and concave function. With these parameters, we have:
$$S^* \text{ is } \Omega\left(\frac{k^3n}{T^3}\right)$$
By \cref{cor-quant-generic} with the observation that the loss is constant we get that:
$$T \text{ is } \Omega \left(\frac{kn^{1/3}}{\log n}\right)$$
or the quantum cumulative memory is:
$$\Omega\left(\min\left(\frac{k^3n}{T^2}, k^{5/3} n^{1/3}\right)\right).$$
By \cref{thm-k-collision-output-bound} we know that any quantum circuit with at most $T' = \alpha k^{2/3}n^{1/3}$ layers can produce $k$ disjoint collisions with probability at most $2^{1-k}$. Thus we know that $T>T'$ and our cumulative memory bound becomes $\Omega(k^3n/T^2)$.
\end{proof}

\paragraph*{Linear Inequalities and Boolean Linear Algebra}

We now consider problems in Boolean linear algebra where we write $A\bullet x$ for Boolean (i.e. and-or) matrix-vector 
product and $A\bullet B$ for Boolean matrix multiplication. In \cite{KSdW07} the authors prove the following time-space tradeoff for Boolean matrix vector products:

\begin{proposition}[Theorem 23 in \cite{KSdW07}]\label{prop-bool-mat-vec}
For every $S$ in $o(n/\log n)$, there is an $n \times n$ Boolean matrix $A_S$ such that every bounded-error quantum circuit with space at most $S$ that computes Boolean matrix vector product $A_S \bullet x$ in $T$ queries requires that $T$ is  $\Omega(\sqrt{n^{3}/S})$.
\end{proposition}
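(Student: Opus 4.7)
The plan is to prove this by the standard quantum time-space paradigm that lies behind Lemma~\ref{lem-quant-generic-lemma}: pick a hard matrix $A_S$, establish a direct-product style bound on how many correct coordinates of $A_S\bullet x$ a short sub-circuit can produce, and then apply the blocking argument combined with \cref{prop-quant-union} to turn this into a lower bound on $T$ for a space budget of $S$. The underlying observation is that $(A_S\bullet x)_i=\bigvee_{j\colon (A_S)_{ij}=1}x_j$, so each output coordinate is exactly a Grover/search instance over a specific subset of the input.

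For the matrix I would set $r=\Theta(n/S)$ and take $A_S$ to be an $n\times n$ Boolean matrix in which each row is the indicator of an $r$-subset of $[n]$, chosen so that any two distinct rows have overlap $o(r)$; a random construction and the probabilistic method yield such an $A_S$. The hypothesis $S\in o(n/\log n)$ translates into $r=\omega(\log n)$, which is exactly the regime where a strong direct product theorem for Grover is non-trivial. I would pair $A_S$ with a product distribution $\mu$ on $\{0,1\}^n$ (for instance $\Pr[x_j=1]\approx 1/r$ independently) under which a typical output is the balanced hard case of OR on $r$ bits. The core technical claim I would establish is an instance of Lemma~\ref{lem-quant-generic-lemma}: there exist constants $\alpha,\beta>0$ such that any quantum circuit making at most $\alpha k\sqrt{r}$ queries, starting from a fixed initial state, computes any specified $k$ coordinates of $A_S\bullet x$ correctly (for $x\sim\mu$) with probability at most $2^{-\beta k}$. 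This would be proved by invoking a strong direct product theorem for quantum search (Klauck-\v{S}palek-de Wolf, or Aaronson-Shi/Ambainis-de Wolf).

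With this lemma in hand, I would partition a hypothetical $S$-space, $T$-query circuit into blocks of length $h=\Theta(S\sqrt{r})$ and set $k=\Theta(S)$. Each block starts with at most $S$ qubits of input-dependent state, so by \cref{prop-quant-union} the probability that a fixed block outputs $k$ specified correct coordinates is at most $2^{2S}\cdot 2^{-\beta k}$, which is $o(1/T)$ once the constant in $k$ is chosen large enough. A union bound over the $T/h$ blocks then shows that with high probability no block outputs more than $k$ coordinates, so the total number of correctly produced coordinates is at most $(T/h)\cdot k=O(T/\sqrt{r})$. Since the circuit must produce all $n$ coordinates of $A_S\bullet x$, we get $T=\Omega(n\sqrt{r})=\Omega(n\sqrt{n/S})=\Omega(\sqrt{n^3/S})$, as required.

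The main obstacle is the direct-product step: direct product theorems are usually stated for $k$ genuinely independent instances of search, but here the $k$ ORs share a common input oracle $x$, and one must rule out amortizing queries across rows. This is precisely why the rows of $A_S$ are chosen to have nearly disjoint support and $\mu$ is taken to be a product distribution, so the $k$ ORs behave like essentially independent Grover instances on $r$-bit inputs. The $2^{2S}$ blow-up in \cref{prop-quant-union} already absorbs whatever correlations the $S$-qubit input-dependent state could encode, so once $k$ is taken to dominate $S$ there is no slack left for the algorithm to exploit shared queries, and the direct product bound carries over cleanly.
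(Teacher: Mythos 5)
Your high-level plan --- block the circuit into short segments, use Aaronson's lemma (\cref{prop-quant-union}) to absorb the input-dependent initial state at the cost of a $2^{2S}$ factor, and invoke a strong direct product theorem for $k$ ORs to bound how many output coordinates a short block can produce --- is the right paradigm, and matches the template that \cite{KSdW07} use and that this paper abstracts in \cref{lem-quant-generic-lemma}. The gap is in the direct-product step, and your proposed remedy does not close it.

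The condition ``any two distinct rows have overlap $o(r)$'' is far too weak to conclude that $k=\Theta(S)$ of the ORs behave like independent Grover instances. For random $r$-subsets of $[n]$ with $r=\Theta(n/S)$ and $k=\Theta(S)$, the typical overlap of a single row with the union of the other $k-1$ is of order $(k-1)r^2/n=\Theta(r)$ --- a constant fraction of $r$, not $o(r)$ --- and the set of coordinates lying in at least two of the $k$ rows has size $\Theta(n)$, so conditioning on those shared coordinates being $0$ under a product distribution with $\Pr[x_j=1]\approx 1/r$ only happens with probability $2^{-\Theta(S)}$, wiping out the $2^{-\beta k}$ you are trying to establish. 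Neither ``disjointification'' nor conditioning reduces these $k$ overlapping ORs to the disjoint tensor-product setting that the Klauck--\v{S}palek--de Wolf / Ambainis--\v{S}palek--de Wolf SDPTs actually address. Your fallback --- that the $2^{2S}$ factor from \cref{prop-quant-union} ``absorbs whatever correlations'' arise --- conflates two unrelated things: Aaronson's lemma only replaces the unknown $S$-qubit \emph{initial state} of a block with the maximally mixed state; it says nothing about how queries made \emph{during} the block can simultaneously advance several overlapping ORs. What is needed is a direct-product-type statement proved \emph{for the specific family} $A_S$ (which, as this paper's own footnote near \cref{prop-lin-inequalities} cautions, may even need to be tailored to the circuit's output-gate layout, not just to $S$), or a reduction to genuinely disjoint OR instances that survives at the scale $k=\Theta(S)$. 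Small pairwise overlaps alone do not supply either.
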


This result is weaker than a standard time-space tradeoff since the function involved is not independent of the circuits that might compute
it.
In particular, \cite{KSdW07} does not find a single function that is hard for all space bounds, as the matrix $A$ that they use changes depending on the value of $S$. 
For example, a circuit using space $S' \gg S$ could
potentially compute $A_S\bullet x$ using $o(n^{3/2}/(S')^{1/2})$ 
queries.
This means that an extension of their bound to cumulative memory complexity does not follow from our \cref{cor-quant-generic}, as blocks with distinct numbers of initial qubits would be computing outputs for different functions. In \cite{ASdW09} the authors use the same space-dependent matrices to prove a result for systems of linear inequalities.

\begin{proposition}[Theorem 19 in \cite{ASdW09}]\label{prop-lin-inequalities}
Let $S$ be in $\min(O(n/t), o(n/\log n))$ and $\vec{t}$ be the all-$t$ vector. There is an $n \times n$ Boolean matrix $A_S$ such that every bounded error quantum circuit using space $S$ for evaluating the system $A_S x \geq \vec{t}$ using $T$ queries requires $T$ that is $\Omega(\sqrt{(tn^3/S)})$.
\end{proposition}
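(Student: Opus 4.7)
The plan is to follow the same blocking-plus-direct-product template that \cite{KSdW07} used for Boolean matrix--vector product, but with the $t$-threshold quantum query lower bound (and its direct product version) replacing the OR lower bound. Concretely, I would first construct the matrix $A_S$: as in \cite{KSdW07}, pick $A_S$ probabilistically so that each row contains a carefully calibrated number of ones and different rows overlap in a controlled way. The goal of the construction is that for a suitably chosen input distribution $\mu$ on $x \in \{0,1\}^n$, evaluating $A_S x \geq \vec t$ is essentially $n$ copies of a $t$-threshold problem whose coordinates are scattered enough that producing many outputs cannot be amortized much below the single-output cost $\Omega(\sqrt{tn})$.

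Second, I would establish a quantum direct product theorem for $t$-threshold of the form: any quantum circuit with $q$ queries that, starting from a maximally mixed $S$-qubit advice state, outputs $k$ correct $t$-threshold answers over the rows of $A_S$ succeeds with probability at most $2^{2S}\cdot (Cq^2/(ktn))^{k/2}$. The single-instance bound is the standard $\Omega(\sqrt{tn})$ lower bound for approximate counting/threshold, and the direct product step would follow either the polynomial-method/adversary-method direct product arguments used in \cite{KSdW07} or the recording-query approach of \cite{Zha19,HM21}. The factor $2^{2S}$ appears by applying Aaronson's reduction (\cref{prop-quant-union}) to convert input-dependent initial state at a block boundary into a maximally-mixed state.

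Third, I would block the circuit. Divide the $T$ queries into $\ell$ equal blocks; each block begins with at most $S$ qubits of input-dependent state and makes $T/\ell$ queries. Set the per-block target $k^\ast$ of correct outputs so that the direct product bound gives failure probability $\ll 1/\ell$; solving for $k^\ast$ yields $k^\ast = \Theta(S + \log \ell)$, which is $\Theta(S)$ in the regime $S = \Omega(\log n)$. Since the circuit must produce all $n$ outputs correctly with constant probability, a union bound across blocks forces $\ell \cdot k^\ast \geq \Omega(n)$, i.e.\ $\ell = \Omega(n/S)$. Plugging this into the direct product bound gives $T/\ell = \Omega(\sqrt{k^\ast \cdot tn}) = \Omega(\sqrt{Stn})$, and multiplying yields $T = \Omega((n/S)\sqrt{Stn}) = \Omega(\sqrt{tn^3/S})$, as required. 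The constraint $S = O(n/t)$ enters here to guarantee $k^\ast t \leq n$ so that the $t$-threshold direct product theorem is actually in its nontrivial regime on each row, and $S = o(n/\log n)$ ensures the log-factor slack in the direct product is absorbed.

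The main obstacle is the middle step, the quantum $t$-threshold direct product theorem with advice state. The single-instance $t$-threshold bound is classical, but turning it into a strong enough direct product statement that survives the $2^{2S}$ loss from \cref{prop-quant-union} requires an XOR-type or state-generation strong direct product theorem, not just a trivial per-instance repetition. The construction of $A_S$ has to be matched to this theorem: the row supports must be sufficiently spread (a rigidity-style property) that the $n$ threshold instances behave like independent subproblems for the purposes of the direct product argument, while still being dense enough that $t$ ones out of $n$ is the genuine threshold question on each row. Getting these two requirements to be simultaneously satisfiable---and tight---for the full range of $S$ down to $o(n/\log n)$ is where the real work lies.
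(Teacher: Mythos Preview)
The paper does not contain a proof of this proposition at all: it is stated purely as a citation of Theorem~19 in \cite{ASdW09}, and the surrounding discussion only uses it to observe that the result is \emph{not} a true time--space tradeoff (the matrix $A_S$ depends on $S$) and is therefore not amenable to the paper's cumulative-memory machinery. So there is no ``paper's own proof'' to compare your proposal against.

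That said, your sketch is a faithful outline of the argument in \cite{ASdW09}: block the circuit, apply \cref{prop-quant-union} to handle the input-dependent state at block boundaries, and invoke a strong direct product theorem for $t$-threshold to bound the number of correct outputs per block. The arithmetic you do in the third paragraph is the standard one and lands on the right bound. You are also right that the entire weight of the argument rests on the strong direct product theorem for $t$-threshold with the correct exponential decay; \cite{ASdW09} proves exactly such a theorem (via the polynomial method, extending the $k$-threshold SDPT of \cite{KSdW07}), so your ``main obstacle'' is genuinely the content of that paper rather than something you would need to invent. One small correction: the matrix $A_S$ in \cite{ASdW09} is not chosen probabilistically with a rigidity-style overlap property as you suggest, but is built more explicitly so that disjoint groups of rows correspond to independent threshold instances on disjoint coordinate blocks; the independence is structural rather than pseudorandom.
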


Again this result is not a general time-space tradeoff and hence is not compatible with  obtaining a true cumulative memory bound\footnote{The analogous cumulative complexity result would require the matrix $A$ to depend extensively on the structural properties of the circuit, including the number of qubits after each layer and the locations of each fixed output gate.  It is unclear whether the TS results also may need the matrix $A_S$ to depend on the locations of the output gates.}. While neither of the above results is a time-space
tradeoff for a fixed function,
\cite{KSdW07} leverages the ideas for
\cref{prop-bool-mat-vec} to compute a true
time-space tradeoff lower bound for computing 
Boolean matrix multiplication. 

\begin{proposition}[Theorem 25 in \cite{KSdW07}]
\label{prop-quantum-matrix-product}
If a quantum circuit computes the Boolean matrix product $A\bullet B$ with bounded error using $T$ queries and $S$ space, then $TS$ is $\Omega(n^5/T)$.
\end{proposition}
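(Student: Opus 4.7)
The plan is to adapt the standard blocking methodology for quantum time-space tradeoffs, with a direct product theorem for Boolean matrix multiplication as the key ingredient. I would partition the circuit's $T$ queries into $\ell$ consecutive blocks of $t = T/\ell$ queries each, where $t$ will be chosen below. At the start of each block, the input-dependent state occupies at most $S$ qubits.

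The heart of the proof is a direct product theorem: for a suitable input distribution over $\{0,1\}^{n\times n} \times \{0,1\}^{n\times n}$, any quantum circuit without input-dependent initial state that makes only $t \le c\sqrt{kn}$ queries correctly produces any fixed set of $k$ entries of $A \bullet B$ with probability at most $2^{-k}$, for a small constant $c$. The natural route to this statement exploits the fact that each entry $C_{ij} = \bigvee_{\ell} (A_{i\ell} \wedge B_{\ell j})$ is an $n$-bit OR: by choosing the distribution on $A$ and $B$ carefully (for instance, taking $B$ to consist of well-separated indicator columns and $A$ random, or vice versa), a set of $k$ carefully chosen output entries behaves as $k$ essentially ``disjoint'' OR instances, after which a direct product theorem for OR of the flavor used in \cite{KSdW07} for \cref{prop-bool-mat-vec} yields the desired bound.

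Next, I would combine this direct product theorem with \cref{prop-quant-union} to handle the input-dependent initial state at each block boundary: the per-block success probability for producing $k$ outputs becomes at most $2^{2S} \cdot 2^{-k}$. Setting $k = 2S + \log(2T) + O(1)$ and $t = c\sqrt{kn}$ makes the per-block failure probability at most $1/(2T)$, so by a union bound over the $\ell = T/t$ blocks, with probability at least $1/2$ every block produces at most $k$ correct outputs. Since the algorithm must produce all $n^2$ entries with bounded error, we obtain $k\ell \ge \Omega(n^2)$, and substituting $\ell = T/t = T/\Theta(\sqrt{Sn})$ with $k = \Theta(S)$ gives $T\sqrt{S/n} = \Omega(n^2)$, equivalently $T^2 S = \Omega(n^5)$, i.e., $TS = \Omega(n^5/T)$.

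The main obstacle is establishing the direct product theorem with the right scaling: extracting $k$ ``independent-enough'' OR instances from Boolean matrix multiplication so that the single-output Grover-style $\Omega(\sqrt{n})$ query lower bound aggregates correctly to $\Omega(\sqrt{kn})$ for $k$ outputs. The subtlety is that matrix entries are heavily shared across different outputs $C_{ij}$, so a naive application of an OR direct product theorem does not immediately apply; the proof needs either a careful choice of input distribution that decouples $k$ specific outputs, or a polynomial-method argument that treats the full matrix multiplication function as a collection of low-degree approximations and bounds how much ``progress'' any $t$-query circuit can make toward $k$ outputs simultaneously.
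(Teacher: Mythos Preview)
Your outline matches the approach the paper attributes to \cite{KSdW07}: block the circuit, invoke a direct-product lemma of the form in \cref{lem-quant-matrix-mult-lemma} (any circuit with $\alpha\sqrt{kn}$ queries produces a fixed set of $k$ outputs with probability at most $2^{-\gamma k}$), lift to blocks with input-dependent state via \cref{prop-quant-union}, and finish with $k=\Theta(S)$, $t=\Theta(\sqrt{Sn})$, and a union bound to get $T^2S=\Omega(n^5)$.

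Two points where your sketch is looser than the paper's account. First, the direct-product lemma only holds for $k\in o(n)$; your choice $k=2S+\log(2T)+O(1)$ therefore needs $S=o(n)$. The paper handles $S=\Omega(n)$ separately by noting that $T=\Omega(n^2)$ just to read the input, so $T^2S=\Omega(n^5)$ holds trivially there. You should add that case split. Second, the hard distribution in \cref{lem-quant-matrix-mult-lemma} is circuit-dependent ($\mu_{\calC}$), not a single universal distribution as your phrasing ``a suitable input distribution'' suggests; because both $A$ and $B$ are inputs, the lower bound argument is allowed to pick the hard input after seeing the circuit's structure (in particular, after seeing which output entries each block is wired to produce). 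Your intuition about planting disjoint OR instances via structured $B$ is on the right track, but the actual statement you need is the circuit-dependent one, and your write-up should reflect that.
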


In \cref{prop-quantum-matrix-product}, unlike in \cref{prop-bool-mat-vec} and \cref{prop-lin-inequalities},
both $A$ and $B$ are inputs to the problem. 
This allows the lower bound argument to use the
properties of the circuit to find matrices $A$ and $B$
for which the circuit will be particularly
challenged.
More precisely, to prove the above result, the authors use a lemma matching the form of \cref{lem-quant-generic-lemma} that 
 we extract from their lower bound argument.
\begin{proposition}[from Theorem 25 in \cite{KSdW07}]\label{lem-quant-matrix-mult-lemma}
Let $R \subseteq [n] \times [n]$ be any fixed set of $k \in o(n)$ outputs to the function $f(A,B) = A\bullet B$. Then there are constants $\alpha, \gamma > 0$ such that for any quantum circuit $\calC$ with at most $\alpha \sqrt{kn}$ layers, there is a distribution $\mu_\calC$ over pairs of matrices such that when $(A,B) \sim \mu_\calC$, the probability that $\calC$ produces the correct values for $R$ is at most $2^{-\gamma k}$.
\end{proposition}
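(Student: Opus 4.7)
The plan is to exhibit a distribution $\mu_\calC$ on matrix pairs $(A,B)$ under which computing the $k$ outputs of $A\bullet B$ at the positions in $R$ reduces to a $k$-fold direct product of quantum search (OR) problems on $n$-bit inputs, and then invoke the quantum direct product theorem for search to bound the success probability of any $\alpha\sqrt{kn}$-query algorithm by $2^{-\gamma k}$ for suitable constants $\alpha,\gamma>0$.

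The distribution is built by examining the bipartite graph $G_R$ whose two sides are the row- and column-indices appearing in $R$ and whose edges are $R$ itself. In two clean regimes the reduction is immediate. If $G_R$ contains a matching of size $t=\Omega(k)$, embed an independent hidden-one string $x_l\in\{0,1\}^n$ into each matched pair $(i_l,j_l)$ by setting the whole row $A_{i_l,\cdot}:=x_l$ and the whole column $B_{\cdot,j_l}:=1^n$; distinctness of the rows and of the columns in the matching makes the assignments non-conflicting, and each output satisfies $(A\bullet B)_{i_l,j_l}=\mathrm{OR}(x_l)$. Dually, if $G_R$ contains a star of size $t=\Omega(k)$ in some single row $i$, fix $A_{i,\cdot}:=1^n$ and embed independent hidden-one strings into the columns $B_{\cdot,j}$ indexed by the star; the column-star case is symmetric. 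In both regimes one obtains $t=\Omega(k)$ independent $n$-bit OR problems on pairwise disjoint bits of $(A,B)$, so the DPT for search yields the desired bound $2^{-\gamma k}$ once $T\le\alpha\sqrt{kn}$.

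The main obstacle is the intermediate regime where $G_R$ admits neither a matching nor a star of size $\Omega(k)$ --- for instance when $R$ is nearly a $\sqrt k\times\sqrt k$ sub-grid, in which both the largest matching and the largest star have size only $\sqrt k$. A naive reduction in that case extracts just $\sqrt k$ independent $n$-bit searches and therefore yields the strictly weaker $T\le\alpha k^{1/4}\sqrt n\Rightarrow\text{success}\le 2^{-\gamma\sqrt k}$. Closing this gap is the technical heart of the argument; plausible routes include (i) decomposing $R$ as a short union of matchings and stars and composing DPTs across the pieces, (ii) choosing $A$ and $B$ to have random Bernoulli entries of appropriately tuned density so that every position in $R$ simultaneously encodes a random witness-OR problem of effective size $\Theta(n)$ with near-independent witnesses, or (iii) exploiting the permitted dependence of $\mu_\calC$ on $\calC$ --- in particular on the input-independent layer at which $\calC$ produces each output of $R$ --- to plant the $k$ search instances adaptively block-by-block rather than in a single global reduction. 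Once $\Omega(k)$ effectively independent $\Omega(n)$-bit OR instances have been exhibited, the final appeal to the quantum direct product theorem is routine.
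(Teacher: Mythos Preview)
Your proposal correctly identifies that a large matching or a large star in the bipartite graph $G_R$ immediately yields $\Omega(k)$ disjoint $n$-bit OR instances, but you explicitly leave the intermediate regime (e.g.\ the $\sqrt{k}\times\sqrt{k}$ sub-grid) unresolved, and none of your three suggested routes is the one that actually works in~\cite{KSdW07}.

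The missing idea is much simpler than your options (i)--(iii) and does not require any case split between matchings and stars. Let $r\le k$ be the number of distinct row indices appearing in $R$. Since $k\in o(n)$ we have $r<n$, so we may partition $[n]$ into $r$ blocks $P_1,\ldots,P_r$ each of size $\lfloor n/r\rfloor$ or $\lceil n/r\rceil$. Fix $A$ deterministically by setting the $l$-th distinct row of $A$ to be the indicator vector of $P_l$ (and all other rows of $A$ to $0$). Then for every position $(i_l,j)\in R$,
\[
(A\bullet B)_{i_l,j}=\bigvee_{m\in P_l} B_{m,j},
\]
an OR over the entries of $B$ indexed by $P_l\times\{j\}$. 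Any two distinct positions $(i_l,j),(i_{l'},j')\in R$ give ORs over disjoint index sets (different columns if $j\ne j'$; different blocks of the same column if $j=j'$ and $l\ne l'$). Thus \emph{all} $k$ outputs, not merely $\sqrt{k}$ of them, become independent OR instances on $\lfloor n/r\rfloor$ disjoint bits of $B$. Applying the strong direct product theorem for OR to $k$ instances of size $N=\lfloor n/r\rfloor$ gives success probability at most $2^{-\gamma k}$ whenever $T\le \alpha k\sqrt{N}$; since $r\le k$ we have $k\sqrt{n/r}\ge \sqrt{kn}$, so the same bound holds for $T\le \alpha'\sqrt{kn}$. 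The hard distribution $\mu_\calC$ therefore depends only on $R$, not on $\calC$: it fixes $A$ as above and lets the relevant blocks of $B$ be independent hard instances of OR.

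In short, the trick you were looking for is not to extract a large matching or star from $G_R$, but to make the $r$ active rows of $A$ have \emph{disjoint supports}; this automatically disentangles all $k$ outputs regardless of the combinatorial structure of $R$, and is precisely why the hypothesis $k\in o(n)$ (ensuring $n/r\ge n/k\to\infty$) is needed.
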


Note that, though there are $\Omega(n^2)$
total output values,
\cref{lem-quant-matrix-mult-lemma} only works when k --- the number of output values in a block --- is sublinear in $n$. 
This is not a problem in the time-space tradeoff lower bound. \cref{lem-quant-matrix-mult-lemma} upper bounds the value of $k$ for a block as $O(S)$.
Since the time $T$ must be $\Omega(n^2)$ simply to read the input, the bound $T^2S = \Omega(n^5)$ trivially holds when $S$ is $\Omega(n)$. Thus the time-space tradeoff proof only needs to apply \cref{lem-quant-matrix-mult-lemma} when $S$ (and therefore $k$) is sublinear in $n$.

We cannot apply such an argument when considering cumulative memory complexity, as a circuit can use $\Omega(n)$ qubits for a small number of layers without having an asymptotic effect on the cumulative memory complexity.
However, if we consider $o(n)$ space bounded computation, we can get a matching bound on the cumulative memory complexity.

\begin{theorem}\label{thm-q-bool-mat-mul}
Any quantum circuit that computes the Boolean matrix product $A\bullet B$ requires $\Omega(n)$ ancilla qubits or cumulative memory that is $\Omega(n^5/T)$.
\end{theorem}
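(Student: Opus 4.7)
The plan is to apply the quantum generic method (\cref{cor-quant-generic}) to the Boolean matrix multiplication function $f(A,B)=A\bullet B$, using \cref{lem-quant-matrix-mult-lemma} as our instance of the quantum generic \cref{lem-quant-generic-lemma}. From that lemma we read off the parameters $K=2^{\gamma}$, $C=1$, and $h'(k,n)=\alpha\sqrt{kn}$, where the bound is valid for $k\in o(n)$; so we take $m'=\delta n$ for a sufficiently small constant $\delta>0$, and the total number of outputs is $m=n^2$. Converting, we obtain $h(s,n)=h'(s/\log_2 K,n)=\alpha\sqrt{sn/\gamma}$, from which we read off $h_0(s)=\sqrt{s}$ (absorbing the constant $\sqrt{\alpha^2/\gamma}$ into $h_1$) and $h_1(n)=\Theta(\sqrt{n})$. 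The function $h_0(s)=\sqrt{s}$ satisfies $h_0(1)=1$, and the map $s\mapsto s/h_0(s)=\sqrt{s}$ is differentiable, increasing, and concave; moreover, by \cref{lem-loss}(b), $\Loss_{h_0}$ is bounded below by a constant.

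Next, I solve for $S^*$ from the defining relation of \cref{cor-quant-generic}:
\begin{displaymath}
\sqrt{S^*}=\frac{S^*}{h_0(S^*)}=\frac{m\,h_1(n)\,\log_2 K}{6T}=\Theta\!\left(\frac{n^{5/2}}{T}\right),
\end{displaymath}
so $S^*=\Theta(n^5/T^2)$ and $h(S^*,n)=\Theta(\sqrt{S^*\, n})=\Theta(n^3/T)$. Because the theorem's hypothesis is that the circuit uses $o(n)$ ancilla qubits, the overall space of the circuit is $o(m'\log K)$, so I invoke the space-bounded variant of \cref{cor-quant-generic} (analogous to \cref{rem-generic-space-bounded}), which yields the single-term lower bound
\begin{displaymath}
CM(\calC)\;\ge\;\tfrac{1}{6}\,\Loss_{h_0}(n\log_2|\domain|)\,\cdot\,m\,\cdot\,h(S^*,n)\,\cdot\,\log_2 K\;=\;\Omega\!\left(\frac{n^5}{T}\right),
\end{displaymath}
as required. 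The second case of \cref{cor-quant-generic}, which would give the alternative term $m'\,h'(m'/2,n)$, is precisely what would be triggered if the block targets $k_i$ exceeded $m'=\Theta(n)$; this is ruled out exactly by the $o(n)$ ancilla-qubit assumption, and conversely this case is where the other alternative in the theorem statement --- the circuit using $\Omega(n)$ ancilla qubits --- comes from.

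The main obstacle, and the only genuine subtlety here, is the fact that the distribution $\mu_{\calC}$ promised by \cref{lem-quant-matrix-mult-lemma} depends on the quantum circuit itself (indeed on the specific sub-circuit being analyzed). To fit this into the generic block-decomposition argument, I need to verify that when the lemma is applied to each of the sub-circuits obtained by cutting $\calC$ at block boundaries, the resulting distributions can be combined to yield a single input distribution against which the full circuit fails. This is exactly the pattern used by~\cite{KSdW07} in proving \cref{prop-quantum-matrix-product}, and it is compatible with the structure of \cref{cor-quant-generic} because the generic lemma \cref{lem-quant-generic-lemma} is stated precisely in this circuit-dependent form; the union bound over at most $T\cdot 2^{S_i}$ sub-circuit starting states then proceeds exactly as in the proof of \cref{thm:full-general-bound-classical}, with \cref{prop-quant-union} handling the input-dependent initial state of each block.
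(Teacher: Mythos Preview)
Your proposal is correct and follows essentially the same approach as the paper: apply \cref{cor-quant-generic} with the parameters extracted from \cref{lem-quant-matrix-mult-lemma}, namely $m=n^2$, $h'(k,n)=\alpha\sqrt{kn}$, $h_0(s)=\sqrt{s}$, solve for $S^*=\Theta(n^5/T^2)$, and invoke the space-bounded variant under the $o(n)$-qubit hypothesis. Your final paragraph about the circuit-dependent distribution $\mu_{\calC}$ is more explicit than the paper's own proof, which simply invokes the corollary directly; the concern you raise is legitimate but is indeed absorbed into the statement of \cref{lem-quant-generic-lemma} and the structure of \cref{cor-quant-generic}.
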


\begin{proof}
\cref{lem-quant-matrix-mult-lemma} lets us apply \cref{cor-quant-generic} with $m'$ being $o(n),\ m=n^2,\ h'(k,n)=\alpha \sqrt{kn}$, and $K = 2^{1/\gamma}$. Thus we have $h(s,n) = h'(s/\gamma,n) = \alpha \sqrt{sn/\gamma}$ and $h_0(s) = \sqrt{s}$. Therefore we define $S^*$ to be
$$S^* = \frac{\gamma \alpha^2 n^5}{36 T^2}$$
Thus by \cref{cor-quant-generic} we get that the space bound is $\Omega(n)$ or the cumulative memory is $\Omega(n^5/T)$.
\end{proof}

Though this is somewhat limited in its range of applicability, it still yields
a generalization of the time-space
tradeoff lower bound of \cref{prop-quantum-matrix-product} when $S$ is $o(n)$.
\section{Cumulative memory complexity of single-output functions}
\label{sec-single}

The time-space tradeoff lower bounds known for classical algorithms computing single-output functions are quite a bit weaker than those for multi-output functions, but the bounds we can obtain on cumulative memory for slightly super-linear time bounds are nearly as strong as those for multi-output functions. 

For simplicity we focus on branching programs with Boolean output, in which case, we can simply assume that the 
output is determined by which 
of two nodes the branching
program reaches at time step $T$.

The general method for bounds for single output functions is based on the notion of the \emph{trace} of a branching program computation. 
We fix a branching program $P$ computing $f:\domain^n\rightarrow \set{0,1}$.
As in the case of the simple bounds for multi-output functions, we break up $P$ into a sequence of blocks, say $\ell$ of them, that are separated by time steps
$0=t_1,\ldots, t_{\ell}, t_{\ell+1}=T$.
A trace $\tau$ in $P$ is a sequence of $\ell$ nodes of $P$, one node in the set of nodes $L_{t_i}$ at time step $t_i$ for each $i=1,\ldots, \ell$.   
The set of all traces $\mathcal{T}=L_{t_1}\times \cdots \times L_{t_{\ell}}$.

A key object under consideration is the notion of an \emph{embedded rectangle}, which is a subset
of $R\subseteq \domain^n$ with associated disjoint subsets $A\subset [n]$ and $B\subset [n]$ with
$|A|=|B|=m(R)=m$ and assignment $\sigma\in \domain^{[n]-A-B}$ such that $R=R_A\times R_B\times \sigma$.
We write $\alpha(R)=\min(|R_A|, |R_B|)/|\domain|^m$.

\begin{proposition}[Implicit in Corollary 5.2 of~\cite{DBLP:journals/jacm/BeameSSV03}]
\label{rect-prop}
Let $P$ be a branching program  of length $T$ computing a function $f:\domain^n\rightarrow \{0,1\}$.  
Suppose that
$T\le kn$ for $k\ge 4$ and
$n\ge \ell\ge k^2 2^{k+6}$. 
If $0=t_1 < t_2 < \cdots < t_{\ell+1}=T$ are time steps with $t_{i+1}-t_i\le n/(k 2^{k+6})$, 
then there is an embedded rectangle $R\subseteq f^{-1}(1)$ with $m(R)=m\ge n/2^{k+1}$ and
$\alpha(R)\ge 2^{-12 (k+1) m-2}\cdot |\mathcal{T}|^{-1}\cdot |f^{-1}(1)|/|\domain|^n$
where $\mathcal{T}$ is the set of traces of $P$ associated with time steps $t_1,\ldots, t_\ell$.
\end{proposition}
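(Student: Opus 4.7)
The plan is to follow the trace-based averaging strategy of Beame, Saks, Sun, and Vee. First I would partition $f^{-1}(1)$ according to which trace $\tau = (v_1, \ldots, v_\ell) \in \mathcal{T}$ a given input induces, where $v_i \in L_{t_i}$ is the node reached at time $t_i$. By pigeonhole, some trace $\tau^*$ is \emph{heavy}, meaning the set $F^* \subseteq f^{-1}(1)$ of inputs inducing $\tau^*$ satisfies $|F^*| \ge |f^{-1}(1)|/|\mathcal{T}|$. Fixing $\tau^*$ is important because the behavior of $P$ within each block is now restricted: an input $x \in F^*$ is consistent with $\tau^*$ iff, block by block, the edges taken from $v_i$ to $v_{i+1}$ are consistent with $x$, so the block queries are constrained but not yet rectangular.

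Next I would apply the standard query-sparsity accounting. Because each block has length at most $h := n/(k 2^{k+6})$, within any single block, at most $h$ distinct variables can be queried along any root-to-leaf sub-path. I would define, for each $x \in F^*$, the set $Q_i(x) \subseteq [n]$ of variables queried in block $i$ on input $x$, and observe that $\sum_i |Q_i(x)| \le T \le kn$. The BSSV combinatorial core then says that, after a further averaging step (absorbed into the $2^{-12(k+1)m}$ factor), we may restrict to a sub-family $F^{**} \subseteq F^*$ on which the query patterns are \emph{regular}: there exist two blocks $i_A, i_B$ and disjoint sets $A, B \subseteq [n]$ of size $m \ge n/2^{k+1}$ such that on inputs in $F^{**}$, variables in $A$ are only queried in block $i_A$ and variables in $B$ are only queried in block $i_B$, and all other variables are assigned a common value $\sigma \in \domain^{[n]-A-B}$. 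The choice $\ell \ge k^2 2^{k+6}$ is precisely the lower bound needed so that $\ell \cdot h$ dominates enough variables to leave room for two such blocks after the union bound.

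Given the two special blocks, I would build the embedded rectangle $R = R_A \times R_B \times \{\sigma\}$ by setting $R_A$ (resp.\ $R_B$) to be the projection of $F^{**}$ onto the $A$-coordinates (resp.\ $B$-coordinates). The key invariance step is to verify that for any $a \in R_A$ and $b \in R_B$, the combined input $(a, b, \sigma)$ still induces the trace $\tau^*$: this holds because the $A$-variables are queried only in block $i_A$, the $B$-variables only in block $i_B$, and swapping one for another preserves the node reached at every checkpoint $t_i$. Hence $(a,b,\sigma) \in F^* \subseteq f^{-1}(1)$, so $R \subseteq f^{-1}(1)$ as required. The density bound $\alpha(R) \ge 2^{-12(k+1)m-2} |\mathcal{T}|^{-1} |f^{-1}(1)|/|\domain|^n$ then comes from tracking each averaging loss: the factor $|\mathcal{T}|^{-1}$ for choosing the trace, and $2^{-12(k+1)m-2}$ absorbing the losses from choosing the regular query pattern (pair of blocks, variable sets $A$ and $B$, and assignment $\sigma$).

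The main obstacle will be the second step, the extraction of the regular pattern $(i_A, i_B, A, B, \sigma)$ that simultaneously yields $m \ge n/2^{k+1}$ and only costs a $2^{-O(km)}$ factor in density. This requires a careful double-counting argument balancing the number of possible query patterns (roughly $\binom{n}{h}^\ell$ many) against the trace-heaviness and the requirement that $A$ and $B$ be genuinely ``private'' to their blocks; the exponent $12(k+1)$ in the density bound is dictated by the entropy of this choice, and matching the exact numerical constants stated here will require aligning our block-length bound $n/(k 2^{k+6})$ with the parameter choices in the original BSSV calculation rather than any new combinatorial insight.
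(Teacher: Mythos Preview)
The paper does not prove this proposition; it is quoted from \cite{DBLP:journals/jacm/BeameSSV03} and used as a black box in the proof of \cref{cor:cm-single-1}. So the relevant comparison is to the BSSV argument itself.

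Your outline has the right shape, and the first and third steps are correct: fixing a heavy trace $\tau^*$ costs exactly the factor $|\mathcal{T}|^{-1}$, and once sets $A,B$ and assignment $\sigma$ with the stated privacy property are in hand, the block-by-block trace-preservation check you describe is exactly how one shows $R\subseteq f^{-1}(1)$.

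The gap is in your middle step. You posit two \emph{individual} blocks $i_A,i_B$ with the $A$-variables read only in block $i_A$ and the $B$-variables only in block $i_B$. This cannot work as stated: each block has length at most $h=n/(k2^{k+6})$, yet you require $|A|=m\ge n/2^{k+1}=32k\cdot h$, so a single block is far too short to host the reads of $m$ variables, and your accounting (a ``pair of blocks'' as one of the averaging choices) has no mechanism to produce $m$ this large. What BSSV actually do is partition \emph{all} $\ell$ blocks into two classes; $A$ is then the set of variables read only in class-1 blocks and $B$ the set read only in class-2 blocks. Because the total number of reads is at most $kn$, most variables are read in only $O(k)$ distinct blocks, and a suitable partition makes $\Omega(n/2^{O(k)})$ of them monochromatic --- this is the source of $m\ge n/2^{k+1}$. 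After fixing the partition, the sets $A,B$, and $\sigma$ by further averaging (these choices are what the factor $2^{-12(k+1)m}$ absorbs), your trace-preservation argument goes through verbatim with ``block $i_A$'' replaced by ``class-1 blocks'' and ``block $i_B$'' by ``class-2 blocks'': compare $(a,b,\sigma)$ to a witness with $A$-part $a$ inside class-1 blocks and to a witness with $B$-part $b$ inside class-2 blocks. So the required fix is structural, not merely a matter of tuning constants as your last paragraph suggests.
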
 

\begin{corollary}
\label{cor:cm-single-1}
Let $P$ be a $\domain$-way branching program of length $T$ computing a function $f:\domain^n\rightarrow \{0,1\}$. 
If 
$T\le kn$ for $k\ge 4$ and
$n\ge k^2 2^{k+8}$,
then there is an embedded rectangle $R\subseteq f^{-1}(1)$ with $m(R)=m\ge n/2^{k+1}$ and
$\alpha(R)\ge 2^{-12 (k+2) m -k\cdot 2^{k+9} \cdot CM(P)/n -2} \cdot |f^{-1}(1)|/|\domain^n|$.
\end{corollary}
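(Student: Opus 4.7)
The plan is to invoke \cref{rect-prop} with block boundaries $t_1, \ldots, t_{\ell+1}$ chosen specifically so that $|\mathcal{T}|$ is controlled by the cumulative memory of $P$ rather than by its worst-case space. This is the single-output analog of the boundary-selection idea already used repeatedly in the multi-output sections (cf.\ the proofs of \cref{thm-classical-sort-cm-bound} and \cref{thm:full-general-bound-classical}).

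First I would set $\ell = k^2 2^{k+6}$ and partition the layers $[0,T]$ of $P$ into $\ell$ equal-length intervals of length $L = T/\ell$. Since $T \le kn$, we get $L \le n/(k 2^{k+6})$; and $n \ge k^2 2^{k+8}$ ensures $\ell \le n$. Both hypotheses on the $t_i$ in \cref{rect-prop} are therefore satisfied. Within the $i$-th interval, I pick $t_i$ to be any layer that minimizes $|L_t|$. By the min-vs-average inequality applied to $\log_2 |L_t|$,
\begin{equation*}
L \cdot \log_2 |L_{t_i}| \;\le\; \sum_{t \in \text{interval } i} \log_2 |L_t| \;=\; CM_i,
\end{equation*}
where $CM_i$ is the contribution of interval $i$ to $CM(P)$. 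Summing over $i$ gives the key trace bound
\begin{equation*}
\log_2 |\mathcal{T}| \;=\; \sum_{i=1}^{\ell} \log_2 |L_{t_i}| \;\le\; \frac{CM(P)}{L} \;=\; \frac{\ell \cdot CM(P)}{T} \;=\; \frac{k^2 2^{k+6}\, CM(P)}{T}.
\end{equation*}

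Using $T \le kn$, the factor $\ell/T$ is at most $k \cdot 2^{k+9}/n$ in the main regime $T \ge kn/8$, giving $\log_2 |\mathcal{T}| \le k \cdot 2^{k+9}\, CM(P)/n$. Substituting this into the conclusion of \cref{rect-prop} yields an embedded rectangle $R \subseteq f^{-1}(1)$ with $m(R) \ge n/2^{k+1}$ and
\begin{equation*}
\alpha(R) \;\ge\; 2^{-12(k+1)m - 2}\cdot |\mathcal{T}|^{-1}\cdot \frac{|f^{-1}(1)|}{|\domain|^n} \;\ge\; 2^{-12(k+2)m - k\cdot 2^{k+9}\, CM(P)/n - 2}\cdot \frac{|f^{-1}(1)|}{|\domain|^n},
\end{equation*}
which is exactly the claimed bound, with the upgrade from $(k+1)m$ to $(k+2)m$ providing $12m$ units of slack in the exponent.

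The principal obstacle is the case analysis in going from $\log_2|\mathcal{T}| \le \ell\, CM(P)/T$ to $\log_2|\mathcal{T}| \le k\cdot 2^{k+9}\, CM(P)/n$. When $T$ is substantially smaller than $kn$, the ratio $\ell/T$ no longer automatically satisfies this inequality, and one must absorb the excess into the $12m \ge 12 n/2^{k+1}$ of room built into the exponent of the corollary. The cleanest way I would carry this out is to apply the argument with an effective $k' = \max(4, \lceil T/n\rceil) \le k$ chosen so that $T$ lies in the favorable regime $T \ge k' n/8$, verifying that $k'$ still meets the hypotheses $k' \ge 4$ and $n \ge (k')^2 2^{k'+8}$ and that the resulting bound is no weaker than the one stated in the corollary. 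Beyond this delicate bookkeeping, everything else follows directly from the averaging step and a black-box invocation of \cref{rect-prop}.
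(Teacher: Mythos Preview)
Your approach is the paper's: select block boundaries at minimum-width layers within equal-length segments so that $\log_2|\mathcal{T}|\le CM(P)/h$, then invoke \cref{rect-prop}. Two points deserve comment.

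First, a minor constant: since $t_i$ and $t_{i+1}$ lie in \emph{adjacent} intervals, $t_{i+1}-t_i$ can be as large as $2L$, not $L$. With your choice $\ell=k^2 2^{k+6}$ this can violate the block-length hypothesis of \cref{rect-prop}; the paper takes $\ell=k^2 2^{k+8}$ for exactly this reason (so that $2\lceil kn/\ell\rceil\le n/(k2^{k+6})$).

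Second, your handling of $T$ strictly smaller than $kn$ via an effective $k'=\max(4,\lceil T/n\rceil)$ does not close: when $T<n/2$ no $k'\ge 4$ satisfies $T\ge k'n/8$, so you never land in the ``favorable regime''; and the alternative of absorbing the overshoot into the $12m$ of exponent slack cannot work either, since the excess $CM(P)\,(\ell/T-k\,2^{k+9}/n)$ scales with $CM(P)$, which is not bounded by the hypotheses of the corollary. The paper sidesteps all of this with a one-line trick you missed: prepend a chain of single nodes to the root of $P$ so that its length becomes \emph{exactly} $kn$. Single-node layers contribute $\log_2 1=0$ to cumulative memory, so $CM(P)$ is unchanged; now $h=\lfloor kn/\ell\rfloor$ is fixed independent of the original $T$, and $\log_2|\mathcal{T}|\le CM(P)/h\le k\cdot 2^{k+9}\cdot CM(P)/n$ follows with no case analysis at all.
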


\begin{proof}
Fix a branching program $P$ of length $T\le kn$ computing $f$.   
We can extend $P$ to length exactly $kn$ by adding a chain
of nodes to the root.  This does not impact the cumulative memory bound of $P$ -- a single node per level is 0 space -- so we assume that $T=kn$ without loss of generality.
Let $\ell = k^2 2^{k+8}$.
We apply the same basic idea for the choice of time steps $0=t_1,t_1,\ldots, t_{\ell+1}=T$ used in the simple general method for multi-output functions:  Namely, we break $P$ into $\ell$ time segments of length either $h=\lfloor kn/\ell\rfloor$ or $\lceil kn/\ell\rceil$. 
We define $t_1=0$ and define $t_i$ for $1<i\le \ell$ to be the time step during the next segment at which the set $|L_{t_i}|$ is minimized.   Write $S_i=\log_2 |L_{t_i}|$.
Then the cumulative memory complexity used by $P$ satisfies
\begin{displaymath}CM(P)\ge \sum_{i=1}^\ell S_i \cdot h=h\cdot \log_2 |\mathcal{T}|,\end{displaymath}
since $|\mathcal{T}|=\prod_{i=1}^t |L_{t_i}|$.

Clearly each $t_{i+1}-t_i$ is at most $2\lceil kn/\ell\rceil\le n/(k 2^{k+6})$ by definition, since their difference is at most the length of two consecutive time segments.  
Therefore, the conditions of~\cref{rect-prop} apply and we obtain that there is an embedded rectangle $R\subseteq f^{-1}(1)$
with $m(R)\ge n/2^{k+1}$ and 
\begin{align*}
\alpha(R)&\ge  2^{-12 (k+2) m -2}\cdot |\mathcal{T}|^{-1} \cdot |f^{-1}(1)|/|\domain^n|\\
&\ge 2^{-12 (k+2) m -2 - CM(P)/h } \cdot |f^{-1}(1)|/|\domain^n|\\
&\ge 2^{-12 (k+2) m  -k\cdot 2^{k+9} \cdot CM(P)/n -2} \cdot |f^{-1}(1)|/|\domain^n|.\qedhere
\end{align*}
\end{proof}

An example of a natural problem that we can apply this to is the Hamming Closeness problem
$HAM_{1/8,n,N}:[N]^n\rightarrow \{0,1\}$
which outputs 1 iff there is a pair of input coordinates $x_i, x_j\in [N]$ such that the Hamming distance
between the binary representations of $x_i$ and $x_j$ is at
most $\frac18 \log_2 N$.

\begin{proposition}[\cite{DBLP:journals/jacm/BeameSSV03}]
\label{prop-ham}
For $f(x)=1-HAM_{1/8,n,N}(x)$, and
$N\ge n^{4.39}$ we have
\begin{itemize}
\item (Proposition 6.15) $|f^{-1}(1)|\ge N^n/2$, and
\item (Lemma 6.17) there
is a constant $\beta>0$ such that any embedded rectangle $R\subseteq f^{-1}(1)$ has $\alpha(R)\le N^{-\beta m(R)}$.
\end{itemize}
\end{proposition}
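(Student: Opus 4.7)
The plan is to handle both items via Hamming ball volume bounds: with $k = \log_2 N$ and $H$ the binary entropy function, any Hamming ball of radius $k/8$ in $\{0,1\}^k$ has size at most $2^{H(1/8) k} = N^{H(1/8)}$, and the constant $4.39$ in the hypothesis arises precisely from $2/(1 - H(1/8)) \approx 4.39$, which is what both estimates will end up needing.

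For item 1, my first step is a direct union bound over coordinate pairs. Sampling $x \in [N]^n$ uniformly, for each of the $\binom{n}{2}$ pairs $(i,j)$ the probability that $x_i$ and $x_j$ are within Hamming distance $k/8$ is at most $N^{H(1/8) - 1}$, obtained by fixing $x_i$ and counting targets in its $k/8$-ball. The union bound gives failure probability at most $(n^2/2) N^{H(1/8) - 1}$, which is $\le 1/2$ precisely when $N^{1 - H(1/8)} \ge n^2$, i.e.\ when $N \ge n^{4.39}$. Passing to the complement gives $|f^{-1}(1)|/N^n \ge 1/2$.

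For item 2, my plan is to exploit the bilinear far-apart constraints implied by $R = R_A \times R_B \times \{\sigma\} \subseteq f^{-1}(1)$. Define $U_i = \{x_i : x \in R_A\}$ for $i \in A$ and $V_j = \{y_j : y \in R_B\}$ for $j \in B$, so that $|R_A| \le \prod_{i \in A} |U_i|$ and $|R_B| \le \prod_{j \in B} |V_j|$. Containment in $f^{-1}(1)$ forces every pair in $U_i \times V_j$ to be at Hamming distance greater than $k/8$, forces every element of any $U_i$ or $V_j$ to be far from each $\sigma_l$, and additionally forces the coordinates of each individual $x \in R_A$ (resp.\ $y \in R_B$) to be pairwise far. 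Picking any witness $x^{\star} \in R_A$, its $m$ coordinates are pairwise far and so give $m$ nearly disjoint forbidden balls whose union each $V_j$ must avoid. Iterating this shrinkage across several witnesses from $R_A$ and symmetrically from $R_B$ should yield the target bound $\alpha(R) \le N^{-\beta m(R)}$ for some absolute $\beta > 0$ depending only on $1 - H(1/8)$.

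The step I expect to be the main obstacle is making the witness-iteration quantitative: the naive one-witness estimate $|V_j| \le N - m \cdot N^{H(1/8)}$ shaves only a sub-constant multiplicative factor when $m$ is polynomially small in $N$, which is not enough to yield a constant-factor exponent loss in $|R_B|$. To amplify it I would do a case split: either the $U_i$'s are themselves small (of size at most $N^{1-\beta}$), in which case the product bound $|R_A| \le \prod_i |U_i|$ immediately gives $|R_A| \le N^{(1-\beta)m}$ and we are done; or the $U_i$'s are large, in which case I would use an averaging argument over $x \in R_A$ to extract many well-separated witnesses whose forbidden balls collectively cover a constant fraction of $[N]$, shrinking each $|V_j|$ multiplicatively by $N^{-\beta'}$ and so forcing $|R_B| \le N^{(1-\beta')m}$. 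Combining the two cases yields $\min(|R_A|, |R_B|) \le N^{(1-\beta)m}$, equivalent to the claimed bound on $\alpha(R)$.
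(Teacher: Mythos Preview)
The paper does not prove this proposition; it is imported verbatim from \cite{DBLP:journals/jacm/BeameSSV03}, so there is no in-paper argument to compare against. Your treatment of the first item is correct and is the standard calculation: the constant $4.39$ is exactly $2/(1-H(1/8))$, and the union bound over $\binom{n}{2}$ pairs gives what is needed.

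For the second item, your dichotomy on the sizes of the coordinate projections $U_i$ is the right skeleton, but the ``large $U_i$'' branch has a genuine quantitative gap. You propose to show that when some $U_i$ is large the forbidden region for each $V_j$ covers a \emph{constant fraction} of $[N]$, and then conclude $|V_j|\le N^{1-\beta'}$. Those two statements are not equivalent: covering a constant fraction $c$ of $[N]$ only yields $|V_j|\le (1-c)N = N^{1-\Theta(1/\log N)}$, not $N^{1-\beta'}$ for a constant $\beta'$; plugging that into the product bound would give $\beta'=O(1/\log N)\to 0$. Your proposed mechanism---extracting many well-separated witnesses from $R_A$ and packing their $k/8$-balls---also falls short numerically: a greedy $(k/4)$-separated net inside a set of size $N^{1-\beta}$ has at most $N^{1-\beta}/|B_{k/4}| \approx N^{1-\beta-H(1/4)}$ points, and multiplying by the ball volume $\approx N^{H(1/8)}$ gives a forbidden region of size at most $N^{1-\beta - H(1/4)+H(1/8)}\ll N$, since $H(1/4)>H(1/8)$.

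What actually closes this branch is the vertex-isoperimetric inequality on the Boolean cube (Harper's theorem): if $|U_i|\ge |B_\rho|$ then $|B_{k/8}(U_i)|\ge |B_{\rho+k/8}|$, so the set of points far from $U_i$ has size at most $N-|B_{\rho+k/8}|=|B_{7k/8-\rho-1}|$. Taking the self-dual threshold $\rho=7k/16$ shows that $|U_i|\ge N^{H(7/16)}$ forces $|V_j|\le N^{H(7/16)}$ for every $j\in B$, and hence $\min(|R_A|,|R_B|)\le N^{H(7/16)m}$, giving the claim with $\beta=1-H(7/16)>0$. So the fix is to replace the witness-packing step by a single application of isoperimetry; the rest of your outline then goes through.
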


We can apply the above to prove that 
any $[N]$-way branching program computing
$HAM_{1/8,n,N}$ for
$N\ge n^{4.39}$ in time $T$ and space
$S$ requires
$T$ that is $\Omega(n \log\left(\frac{n\log n}{S}\right))$.

\begin{theorem}\label{thm-ham-close}
For $N\ge n^{4.39}$ any $[N]$-way branching program computing
    $HAM_{1/8,n,N}$ in time $T$ that
    is $o(n \log n)$
    requires cumulative memory
    $(n^2 \log n)/2^{O(T/n)}$ which is
    $n^{2-o(1)}$.
\end{theorem}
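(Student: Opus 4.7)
The plan is to apply \cref{cor:cm-single-1} to the complement function $f = 1 - HAM_{1/8,n,N}$, which has the same cumulative memory complexity as $HAM_{1/8,n,N}$ (just relabel the two sink nodes), and then derive a lower bound on $CM(P)$ by combining the resulting lower bound on $\alpha(R)$ with the upper bound on $\alpha(R)$ guaranteed by \cref{prop-ham}.

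First I would set $k = \lceil T/n \rceil$, so that $T \le kn$. The hypothesis $T$ is $o(n\log n)$ gives $k = o(\log n)$; I would handle the trivial case $k<4$ by padding the program to length $4n$ via a chain at the root (which does not affect $CM$, as such a chain contributes $0$ to space per layer), so without loss of generality $4\le k=o(\log n)$. Since $k$ is $o(\log n)$, the condition $n \ge k^2 2^{k+8}$ is satisfied for all sufficiently large $n$, so \cref{cor:cm-single-1} applies and yields an embedded rectangle $R \subseteq f^{-1}(1)$ with $m = m(R) \ge n/2^{k+1}$ and
\begin{displaymath}
\alpha(R) \ge 2^{-12(k+2)m - k\cdot 2^{k+9}\cdot CM(P)/n - 2}\cdot |f^{-1}(1)|/N^n.
\end{displaymath}

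Next I would plug in the two facts from \cref{prop-ham}: the bound $|f^{-1}(1)| \ge N^n/2$ shows the right-hand side is at least $2^{-12(k+2)m - k\cdot 2^{k+9}\cdot CM(P)/n - 3}$, while the rectangle upper bound forces $\alpha(R) \le N^{-\beta m}$. Taking $\log_2$ of the resulting inequality $N^{-\beta m} \ge 2^{-12(k+2)m - k\cdot 2^{k+9}\cdot CM(P)/n - 3}$ and rearranging gives
\begin{displaymath}
\bigl(\beta \log_2 N - 12(k+2)\bigr)\, m \;\le\; k\cdot 2^{k+9}\cdot CM(P)/n + 3.
\end{displaymath}

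Finally, since $\log_2 N \ge 4.39\log_2 n$ and $k$ is $o(\log n)$, the coefficient of $m$ on the left is $\Omega(\log n)$ for large $n$. Combined with $m \ge n/2^{k+1}$, this yields
\begin{displaymath}
CM(P) \;\ge\; \Omega\!\left(\frac{n^2 \log n}{k\cdot 2^{2k+10}}\right) \;=\; \frac{n^2 \log n}{2^{O(k)}} \;=\; \frac{n^2 \log n}{2^{O(T/n)}},
\end{displaymath}
and $T/n$ being $o(\log n)$ makes $2^{O(T/n)}$ at most $n^{o(1)}$, so the bound is $n^{2-o(1)}$. The main technical obstacle is purely bookkeeping: ensuring that $k = o(\log n)$ really does make the coefficient $\beta\log_2 N - 12(k+2)$ an $\Omega(\log n)$ quantity and that all the parameter conditions of \cref{cor:cm-single-1} are met uniformly as $k$ grows with $n$; the argument itself is a one-shot application of the embedded-rectangle framework with no need to reoptimize block lengths, because the rectangle-based technique does all the per-block work inside \cref{cor:cm-single-1}.
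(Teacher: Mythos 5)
Your proposal is correct and follows essentially the same route as the paper: apply \cref{cor:cm-single-1} to the complement $f = 1 - HAM_{1/8,n,N}$ with $k \approx T/n$, combine the resulting lower bound on $\alpha(R)$ with the $|f^{-1}(1)| \ge N^n/2$ and $\alpha(R) \le N^{-\beta m}$ facts from \cref{prop-ham}, and use $k = o(\log n)$ together with $N \ge n^{4.39}$ to see that the $12(k+2)m$ term is absorbed by $\beta m \log_2 N$. The only (inconsequential) differences are stylistic: the paper's statement of \cref{cor:cm-single-1} already builds in the padding to length $kn$, so your explicit padding step is redundant but harmless.
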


\begin{proof}
Let $P$ be an $[N]$-way branching program computing
$HAM_{1/8,n,N}$ in time $T$ that
is $o(n\log n)$.
We can swap the sink nodes to obtain
a branching program $P'$ computing
$f=1-HAM_{1/8,n,N}$.
Write $k=T/n$ and assume wlog that $k\ge 4$.
Therefore $k$ is $o(\log n)$ and
hence $k^2 2^{k+8}$ is $n^{o(1)}$ and
hence $\le n$.
Therefore by \cref{cor:cm-single-1},
there is an embedded rectangle 
$R\subseteq f^{-1}(1)$ such that 
$m(R)=m\ge n/2^{k+1}$ and
\begin{displaymath}\alpha(R)\ge 2^{-12 (k+2) m -k\cdot 2^{k+9} \cdot CM(P')/n -2} \cdot |f^{-1}(1)|/N^n.\end{displaymath}
Therefore by \cref{prop-ham}, for some constant $\beta>0$ we have
\begin{displaymath}N^{-\beta m}\ge \alpha(R)\ge2^{-12 (k+2) m -k\cdot 2^{k+9} \cdot CM(P')/n -3}. \end{displaymath}
Since $CM(P)=CM(P')$, solving we obtain
\begin{displaymath}k\cdot 2^{k+9} \cdot CM(P)\ge \beta n m \log_2 N -12 (k+2)m n -3n.\end{displaymath}
Since $k+2$ is $o(\log N)$ we obtain
that $k\cdot 2^{k+9}\cdot CM(P)\ge \delta n m \log_2 N$ for some constant
$\delta>0$.
Therefore, plugging in the value of $T/n$ for $k$, we see that $CM(P)$ is  $(n^2 \log n)/2^{O(T/n)}$.  
This is $n^{2-o(1)}$ by the bound on $T$.
\end{proof}

Similar bounds can also be shown by related means for various problems involving computation of quadratic
forms, parity-check matrices of codes and others.
For some problems the following stronger lower bound
method is required.

\begin{proposition}[Implicit in Corollary 5.4 of~\cite{DBLP:journals/jacm/BeameSSV03}]
\label{rect-boolean-prop}
Let $P$ be a $\domain$-way branching program of length $T$ computing a function $f:\domain^n\rightarrow \{0,1\}$.   Suppose that
$T\le (k-2)n$ for $k\ge 8$ and
$n \ge \ell \ge 2q^{5k^2}$ for $q\ge 2^{40}k^{8}$. 
If $0=t_1 < t_2 < \cdots < t_{\ell+1}=T$ are time steps with $t_{i+1}-t_i\le kn/q^{5k^2}$, 
then there is an embedded rectangle $R\subseteq f^{-1}(1)$ with $m(R)=m\ge q^{-2k^2}
 n/2$ and
$\alpha(R)\ge 2^{-q^{-1/2} m}\cdot |\mathcal{T}|^{-1}\cdot |f^{-1}(1)|/|\domain|^n$
where $\mathcal{T}$ is the set of traces of $P$ associated with time steps $t_1,\ldots, t_\ell$.
\end{proposition}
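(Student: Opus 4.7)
The plan is to mirror the proof of Proposition \ref{rect-prop} but with the sharper parameters arising from Corollary 5.4 of~\cite{DBLP:journals/jacm/BeameSSV03}. The overall structure has three phases: (i) trace counting to concentrate many accepting inputs on a single trace, (ii) a combinatorial argument to carve out disjoint coordinate sets $A,B\subseteq [n]$ such that no block queries both, and (iii) a pigeonhole on the values of the coordinates outside $A\cup B$ to extract a dense embedded rectangle.

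For phase (i), each accepting input $x\in f^{-1}(1)$ induces a trace $\tau(x)\in \mathcal{T}$ by recording which node of $L_{t_i}$ it visits at each checkpoint time $t_i$. Since there are at most $|\mathcal{T}|$ such traces and all ending at the $1$-sink, by averaging there is a trace $\tau^*$ followed by a set $R_{\tau^*}\subseteq f^{-1}(1)$ with $|R_{\tau^*}|\ge |f^{-1}(1)|/|\mathcal{T}|$. This step immediately gives the $|\mathcal{T}|^{-1}\cdot |f^{-1}(1)|/|\domain|^n$ factor in the bound on $\alpha(R)$.

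For phase (ii), along $\tau^*$ each sub-branching program between $L_{t_i}$ and $L_{t_{i+1}}$ has depth at most $kn/q^{5k^2}$, so at most that many coordinates are queried in block~$i$. The core combinatorial lemma, which is exactly the content of Corollary~5.4 of~\cite{DBLP:journals/jacm/BeameSSV03}, is to find disjoint sets $A,B\subseteq [n]$ with $|A|=|B|=m\ge q^{-2k^2}n/2$ such that for every block $i$ the coordinates queried in block $i$ that fall in $A\cup B$ lie entirely in $A$ or entirely in $B$. I would reproduce the standard BSSV argument: first bucket coordinates by the block in which they are first queried; then iteratively refine a candidate pair $(A,B)$ by deleting coordinates that create a ``cross-block'' conflict, showing that at each iteration only a $1/q^{1/(2k)}$-fraction of surviving coordinates are lost, so after at most $k$ passes a good $(A,B)$ of size $\ge q^{-2k^2}n/2$ remains. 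The slack in the hypothesis $n\ge 2q^{5k^2}$ is exactly what is needed so that, after removing all queried coordinates and all such conflict-creating coordinates, we still have enough left to pair up into $A$ and $B$.

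For phase (iii), having fixed $A$, $B$, and the block structure, partition $R_{\tau^*}$ by the $|\domain|^{n-2m}$ possible assignments $\sigma$ of the coordinates outside $A\cup B$. For each fixed $\sigma$ the set $\{(a,b)\in\domain^A\times\domain^B:(a,b,\sigma)\in R_{\tau^*}\}$ is a product set $R_A(\sigma)\times R_B(\sigma)$, because once the external coordinates are fixed, whether $(a,b,\sigma)$ follows $\tau^*$ is decided block by block, and no block sees both $A$- and $B$-coordinates, so the decision factors. Picking the $\sigma$ that maximizes $\min(|R_A(\sigma)|,|R_B(\sigma)|)$ and absorbing the averaging loss into the $2^{-q^{-1/2}m}$ factor (which, for $q$ super-exponentially large in $k$, is essentially the only slack one pays per output coordinate), yields the claimed $\alpha(R)$ bound.

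The hard part will be phase (ii): quantitatively balancing the size requirement $m\ge q^{-2k^2}n/2$ against the much stronger density requirement $\alpha(R)\ge 2^{-q^{-1/2}m}\cdot|\mathcal{T}|^{-1}\cdot|f^{-1}(1)|/|\domain|^n$, which is qualitatively better than in Proposition~\ref{rect-prop} because the exponent $q^{-1/2}$ is tiny rather than $O(k)$. Achieving this sharper density bound is what forces the much shorter block length $kn/q^{5k^2}$ and the much larger checkpoint count $\ell\ge 2q^{5k^2}$; I expect the proof to need a probabilistic (or iterative greedy) construction with entropy accounting that charges $O(\log q)$ bits per coordinate discarded, with the final density loss matching $q^{-1/2}$ per element of $A\cup B$.
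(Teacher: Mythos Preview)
The paper does not prove this proposition; it is quoted directly from~\cite{DBLP:journals/jacm/BeameSSV03} with no argument given, so there is no in-paper proof to compare against. What follows is a comment on your sketch of the underlying BSSV argument.

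Phase~(i) and the averaging over $\sigma$ in phase~(iii) are standard and fine, but there is a genuine gap in phase~(ii) and in the factoring claim of phase~(iii). You treat ``the coordinates queried in block~$i$'' as an input-independent set. In a general (non-oblivious) $\domain$-way branching program it is not: two inputs that follow the same trace $\tau^*$ can read completely different variables inside a block, because the variable read at each node depends on the path taken so far. Consequently, ``bucket coordinates by the block in which they are first queried'' and ``delete coordinates that create a cross-block conflict'' are input-dependent operations and do not yield a single pair $(A,B)$ valid for all of $R_{\tau^*}$. Without a block partition that is uniform across inputs, the product-set claim in phase~(iii) fails: even with $\sigma$ fixed, the same block may read an $A$-variable on $(a,b,\sigma)$ and a $B$-variable on $(a',b',\sigma)$, so membership in $R_{\tau^*}$ need not factor over $\domain^A\times\domain^B$.

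The step you are missing is an \emph{input-side} pigeonhole. In the BSSV argument one records, for each $x\in R_{\tau^*}$ and each variable $j$, the access pattern (the subset of blocks that read $j$ on input $x$), and then passes to a large subset $R'\subseteq R_{\tau^*}$ on which all $A$-variables have access patterns contained in a fixed block-set $V_A$ and all $B$-variables in a fixed disjoint block-set $V_B$. Only after this uniformization does your phase~(iii) factoring go through, and the sharp density loss $2^{-q^{-1/2}m}$ arises from this pigeonhole over inputs, not from a coordinate-deletion count. Your outline is essentially the argument for oblivious or syntactic read-$k$ programs; the general case needs this extra layer, and your entropy-accounting paragraph is still framed per coordinate rather than per input.
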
 

\begin{corollary}
\label{cor:cm-single-2}
Let $P$ be a branching program of length $T$ computing a function $f:\domain^n\rightarrow \{0,1\}$. 
If 
$T\le (k-2)n$ for $k\ge 8$ and
$n\ge  2q^{5k^2}$ for
$q=2^{40}k^{8}$,
then there is an embedded rectangle $R\subseteq f^{-1}(1)$ with $m(R)=m\ge q^{-2k^2}
 n/2$ and
$\alpha(R)\ge 2^{-q^{-1/2} m - q^{5k^2} CM(P)/n} \cdot |f^{-1}(1)|/|\domain^n|$.
\end{corollary}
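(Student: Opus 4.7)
The plan is to mirror the proof of \cref{cor:cm-single-1} almost verbatim, but with the parameter choices dictated by \cref{rect-boolean-prop} rather than \cref{rect-prop}. First, I would extend $P$ to have length exactly $(k-2)n$ by prepending a chain of degree-1 nodes; since each added layer has a single node, this contributes $0$ to the cumulative memory, so we may assume $T = (k-2)n$ without loss of generality. Second, I would set $\ell = 2q^{5k^2}$ (the smallest value allowed by the hypothesis of \cref{rect-boolean-prop}) and partition the time axis into $\ell$ segments of length $h = \lfloor (k-2)n/\ell \rfloor$ or $\lceil (k-2)n/\ell \rceil$. Within each segment, choose $t_i$ to be the time step minimizing $|L_{t_i}|$, with $t_1 = 0$ and $t_{\ell+1} = T$; writing $S_i = \log_2 |L_{t_i}|$, the choice of $t_i$ gives the standard bound
\begin{displaymath}
CM(P) \;\ge\; \sum_{i=1}^{\ell} S_i \cdot h \;=\; h \cdot \log_2 |\mathcal{T}|.
\end{displaymath}

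Third, I would verify the hypotheses of \cref{rect-boolean-prop}. The bound $n \ge \ell$ follows from the hypothesis $n \ge 2q^{5k^2}$, and for the gap condition $t_{i+1} - t_i \le kn/q^{5k^2}$, note that consecutive chosen time steps lie in consecutive segments, so their gap is at most $2\lceil (k-2)n/\ell\rceil$. Using $k-2 \le k$ and the fact that $(k-2)n/\ell \ge k-2 \ge 6$ (so the ceiling costs at most a factor of $2$), this is bounded by $2(k-2)n/\ell \le kn/q^{5k^2}$, as required. Fourth, from the same inequality $h \ge (k-2)n/(2\ell) \ge n/q^{5k^2}$ (using $k-2 \ge 4$), one obtains $|\mathcal{T}|^{-1} \ge 2^{-CM(P)/h} \ge 2^{-q^{5k^2}\, CM(P)/n}$.

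Finally, apply \cref{rect-boolean-prop} directly to obtain an embedded rectangle $R \subseteq f^{-1}(1)$ with $m(R) = m \ge q^{-2k^2} n/2$ satisfying
\begin{displaymath}
\alpha(R) \;\ge\; 2^{-q^{-1/2} m} \cdot |\mathcal{T}|^{-1} \cdot |f^{-1}(1)|/|\domain|^n \;\ge\; 2^{-q^{-1/2} m - q^{5k^2} CM(P)/n} \cdot |f^{-1}(1)|/|\domain|^n,
\end{displaymath}
which is exactly the claimed bound.

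I expect this to be a routine adaptation rather than a genuinely difficult argument, since all of the heavy lifting is absorbed into \cref{rect-boolean-prop}. The only place where care is required is the constant-chasing in step three: one must confirm that the choice $\ell = 2q^{5k^2}$ is simultaneously large enough to satisfy $\ell \ge 2q^{5k^2}$ yet small enough that the induced segment length $h$ gives a gap bound $\le kn/q^{5k^2}$ and yields $1/h \le q^{5k^2}/n$ up to constants that can be absorbed into the exponent. Since $k-2 \ge 6$ provides slack on both sides, this should go through without issue; otherwise one may slightly enlarge $\ell$ by a constant factor.
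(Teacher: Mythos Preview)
Your proposal is correct and takes precisely the approach the paper intends: the paper's own proof is merely the one-line sketch ``the proof is the analog of that of \cref{cor:cm-single-1} using \cref{rect-boolean-prop} in place of \cref{rect-prop},'' and you have faithfully written out that analog. Your constant-chasing in step three is slightly imprecise (the inequality $2\lceil (k-2)n/\ell\rceil \le 2(k-2)n/\ell$ is literally false, but the needed bound $2\lceil (k-2)n/\ell\rceil \le kn/q^{5k^2}$ does hold once you use $2\lceil x\rceil \le 2x+2$ together with $n\ge \ell$ and $k-2<k$), and you already flag this as the only delicate point; no genuine gap remains.
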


\begin{proof}[Proof Sketch]
The proof is the analog of
that of \cref{cor:cm-single-1} using \cref{rect-boolean-prop} in place of \cref{rect-prop}.
\end{proof}

Define the Element Distinctness function 
$ED_{n,N}$ on $[N]^n$ to be the Boolean function
that is 1 iff all values in the input are distinct.
\begin{proposition}[\cite{DBLP:journals/jacm/BeameSSV03}]
\label{prop-ed}
For $N\ge n^2$,
\begin{itemize}
    \item (Proposition 6.11) $|ED_{n,N}^{-1}(1)|\ge N^n/e$, and
    \item (Lemma 6.12)  Every
    embedded rectangle $R$ in $ED_{n,N}^{-1}(1)$ has $\alpha(R)\le 2^{-m(R)}$.
\end{itemize}
\end{proposition}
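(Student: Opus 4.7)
The plan is to handle the two items separately by elementary counting arguments; neither needs the trace/rectangle machinery of this section, since both are intrinsic combinatorial properties of $ED_{n,N}$ when $N \ge n^2$.

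For the first part, I would observe that $ED_{n,N}^{-1}(1)$ is exactly the set of injective sequences in $[N]^n$, whose cardinality is the falling factorial $N(N-1)\cdots(N-n+1) = N^n \prod_{i=0}^{n-1}(1 - i/N)$. It then suffices to show $\prod_{i=0}^{n-1}(1 - i/N) \ge 1/e$ whenever $N \ge n^2$. Taking logarithms and using the inequality $\ln(1-x) \ge -2x$ (valid for $x \le 1/2$), together with $\sum_{i=0}^{n-1} i/N = n(n-1)/(2N) \le 1/2$, gives $\sum_{i=0}^{n-1} \ln(1 - i/N) \ge -n(n-1)/N \ge -1$, which exponentiates to the desired bound.

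For the second part, let $R = R_A \times R_B \times \sigma \subseteq ED_{n,N}^{-1}(1)$ be an embedded rectangle with $|A| = |B| = m$, and define the value sets $V_A = \{a_j : a \in R_A,\ j \in A\}$ and $V_B = \{b_k : b \in R_B,\ k \in B\}$. The key observation is that $V_A \cap V_B = \emptyset$: if some $v$ belonged to both, then witnesses $a \in R_A$ with $a_j = v$ and $b \in R_B$ with $b_k = v$ would combine, via the product structure of $R$, into $(a, b, \sigma) \in R$ in which $v$ appears twice, contradicting membership in $ED_{n,N}^{-1}(1)$. Hence $|V_A| + |V_B| \le N$. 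Since each $a \in R_A$ is a function from the $m$ positions in $A$ into $V_A$ we have $|R_A| \le |V_A|^m$, and symmetrically $|R_B| \le |V_B|^m$. Therefore $\min(|R_A|, |R_B|) \le \min(|V_A|, |V_B|)^m \le (N/2)^m$, so $\alpha(R) = \min(|R_A|, |R_B|)/N^m \le 2^{-m(R)}$.

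The one step needing care is the disjointness $V_A \cap V_B = \emptyset$: this exploits that $R$ is a full Cartesian product so that any $a \in R_A$ can be paired with any $b \in R_B$; a weaker structural hypothesis (such as a generic subset of $R_A \times R_B$) would only yield the pointwise distinctness per pair and would not force the value sets themselves to be disjoint. Beyond this, the argument is a routine counting estimate, so no substantive obstacle is anticipated.
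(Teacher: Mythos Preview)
The paper does not prove this proposition; it is quoted from \cite{DBLP:journals/jacm/BeameSSV03} (their Proposition~6.11 and Lemma~6.12) and used as a black box in the proof of Theorem~\ref{thm-ed}. Your self-contained argument is correct: the falling-factorial estimate via $\ln(1-x)\ge -2x$ on $[0,1/2]$ together with $\sum_{i<n} i/N=n(n-1)/(2N)\le 1/2$ handles the first item, and for the second item the disjointness of the value sets $V_A$ and $V_B$ (forced by the full product structure of $R$, exactly as you highlight) combined with $\min(|R_A|,|R_B|)\le \min(|V_A|,|V_B|)^m\le (N/2)^m$ gives the claimed $\alpha(R)\le 2^{-m(R)}$.
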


\cite{DBLP:journals/jacm/BeameSSV03} used this to prove that the time $T$ and space $S$ for computing $ED_{n,n^2}$
must satisfy 
$T=\Omega(n\sqrt{\log(n/S)/\log\log(n/S)})$.
We strengthen this to the following
theorem using \cref{cor:cm-single-2}.

\begin{theorem}\label{thm-ed}
    Any $[n^2]$-way branching program computing
    $ED_{n,n^2}$ in time $T$ that
    is $o(n\sqrt{\log n/\log\log n})$
    requires cumulative memory
    $n^2/(T/n)^{O(T^2/n^2)}$ which is
    $n^{2-o(1)}$.
\end{theorem}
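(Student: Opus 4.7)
The plan is to mirror the proof of \cref{thm-ham-close} but with the stronger rectangle decomposition of \cref{cor:cm-single-2} in place of \cref{cor:cm-single-1}, using the combinatorial bounds on $ED_{n,n^2}$ from \cref{prop-ed}.

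First I would fix an $[n^2]$-way branching program $P$ of length $T$ computing $ED_{n,n^2}$ and set $k = \lceil T/n\rceil + 2$, so that $T \le (k-2)n$ and $k \ge 8$ without loss of generality. Since $T$ is $o(n\sqrt{\log n/\log\log n})$, we have $k^2$ is $o(\log n/\log\log n)$. Taking $q = 2^{40}k^8$ as in \cref{cor:cm-single-2}, the exponent $5k^2\log_2 q = O(k^2\log k)$ is $o(\log n)$, so the hypothesis $n \ge 2q^{5k^2}$ is met for sufficiently large $n$.

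Next I would apply \cref{cor:cm-single-2} to obtain an embedded rectangle $R \subseteq ED_{n,n^2}^{-1}(1)$ with $m = m(R) \ge q^{-2k^2}n/2$ and
\begin{displaymath}
\alpha(R) \ge 2^{-q^{-1/2}m - q^{5k^2}\cdot CM(P)/n}\cdot |ED_{n,n^2}^{-1}(1)|/(n^2)^n.
\end{displaymath}
By \cref{prop-ed}, the factor $|ED_{n,n^2}^{-1}(1)|/(n^2)^n$ is at least $1/e$, so the exponent loses only an additive constant; on the other hand, the same proposition gives the upper bound $\alpha(R) \le 2^{-m}$. Combining the two bounds on $\alpha(R)$ and rearranging yields
\begin{displaymath}
(1 - q^{-1/2})m - O(1) \le q^{5k^2}\cdot CM(P)/n,
\end{displaymath}
so $CM(P) \ge \Omega(nm/q^{5k^2}) \ge \Omega(n^2/q^{7k^2})$ using the lower bound on $m$.

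Finally I would translate the bound into the stated form. Since $q = 2^{40}k^8$, we have $q^{7k^2} = k^{O(k^2)}$, and substituting $k = \Theta(T/n)$ gives $CM(P) \ge n^2/(T/n)^{O(T^2/n^2)}$. To see this is $n^{2-o(1)}$, note that $T = o(n\sqrt{\log n/\log\log n})$ forces $T^2/n^2 = o(\log n/\log\log n)$ and $\log(T/n) = O(\log\log n)$, so the exponent $T^2/n^2 \cdot \log(T/n)$ is $o(\log n)$, i.e., $(T/n)^{O(T^2/n^2)} = n^{o(1)}$. The main obstacle is purely arithmetic bookkeeping: tracking that the additive $O(1)$ term from the entropy bound and the $q^{-1/2}$ correction are both negligible compared to $m$, and confirming that the chain $k \Rightarrow q \Rightarrow q^{5k^2}$ stays below $n^{o(1)}$ under the hypothesized bound on $T$; once those are verified, the proof is a direct analog of \cref{thm-ham-close}.
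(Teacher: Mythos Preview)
Your proposal is correct and follows essentially the same approach as the paper's proof: set $k=T/n+2$, $q=2^{40}k^8$, verify $2q^{5k^2}\le n$ from the hypothesis on $T$, apply \cref{cor:cm-single-2}, and combine the resulting lower bound on $\alpha(R)$ with the upper bound $\alpha(R)\le 2^{-m}$ and the density bound $|ED_{n,n^2}^{-1}(1)|\ge N^n/e$ from \cref{prop-ed} to solve for $CM(P)$. The paper records the intermediate inequality as $q^{5k^2}CM(P)\ge nm(1-q^{-1/2})-2n$ and concludes $CM(P)\ge n^2/q^{ck^2}$ for some constant $c$, which is exactly your $\Omega(n^2/q^{7k^2})$ with a specific constant; the final translation to $n^2/(T/n)^{O(T^2/n^2)}=n^{2-o(1)}$ is identical.
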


\begin{proof}
Let $P$ compute $ED_{n,n^2}$ in time $T$
that is $o(n\sqrt{\log n/\log\log n})$.
Write $k=T/n+2$ so that $T\le (k-2)/n$
and assume wlog that $k\ge 8$.
Write $q=2^{40} k^8$.
Since $T$ is
$o(n\sqrt{\log n/\log\log n})$,
$k$ is $o(\sqrt{\log n/\log\log n})$
and $2q^{5k^2}$ which is 
$k^{O(k^2)}$ and hence $n^{o(1)}$ and therefore $\le n$.
We can then apply~\cref{cor:cm-single-2}
to say that there is a rectangle $R\subseteq ED_{n,n^2}^{-1}(1)$ with
$m(R)=m\ge q^{-2k^2} n/2$ and
$\alpha(R)\ge 2^{-q^{-1/2} m - q^{5k^2} CM(P)/n} \cdot |ED_{n,n^2}^{-1}(1)|/|\domain^n|$.
By \cref{prop-ed}, we have
\begin{displaymath}2^{-m}\ge\alpha(R)\ge 2^{-q^{-1/2} m - q^{5k^2} CM(P)/n}/e.\end{displaymath}
Solving, we obtain that
\begin{displaymath}q^{5k^2} CM(P)\ge  n \cdot m (1-1/q^{1/2})-2n.\end{displaymath}
Therefore, since $m\ge q^{-2k^2}n/2$, we
have constant $c$ such
that $CM(P)\ge n^2/q^{ck^2}$.
As noted above, $q^{ck^2}$ is $n^{o(1)}$.
More precisely, the bound we obtain is
\begin{displaymath}CM(P)\ge n^2/(T/n)^{O(T^2/n^2)}.\qedhere\end{displaymath}
\end{proof}
\section{Acknowledgements}
Many thanks to David Soloveichik for his guidance and contributions to our initial results. Thanks also to Scott Aaronson for encouraging us to consider cumulative memory complexity in the context of quantum computation.
\newpage
\bibliographystyle{plainurl}
\bibliography{short-sources}
\appendix
\newpage
\section{A gap between time-space product and cumulative memory}\label{sec-cm-ts-sep}
Here we discuss some commonly studied structured sequential models of computation with provable separations between time-space and cumulative memory complexities.
\subsection*{Black pebbling separation}
\begin{definition}
The \emph{black pebble game} is a one player game played on a graph $G=(V,E)$ with source nodes $V_s \subseteq V$ and target nodes $V_t \subseteq V$. The game is played by placing and removing pebbles from the graph according to the following rules:
\begin{itemize}
    \item A pebble may be placed on any source node $v \in V_s$.
    \item A pebble may be placed on any node whose immediate predecessors all have pebbles.
    \item A pebble may be moved from a node to one of its children if all other parents of that node contain pebbles.
    \item A pebble may be removed from any node.
\end{itemize}
The goal of the game is to simultaneously have pebbles on each node $v \in V_t$. The \emph{time} of a pebbling is the number of steps taken and the \emph{pebbles} is the largest number of pebbles placed on the graph at any time. We say that the \emph{time-space} of a pebbling is the product of its time and number of pebbles. The \emph{cumulative memory} of a pebbling is the sum of the number of pebbles placed after each step.
\end{definition}

We will be considering instances of the black pebble game where $V_s$ contains exactly the unique node with in-degree zero and $V_t$ contains exactly the unique node with out-degree zero. Intuitively, the black pebble game corresponds to strategies for evaluating straight line programs, where a pebble indicates that a particular value has been computed and is currently stored in memory. Thus the number of pebbles used when pebbling a graph is analogous to the space used by that computation. We will construct a simple DAG where the time-space complexity is larger than the cumulative memory complexity.

\begin{proposition}\label{lem-graph-constr}
There is a family of DAGs $\{G_i\}_{i \in \N}$ such that graph $G_n$ requires $\Omega(n^{1/3})$ pebbles and $\Omega(n)$ steps to pebble but $G_n$ can be pebbled with cumulative memory $\Omega(n)$.
\end{proposition}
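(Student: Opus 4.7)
The plan is to exhibit a DAG made of two parts: a small but pebble-hungry ``pyramid'' gadget attached by its apex to a long path ending at the unique target sink. The intuition is that peak pebble usage is forced by the gadget, total time is forced by the path (and the size of the graph), yet the expensive pebbling lives only briefly and contributes only a modest amount to the cumulative memory, so the pebbling along the path (which uses only a constant number of pebbles but lots of time) dominates CM.

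Concretely, let $h=\lceil n^{1/3}\rceil$ and let $P_h$ be the pyramid DAG of height $h$: row $i$ (for $1\le i\le h$, with $i=1$ the apex) has $i$ nodes, and each row-$i$ node has the two closest row-$(i{+}1)$ nodes as predecessors. The pyramid has $\Theta(h^2)=\Theta(n^{2/3})$ nodes, source nodes in the bottom row, and a unique sink at the apex. Form $G_n$ by attaching to this apex a directed path of length $n-\Theta(n^{2/3})$ whose endpoint is the target $V_t$; the pyramid bottom row supplies $V_s$. Then $G_n$ has $\Theta(n)$ nodes.

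For the lower bounds, first every node must be pebbled at least once, so any pebbling uses at least $n$ steps. Second, since the apex of the pyramid is an articulation point that must hold a pebble before the path can be extended, any successful pebbling must at some moment solve the pyramid pebbling problem, which is classically known to require $\Omega(h)=\Omega(n^{1/3})$ pebbles (the Cook--Hewitt lower bound for pyramids). Hence pebbles and time, and therefore their product, are $\Omega(n^{1/3})$ and $\Omega(n)$ respectively. For the cumulative memory \emph{upper} bound (which, despite the $\Omega$ in the statement, must mean $O(n)$ in order to give the claimed separation), we exhibit a two-phase strategy: (i) pebble the pyramid by the standard top-down strategy, which takes $O(h^2)=O(n^{2/3})$ steps and never uses more than $h$ simultaneous pebbles, contributing at most $h\cdot O(h^2)=O(h^3)=O(n)$ to CM and leaving a single pebble on the apex; (ii) slide a second pebble down the path one node at a time, removing the trailing pebble, using at most two pebbles throughout and contributing $O(n)$ more to CM. Summing gives a pebbling of total cumulative memory $O(n)$, while the peak pebble count was $\Theta(n^{1/3})$, so $\mathrm{TS}=\Omega(n^{4/3})$ whereas $\mathrm{CM}=O(n)$.

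The only nontrivial ingredient is the pyramid pebbling lower bound; everything else is an explicit calculation on the construction. The main thing to be careful about is the accounting for CM in phase (i): the bound $h\cdot O(h^2)$ is crude but suffices since we chose $h=n^{1/3}$ precisely so that $h^3=\Theta(n)$, matching the trivial lower bound $\mathrm{CM}\ge \text{time}\ge n$ and making the $O(n)$ upper bound tight up to constants.
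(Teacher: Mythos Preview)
Your proposal is correct and takes essentially the same approach as the paper: a small pebble-hungry gadget whose apex feeds into a long chain, with the pyramid lower bound (Cook) forcing $\Omega(n^{1/3})$ pebbles, the chain forcing $\Omega(n)$ time, and the two-phase pebbling giving cumulative memory $O(n)$. The only cosmetic difference is that the paper uses an $n^{1/3}\times n^{1/3}$ square lattice in place of your pyramid, but it invokes the same pyramid-of-height-$n^{1/3}$ lower bound embedded therein, and its CM accounting for the gadget ($n^{1/3}$ pebbles for $n^{2/3}$ steps) is identical to yours; you also rightly flag that the ``$\Omega(n)$'' in the statement is meant as an upper bound.
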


Hence there is an $\Omega(n^{1/3})$ separation between the time-space product and 
cumulative memory for pebbling.

\begin{proof}
We construct $G_n$, as shown in \cref{fig-pebbling}, to contain an $n^{1/3} \times n^{1/3}$ square lattice whose node of out-degree zero now has an out-going edge to the head of a chain containing $n$ nodes. Since pebbling the $n^{1/3} \times n^{1/3}$ lattice requires pebbling a pyramid of height $n^{1/3}$, Theorem 5 of \cite{Cook73} tells us that $n^{1/3}$ pebbles are necessary to place a pebble at the end of the lattice. Since a pebble must be placed on each node in the chain of length $n$, pebbling this graph takes at least $n$ steps.
\begin{figure}[tb]
    \centering
    \includegraphics[width=0.6\linewidth]{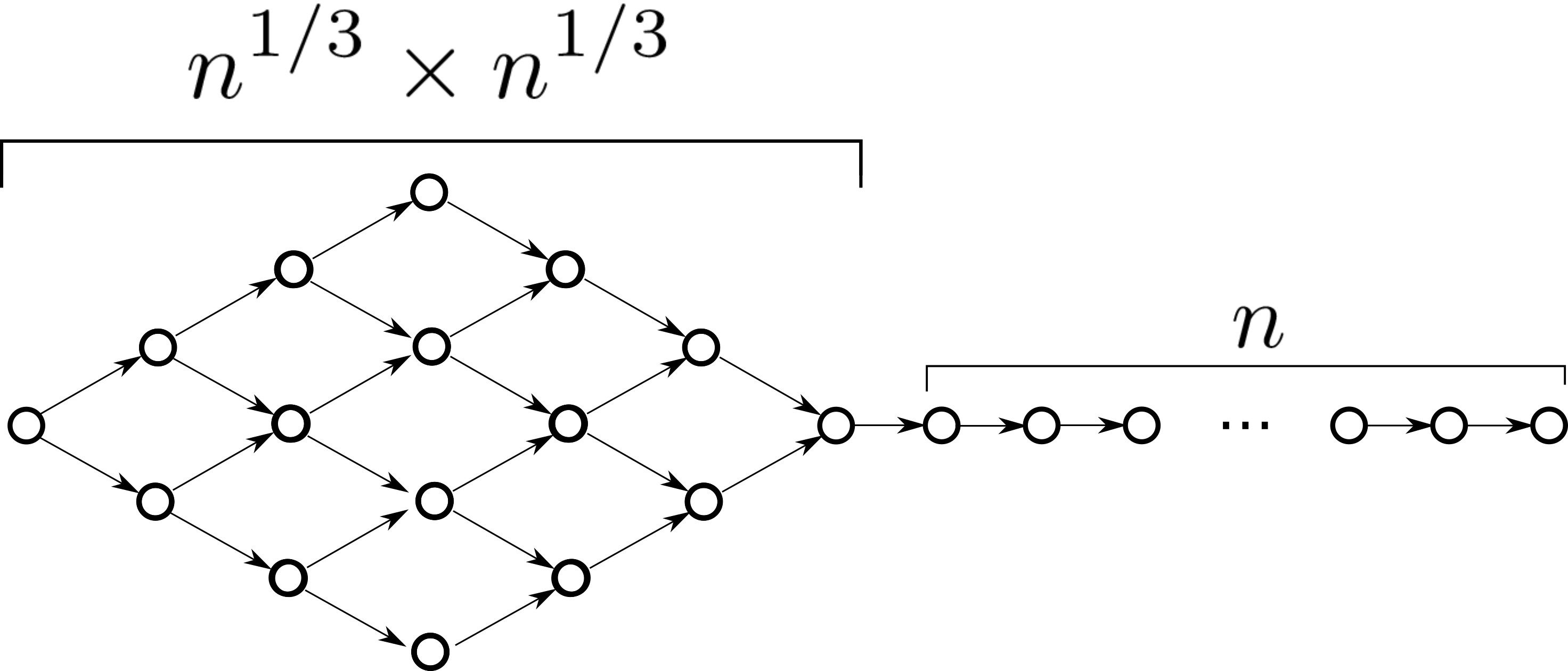}
    \caption{A DAG defined by parameter $n$. It is formed by joining an $n^{1/3} \times n^{1/3}$ lattice to a chain of length $n$.}
    \label{fig-pebbling}
\end{figure}

Now we will show how this graph can be pebbled with less cumulative memory.
We will show that both the lattice and the chain can each be pebbled with cumulative memory that is $O(n)$. The lattice can be pebbled by placing $n^{1/3}$ pebbles along the top diagonal and then repeatedly moving these pebbles along their downward edges until they are all on the bottom diagonal. This uses $n^{1/3}$ pebbles and acts on each node exactly once for a total of $n^{2/3}$ steps. Thus the cumulative memory for pebbling the lattice is $O(n)$. For the chain we can simply move one pebble from the leftmost node to the rightmost node in $n$ steps. Since this process only requires one pebble, the cumulative memory is also $O(n)$.
%
%
\end{proof}


\subsection*{Random oracle separation}
There is a group of closely related theorems from cryptography that let us instantiate pebbling graphs with the help of a random oracle \cite{DNW05, DKW11, KK14}. Here we walk through the ideas behind these proofs to show that the graph $G_n$ in \cref{lem-graph-constr} leads to separation between time-space and cumulative memory complexities in the random oracle model. The concrete problem we will be considering is related to labeling nodes of the pebbling graph.

\begin{definition}
Let $G = (V,E)$ be a DAG with maximum in-degree two and target $v_t$. Fix $c$ to be some large constant. Let $\calH: \{0,1\}^{(2c+1)\ceil{\log_2 |V|}} \to \{0,1\}^{c \ceil{\log_2 |V|}}$ be a random function. We assign each $v_i \in V$ a label $L(v_i)$ as follows:
\begin{itemize}
    \item If $v_i$ has in-degree zero, then $L(v_i) = \calH(0^{c \ceil{ \log_2 |V|}},0^{c \ceil{ \log_2 |V|}},i)$.
    \item If $v_i$ has exactly one parent $v_j$, then $L(v_i) = \calH(L(v_j), L(v_j), i)$.
    \item If $v_i$ has two parents $v_j, v_k$ where $j < k$, then $L(v_i) = \calH(L(v_j), L(v_k), i)$.
\end{itemize}
The \emph{hash-graph} problem $H_{G}^\calH$ is the task of computing the label of $v_t$.
\end{definition}

We will use the notation $\calA^{\calH}$ to denote an algorithm $\calA$ that has query access to the random oracle (function) $\calH$. We start by proving a weaker version of a result in \cite{DNW05}.

\begin{definition}[\cite{DNW05}]
Let $G = (V,E)$ be a DAG and $\calA^{\calH}$ be an algorithm that solves the hash-graph problem $H_G^{\calH}$. Then the \emph{ex post facto} pebbling of $\calA^{\calH}$ is defined as follows:
\begin{itemize}
    \item Making the call $\calH(0^{c \ceil{ \log_2 |V|}},0^{c \ceil{ \log_2 |V|}},i)$ when $v_i$ has in-degree zero corresponds to placing a pebble on $v_i$.
    \item Making the call $\calH(L(v_j), L(v_j), i)$ when $v_i$'s only parent is $v_j$ corresponds to placing a pebble on $v_i$.
    \item Making the call $\calH(L(v_j), L(v_k), i)$ when $v_j$ and $v_k$ are the parents of $v_i$ and $j<k$ corresponds to placing a pebble on $v_i$.
    \item A pebble is remove as soon as it is no longer needed. This happens when either the children of that node are never pebbled after this point or when the node is pebbled again before any of its children are pebbled.
\end{itemize}
\end{definition}

Analyzing ex post facto pebbling is key to the arguments in \cite{DNW05, DKW11, KK14} and lets us lower bound the space required to compute a hash-graph.

\begin{proposition}[\cite{DNW05}]\label{prop-pebbles-bounded-by-space}
Consider an algorithm $\calA^\calH$ that operates for a certain number of steps with $s \cdot c \ceil{\log_2 |V|}$ bits of memory. Then with probability at least $1-1/|V|^c$ (over the choice of $\calH$) the maximum number of pebbles placed by the ex post facto pebbling of $\calA^\calH$ is bounded above by $s$.
\end{proposition}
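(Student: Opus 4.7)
The plan is to prove this by a Shannon-style compression (entropy) argument: if the ex post facto pebbling ever has more than $s$ pebbles, then we can encode $\calH$ with strictly fewer bits than its native description length, and a uniformly random $\calH$ admits such compression only with tiny probability. Fix the algorithm $\calA$ once and for all, and treat $\calH$ as a uniformly random string of length $N \cdot c\ceil{\log_2 |V|}$ where $N=2^{(2c+1)\ceil{\log_2 |V|}}$ is the size of the oracle's domain. Given a concrete $\calH$, the execution of $\calA^\calH$ is deterministic, and so is its ex post facto pebbling.

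For the \emph{compressor}, suppose $\calH$ is ``bad,'' i.e., at some step $t$ the ex post facto pebbling carries $k>s$ pebbles on nodes $v_{i_1},\ldots,v_{i_k}$. Alice encodes $\calH$ by sending (a) the integer $t$ and the indices $i_1,\ldots,i_k$, costing $O(k\log|V|)$ bits of overhead; (b) the $s\cdot c\ceil{\log_2|V|}$ bits in $\calA$'s memory at time $t$; (c) the values of $\calH$ on every input \emph{except} the $k$ ``defining'' inputs $(L(v_{j_1}),L(v_{j_2}),i_j)$ whose output is $L(v_{i_j})$. Item (c) saves exactly $k\cdot c\ceil{\log_2|V|}$ bits over the naive description.

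The \emph{decompressor} recovers the missing labels by simulating $\calA$ forward from step $t$ using (b). Whenever $\calA$ queries the oracle, Bob either looks up the answer in (c), or (if the query is one of the $k$ missing defining inputs) is in trouble. The key point, which is what the ex post facto rule is designed to enforce, is that this ``trouble'' query cannot come first: since the pebble on $v_{i_j}$ was not removed by time $t$, the algorithm must use $L(v_{i_j})$ as an argument of some query strictly before it re-queries the defining input $(L(v_{j_1}),L(v_{j_2}),i_j)$, otherwise the re-pebbling rule would have killed the pebble. Hence, when Bob is about to encounter a missing defining input, he has already observed the corresponding label $L(v_{i_j})$ in the input slot of an earlier query, and can supply the answer. (A minor subtlety is that $\calA$ might ``guess'' a label via an unrelated query whose output happens to collide; by the randomness of $\calH$, the total probability of such collisions across the polynomial number of queries is absorbed into the $1/|V|^c$ slack.)

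Finally, we combine with a counting argument. The total encoding length is $N c\ceil{\log_2|V|} - (k-s)c\ceil{\log_2|V|} + O(k\log|V|)$, which for $k\ge s+1$ is at least $\tfrac{1}{2}c\ceil{\log_2|V|}$ bits shorter than the native length once $c$ is large enough to swallow the encoding overhead and a union bound over all pairs $(t,\{i_1,\ldots,i_k\})$. A uniformly random string lies in any set of size $2^{-b}$ times the total with probability $2^{-b}$, so $\Pr_{\calH}[\text{bad}]\le 2^{-c\ceil{\log_2|V|}/2}\le 1/|V|^c$ for suitable $c$, which is the claimed bound. The \textbf{main obstacle} is the invariant in the decompression step: rigorously arguing that Bob's simulation never stalls, i.e., that under the precise ex post facto pebbling rules, every missing defining input is preceded by a query exposing that label as an input argument. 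The rest (bit accounting, union bounding, handling oracle collisions) is routine.
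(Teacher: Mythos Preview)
The paper does not give its own proof of this proposition; it simply quotes the result from \cite{DNW05}. Your compression/incompressibility approach is indeed the standard one used there and in the follow-up literature, so at the level of strategy you are on the right track and there is nothing to ``compare.''

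That said, the part you label routine is where the real difficulty sits, and as written your bit accounting does not close. You encode the node indices $i_1,\ldots,i_k$ explicitly at cost $\Theta(k\log|V|)$, while the savings from omitting the $k$ defining outputs over the memory snapshot is only $(k-s)\,c\lceil\log_2|V|\rceil$. Taking $k=s+1$ this nets roughly $c\lceil\log_2|V|\rceil - (s+1)\lceil\log_2|V|\rceil$, which is \emph{negative} whenever $s\ge c$; the phrase ``once $c$ is large enough'' cannot rescue this because $c$ is a fixed constant and $s$ is arbitrary. (There is also a minor slip at the end: $2^{-c\lceil\log_2|V|\rceil/2}$ is $|V|^{-c/2}$, not $|V|^{-c}$.) A second, related gap: sending the indices $i_j$ is not even sufficient for Bob to know which $\calH$-entries are missing from (c), since the \emph{defining input} for $v_{i_j}$ contains the parent labels, and a parent may itself be among the pebbled nodes whose label is being withheld.

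Both issues are repaired by the same idea, which is the actual content of the \cite{DNW05} argument: do not transmit the indices at all. The ex post facto invariant you correctly isolate guarantees that, during Bob's forward simulation, each withheld label $L(v_{i_j})$ first appears as an \emph{argument} of some query (a pebbling of a child of $v_{i_j}$) strictly before the corresponding defining input is re-queried. From that child query Bob reads off both the value $L(v_{i_j})$ and, via the child's index and the graph structure, the identity $i_j$ itself. Processing in topological order also lets Bob reconstruct each defining input once needed. With the indices recovered for free, the encoding is just the time step, the $s\cdot c\lceil\log_2|V|\rceil$-bit memory, and $\calH$ with $s{+}1$ outputs suppressed, and the arithmetic goes through. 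So the ``main obstacle'' you flag is not merely the correctness of the decompression invariant but also the mechanism by which it eliminates the index overhead.
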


This lets us show that a hash-graph problem requires at least as much time and space as pebbling its underlying graph, as we show below in an argument similar to ones in \cite{DKW11, KK14}.

\begin{proposition}\label{lem-pebbling-is-space-bound}
Let $G=(V,E)$ be a DAG that requires $s$ pebbles and $\tau$ steps to pebble. Then any algorithm $\calA^\calH$ that solves the hash-graph problem $H_G^{\calH}$ must use space $S$ that is larger than $(s-1)\cdot c \ceil{\log_2 |V|}$ and time $T$ that is at least $\tau$ or its success probability (over the randomness of $\calH$) is at most $2T/|V|^{c-1}$.
\end{proposition}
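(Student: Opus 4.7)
The plan is to associate to any execution of $\calA^{\calH}$ its \emph{ex post facto} pebbling $\bp$ (as defined just before \cref{prop-pebbles-bounded-by-space}) and show that with high probability over $\calH$, whenever $\calA^{\calH}$ successfully outputs $L(v_t)$, the sequence $\bp$ is in fact a \emph{valid} pebbling of $G$ ending with a pebble on $v_t$. Once this is established, the hypothesis that any pebbling of $G$ uses at least $s$ pebbles and at least $\tau$ steps, combined with \cref{prop-pebbles-bounded-by-space}, will force the claimed space and time bounds; the residual failure probabilities then sum to give the $2T/|V|^{c-1}$ bound.

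The core step is the validity claim. A pebble is placed on $v_i$ in $\bp$ exactly when $\calA^{\calH}$ makes a query whose arguments are $(L(v_j), L(v_k), i)$ (or the degenerate variants at sources/unary nodes), and the pebbling rule requires pebbles to be present on the parents $v_j, v_k$ at that instant. If $\bp$ violates the rule, then at some query step $\calA$ used a string $L(v)$ as an input argument without currently holding a pebble on $v$, i.e., without having a prior unremoved query producing $L(v)$. Because $\calH$'s outputs are uniformly random strings of length $c\ceil{\log_2 |V|}$, the probability that any single query of $\calA$ ``hits'' a not-yet-produced label $L(v)$ is at most $|V|\cdot 2^{-c\ceil{\log_2|V|}} \le 1/|V|^c$, and a union bound over the $T$ queries and $|V|$ possible target labels gives failure probability at most $T/|V|^{c-1}$. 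The same union bound absorbs the case where $\calA$ outputs the correct $L(v_t)$ without having ever queried $\calH$ on the corresponding triple, ensuring that on the good event the pebbling actually ends with $v_t$ pebbled.

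Conditioned on validity, $\bp$ is a legitimate pebbling of $G$, so it must use at least $s$ pebbles and take at least $\tau$ steps. Since each step of $\bp$ corresponds to a distinct oracle query of $\calA$, we immediately get $T \ge \tau$. For the space bound, write $S = s'\cdot c\ceil{\log_2 |V|}$; by \cref{prop-pebbles-bounded-by-space}, with probability at least $1-1/|V|^c$ the pebbling uses at most $s'$ pebbles, so $s' \ge s$ and hence $S \ge s\cdot c\ceil{\log_2 |V|} > (s-1)\cdot c\ceil{\log_2 |V|}$. Taking the contrapositive: if either $S \le (s-1)\cdot c\ceil{\log_2 |V|}$ or $T < \tau$, then the success event is contained in the union of ``$\bp$ invalid'' and ``\cref{prop-pebbles-bounded-by-space} fails,'' whose probabilities are at most $T/|V|^{c-1}$ and $1/|V|^c \le T/|V|^{c-1}$, summing to at most $2T/|V|^{c-1}$.

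The main obstacle is being careful that the ex post facto pebbling really does correspond to the query sequence in a way that lets us bound the ``guessing'' probability crisply: we need to argue that a label $L(v)$ is either first revealed by an explicit query whose input arguments match the parents (placing a pebble) or appears by chance, and that reuse of the label between when the pebble is removed and a later query corresponds exactly to re-pebbling. This is what the definition of ex post facto pebbling (and the pebble-removal convention ``as soon as no longer needed, or reused before its children are pebbled'') is engineered to support, so the probabilistic accounting above is clean and the overall bound follows by the union of the two error events.
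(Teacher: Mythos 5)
Your proposal is correct and follows essentially the same route as the paper's proof: both rely on the ex post facto pebbling together with \cref{prop-pebbles-bounded-by-space}, and both obtain the failure probability by a union bound over the $T$ oracle queries (plus the event that the algorithm guesses the target label without ever querying the relevant triple). The only stylistic difference is bookkeeping — the paper enumerates two or three bad events explicitly and sums their probabilities, while you fold the ``correct output without pebbling the target'' case into the same union bound as the illegal-placement events before adding the \cref{prop-pebbles-bounded-by-space} failure probability; this yields the same $2T/|V|^{c-1}$ bound.
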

\begin{proof}
For an algorithm $\calA^{\calH}$ with space bound $S < (s-1)\cdot c \ceil{\log_2 |V|}$ to solve $H_{G}^{\calH}$, one of the following events must happen:
\begin{enumerate}
    \item $\calA^{\calH}$ solves $H_{G}^{\calH}$ without placing a pebble on the target during the ex post facto pebbling.
    \item $\calA^{\calH}$ places $s$ pebbles during the ex post facto pebbling.
    \item $\calA^{\calH}$ places a pebble during the ex post facto pebbling that would not be valid according to the black pebble game.
\end{enumerate}
Since $\calH$ is a random oracle, (1) happens with probability at most $1/|V|^c$. By \cref{prop-pebbles-bounded-by-space} (2) happens with probability at most $1/|V|^c$. Since guessing a label that has not been pebbled is possible with probability at most $1/|V|^c$, We know that (3) happens with probability at most $T/|V|^{c-1}$ via a union bound over the queries of $\calA^{\calH}$ and the nodes of $G$. Thus by a union bound, the probability $\calA^{\calH}$ can produce the correct output is at most $(T \cdot |V| + 2)/|V|^c$ which in turn is at most $2T/|V|^{c-1}$.

Now consider an algorithm $\calA^{\calH}$ with time bound $T < \tau$. Since $G$ cannot be pebbled in this number of steps, one of the following events must happen for $\calA^{\calH}$ to produce the correct output:
\begin{enumerate}
    \item $\calA^{\calH}$ solves $H_G^{\calH}$ without placing a pebble on the target during the ex post faco pebbling.
    \item $\calA^{\calH}$ places a pebble during the ex post facto pebbling that would not be valid according to the black pebble game.
\end{enumerate}
Both of the events are the same as in the space bounded case, and therefore a union bound give $\calA^{\calH}$ a success probability of at most $2T/|V|^{c-1}$.
\end{proof}
Thus any algorithm that solves $H_G^{\calH}$ must obey the space and the time bounds imposed by pebbling that graph or spend time that is $\Omega(|V|^{c-1})$.\footnote{Since $c$ is an arbitrary constant, this time bound can be made arbitrarly large.} Note that a pebbling of a graph $G$ directly corresponds to a strategy for computing $H_G^{\calH}$ with the same space, time, and cumulative memory bounds as the pebbling. By using the graph $G_n$ from \cref{lem-graph-constr}, this gives us a separation between time-space product complexity and cumulative memory in the random oracle model.

\begin{corollary}
Relative to a random oracle $\calH$, there is a problem with an $\Omega(n^{1/3})$ separation between its time-space product complexity and its cumulative memory complexity.
\end{corollary}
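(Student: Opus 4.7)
The plan is to instantiate the hash-graph problem on the graph family $G_n$ from \cref{lem-graph-constr} and combine the lower bound of \cref{lem-pebbling-is-space-bound} with a cumulative-memory upper bound that directly mimics the cheap pebbling strategy already exhibited in \cref{lem-graph-constr}. Writing $|V_n|$ for the number of vertices of $G_n$, which is $\Theta(n)$, each label produced by $\calH$ has length $c\lceil \log_2 |V_n|\rceil = \Theta(\log n)$ bits.

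First, I would apply \cref{lem-pebbling-is-space-bound} with $s = \Omega(n^{1/3})$ and $\tau = \Omega(n)$ to any algorithm $\calA^\calH$ solving $H_{G_n}^\calH$ with, say, constant success probability. Choosing the constant $c$ in the oracle large enough that the alternative conclusion (success probability at most $2T/|V_n|^{c-1}$) can be ruled out for any $T$ polynomial in $n$, one gets $S = \Omega(n^{1/3}\log n)$ and $T = \Omega(n)$, and hence $TS = \Omega(n^{4/3}\log n)$ bits.

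Next, I would translate the pebbling witnessing the $O(n)$ cumulative-memory bound of \cref{lem-graph-constr} into an algorithm for $H_{G_n}^\calH$: each pebble on a node $v_i$ corresponds to storing its label $L(v_i)$, which occupies $c\lceil \log_2 |V_n|\rceil = O(\log n)$ bits, and each pebble placement is carried out by a single oracle query to $\calH$ using the stored labels of the predecessors of $v_i$. Summing the number of stored labels over all $O(n)$ timesteps of the pebbling yields cumulative memory $O(n\log n)$ bits for computing $H_{G_n}^\calH$.

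Taking the ratio, $TS/CM = \Omega(n^{4/3}\log n)/O(n\log n) = \Omega(n^{1/3})$, establishing the claimed separation. There is no serious technical obstacle; the work is in verifying that the per-label factor $\Theta(\log n)$ enters symmetrically into both the $TS$ lower bound and the $CM$ upper bound so that it cancels in the ratio, and in confirming that $c$ can be chosen to suppress the ``trivial success probability'' alternative in \cref{lem-pebbling-is-space-bound} for all polynomial-time algorithms.
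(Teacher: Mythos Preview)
Your proposal is correct and follows essentially the same approach as the paper: instantiate the hash-graph problem on $G_n$, use \cref{lem-pebbling-is-space-bound} together with the pebbling bounds of \cref{lem-graph-constr} for the $TS$ lower bound, and simulate the cheap pebbling of \cref{lem-graph-constr} for the $CM$ upper bound. Your treatment is in fact slightly more careful than the paper's, as you explicitly track the $\Theta(\log n)$ bits per label and verify that this factor cancels in the ratio $TS/CM$.
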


While this will be hard to prove, we believe that replacing the random oracle with a suitable hash function gives a problem where there is an asymptotic gap between these complexity measures.

\begin{conjecture}
Instantiating the family of DAGs from \cref{lem-graph-constr} as hash-graph problems with some concrete hash-function gives a problem with an unconditional asymptotic gap between its sequential time-space product and cumulative memory complexities.
\end{conjecture}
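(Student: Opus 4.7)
The plan is to reduce the conjecture to two independent tasks: designing a concrete hash function $\calH$ that forces any algorithm computing $H_{G_n}^{\calH}$ to essentially follow a pebbling of $G_n$, and then applying one of the paper's unconditional time-space lower bound paradigms to the resulting concrete function while exhibiting a matching pebbling-based algorithm with small cumulative memory.

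First, I would pick $\calH$ to be an algebraically structured function for which existing unconditional lower bound techniques have bite. A natural candidate is to let $\calH(u,v,i)$ be the evaluation of a generalized Fourier transform matrix (or some other provably rigid matrix, as in \cref{cor-fourier-mat-vec} and \cite{Abr91, DBLP:journals/jcss/BeameJS01}) applied to the vector $(u,v,i)$ over a small finite field, truncated appropriately. To transfer the ex post facto pebbling analysis from the random oracle to this concrete choice, I would attempt to replace the random oracle's incompressibility with a collision-resistance-style combinatorial property of $\calH$: show that if at some layer the algorithm holds fewer than $s$ labels, then any label not yet computed that depends on a parent whose label is ``unknown'' cannot be produced except with small probability under a suitable input distribution on the sources. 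This mirrors the three bad events in the proof of \cref{lem-pebbling-is-space-bound} but replaces the random oracle bound by an algebraic anti-concentration argument.

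Second, to get an unconditional $TS$ lower bound for $H_{G_n}^{\calH}$, rather than working with a single-output problem (where, as Section 7 shows, unconditional lower bounds are quite weak), I would instantiate $G_n$ with multiple targets — for instance, declaring the entire bottom diagonal of the $n^{1/3}\times n^{1/3}$ lattice as outputs before the chain is traversed. This turns the problem into a multi-output function and opens the door to applying \cref{thm:full-general-bound-classical} via matrix rigidity (\cref{thm-mat-vec}, \cref{thm-mat-mult}). If $\calH$ is built from a rigid matrix, one can hope that Property (B') propagates through the lattice's iterated composition, yielding $TS = \Omega(n^{4/3})$ or better unconditionally. On the upper bound side, the pebbling strategy of \cref{lem-graph-constr} directly translates into an algorithm achieving cumulative memory $O(n\log|\calH|)$, since each pebble corresponds to a single stored label; the spikes in space inside the lattice wash out in the cumulative sum.

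The main obstacle, and the reason this is phrased as a conjecture, is step one: proving that a concrete, efficiently computable $\calH$ forces algorithmic behavior that resembles pebbling. The ex post facto argument leverages the random oracle at every step, and current techniques for derandomizing such arguments require at minimum a cryptographic assumption. Even with an algebraic $\calH$, controlling the composition across the lattice is delicate, because an adversary could in principle exploit algebraic shortcuts to recover a deep label without independently computing its parents — exactly the behavior the random oracle rules out for free. Barring a conceptual breakthrough on derandomizing pebbling lower bounds, the most realistic partial progress would be to prove the gap in restricted models (oblivious branching programs, or algebraic branching programs over $\mathbb{F}$) where the methods of \cref{sec-generic} apply without needing to control arbitrary information-theoretic shortcuts, and to treat the fully general case as a long-term target.
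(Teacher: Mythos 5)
The statement you are addressing is stated in the paper as a conjecture, and the paper offers no proof of it --- the authors explicitly say it "will be hard to prove." Your proposal, by your own admission in its final paragraph, also does not prove it: the central step, namely exhibiting a \emph{concrete} hash function $\calH$ for which the ex post facto pebbling argument (or any substitute for it) goes through \emph{unconditionally}, is precisely the open problem, and you defer it to "a conceptual breakthrough on derandomizing pebbling lower bounds" or to restricted models. So what you have is a research plan with a genuine gap, not a proof; that is the honest assessment, and it matches the paper's own stance.

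Beyond that headline gap, two specific steps in your plan would likely fail as stated. First, the hope that a rigidity-based Property (B') "propagates through the lattice's iterated composition" has no support in the paper or in the literature: the Borodin--Cook/Abrahamson machinery (\cref{thm:full-general-bound-classical}, \cref{thm-mat-vec}, \cref{thm-mat-mult}) applies to specific algebraic functions such as $x\mapsto Ax$ for a rigid $A$, and there is no known technique for pushing such conditional-probability bounds through an iterated composition of an explicit compression function along a DAG; proving an unconditional $TS=\Omega(n^{4/3})$ bound for such a composed function would itself be a significant advance, not a routine application. Second, your quantitative accounting is off even granting such a technique: declaring the $n^{1/3}$ labels of the bottom lattice diagonal as outputs gives a multi-output function with only $m=n^{1/3}$ outputs, and the generic completion yields bounds of the form $TS=\Omega(m\cdot h)$, which for any height function $h$ bounded by the input size falls well short of the $n^{4/3}$ target; the $\Omega(n)$ time in the construction comes from the output-free chain, which the multi-output paradigm does not see. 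The separation in the paper's \cref{lem-graph-constr} crucially pairs a $\Omega(n^{1/3})$ \emph{space} bound with an $\Omega(n)$ \emph{time} bound on a single-output problem, and in the general (non-oracle) branching-program model the single-output lower bound methods of \cref{sec-single} are far too weak to certify that space bound unconditionally. Your suggestion of retreating to restricted models (oblivious or algebraic branching programs) is a reasonable direction for partial progress, but it proves a different, weaker statement than the conjecture.
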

\newpage
\section{Quantum sorting with arbitrary success probability}\label{sec-quant-sort-output-bounds}

In order to modify \cref{thm-quantum-sort-lb} for different success probabilities, the key is to choose a different value for the parameter $\gamma$ in \cref{thm-q-kthresh-dp} governing the completeness bound, which will change the corresponding value of $\alpha$ and $\beta$. If we want to deal with non-constant values of $\gamma$, it is important to understand how $\gamma$ and $\alpha$ are related in \cref{thm-q-kthresh-dp}.
The following lemma is sufficient to prove Theorem 13 in \cite{KSdW07} (our \cref{thm-q-kthresh-dp}). Although the authors of~\cite{KSdW07} prove a more general version of this proposition,
the statement below captures what is necessary in our proof. Specifically, we invoke their Lemma 12 where $\delta = 0, C= ke^{\gamma +1}$ and $D =\alpha \sqrt{kn}$.

\begin{proposition}[Special case of Lemma 12 in \cite{KSdW07}]\label{lem-polynomial-threshold}
Let $p$ be a degree $2\alpha \sqrt{kn}$ univariate polynomial such that:
\begin{itemize}
    \item $p(i) = 0$ when $i \in \{0, \ldots, k-1\}$
    \item $p(k) = \sigma$
    \item $p(i) \in [0,1]$ when $i \in \{k+1, \ldots, n\}$
\end{itemize}
Then there exists universal positive constants $a$ and $b$ such that for any $\gamma > 0$ where $ke^{\gamma + 1} \leq n-k$:
\begin{displaymath}\sigma \leq a \cdot \text{exp}\left(\frac{b(2 \alpha \sqrt{kn} - k)^2 + 4e^{\gamma/2 + 1/2} k \sqrt{n-k}(2\alpha\sqrt{n} - \sqrt{k})}{n - k(e^{\gamma +1} + 1 )} - k - \gamma k \right).\end{displaymath}
\end{proposition}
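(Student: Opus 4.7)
The plan is to derive this statement as a direct specialization of Lemma 12 of \cite{KSdW07}, rather than reproving a polynomial extremal inequality from scratch. Lemma 12 of KSdW07 is a general estimate on univariate polynomials that are forced to vanish (or nearly vanish) on a block of consecutive integers, are bounded in $[0,1]$ on another integer block, and attain a large value $\sigma$ at a designated point; it gives an upper bound on $\sigma$ in terms of the polynomial's degree $D$, a parameter $\delta$ measuring the slack in the ``vanishing'' hypothesis, and a parameter $C$ governing the size of a neighborhood around the vanishing block that is allowed to be uncontrolled.

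First I would state Lemma 12 in the form used in \cite{KSdW07}, making explicit how its three parameters $(\delta, C, D)$ enter its exponent. The second step is to match parameters: in our setting $p$ vanishes \emph{exactly} on $\{0,\ldots,k-1\}$, so we take $\delta=0$; we allow the polynomial to attain the value $\sigma$ at $i=k$ and have no a priori control on a $\gamma$-dependent neighborhood beyond the zeros, encoded by setting $C = k e^{\gamma+1}$; and the degree $2\alpha\sqrt{kn}$ of $p$ corresponds to $D = \alpha\sqrt{kn}$ in the scaling convention of Lemma 12. The side condition $ke^{\gamma+1}\le n-k$ is exactly the requirement that the ``excluded region'' of size $C$ fits inside the interval $[k, n]$ where our polynomial is constrained.

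Third, I would substitute these parameter choices into the conclusion of Lemma 12 and simplify algebraically. The $-k-\gamma k$ term in the exponent arises from the combinatorial cost of forcing $k$ exact zeros together with the $\gamma$-amplified exclusion region, the quadratic term $b(2\alpha\sqrt{kn}-k)^2/[n-k(e^{\gamma+1}+1)]$ comes from the standard Markov/Bernstein-type amplification of a degree-$D$ polynomial over a grid of effective length $n-k(e^{\gamma+1}+1)$, and the cross term $4e^{\gamma/2+1/2}k\sqrt{n-k}\,(2\alpha\sqrt{n}-\sqrt{k})/[n-k(e^{\gamma+1}+1)]$ is the boundary correction from the transition between the controlled and uncontrolled integer regions.

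The main obstacle is purely bookkeeping: verifying that the substitution $(\delta, C, D)=(0, ke^{\gamma+1}, \alpha\sqrt{kn})$ is compatible with the hypotheses of the general Lemma 12 and that the resulting exponent agrees exactly with the displayed bound, including the universal constants $a$ and $b$. No new analytic ideas beyond those already present in KSdW07 are required, since the proof of Lemma 12 itself already encapsulates the needed Chebyshev/Markov polynomial extremal estimates. The only place one must be slightly careful is in checking that the hypothesis $p(k)=\sigma$ (with no assumption $\sigma\le 1$) is treated correctly by Lemma 12 — this is precisely the ``exceptional point'' scenario that Lemma 12 is designed for, which is why invoking it directly is sufficient.
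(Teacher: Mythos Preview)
Your proposal is correct and is essentially identical to the paper's own treatment: the paper does not give a separate proof of this proposition but simply states that it is obtained by invoking Lemma~12 of \cite{KSdW07} with the substitutions $\delta=0$, $C=ke^{\gamma+1}$, and $D=\alpha\sqrt{kn}$, which is precisely what you propose.
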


The $\sigma$ in this bound gives the completeness bound on the $k$-threshold problem. We now prove that $\sigma$ is sufficiently small when $\alpha \in \Omega(e^{-\gamma/2})$.

\begin{sloppypar}
\begin{lemma}\label{lem-k-thresh-consts}
Let $a,b>0$ be the constants from \cref{lem-polynomial-threshold}.
When we have $\sqrt{k/n} < \alpha < \min\left(1/(16\sqrt{e^{\gamma + 1} + 1}), \sqrt{1 / (8b)}\right)$, the completeness bound $\sigma$ for the $k$-threshold problem with $\alpha \sqrt{kn}$ queries is less than $a\cdot e^{-\gamma k}$.
\end{lemma}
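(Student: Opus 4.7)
The plan is to show that under the stated bounds on $\alpha$, the exponent in \cref{lem-polynomial-threshold} is at most $-\gamma k$, which immediately gives $\sigma \leq a \cdot e^{-\gamma k}$. Concretely, after cancelling the $-k-\gamma k$ and $-\gamma k$ terms, it suffices to establish
\begin{displaymath}
\frac{b(2\alpha\sqrt{kn}-k)^2 + 4e^{\gamma/2+1/2}k\sqrt{n-k}(2\alpha\sqrt{n}-\sqrt{k})}{n-k(e^{\gamma+1}+1)} \leq k.
\end{displaymath}
Note that the lower bound $\alpha > \sqrt{k/n}$ guarantees $2\alpha\sqrt{kn} > 2k$ and $2\alpha\sqrt{n} > 2\sqrt{k}$, so both factors in the numerator are positive and in particular satisfy $2\alpha\sqrt{kn}-k \leq 2\alpha\sqrt{kn}$ and $2\alpha\sqrt{n}-\sqrt{k} \leq 2\alpha\sqrt{n}$; together with $\sqrt{n-k}\leq \sqrt{n}$, the numerator is bounded above by $4b\alpha^2 kn + 8\alpha e^{\gamma/2+1/2}kn$.

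Next I would bound the denominator. From $\alpha < 1/(16\sqrt{e^{\gamma+1}+1})$ we get $k(e^{\gamma+1}+1) < n\alpha^2 \cdot 256(e^{\gamma+1}+1)/256 \leq n/256$ (using also $\alpha > \sqrt{k/n}$ so $k < n\alpha^2$), so the denominator is at least $(255/256)n$. Hence it suffices to show
\begin{displaymath}
4b\alpha^2 + 8\alpha e^{\gamma/2+1/2} \leq \tfrac{255}{256}.
\end{displaymath}
The condition $\alpha^2 < 1/(8b)$ handles the first term: $4b\alpha^2 < 1/2$. For the second term, observe that $e^{\gamma/2+1/2} = \sqrt{e^{\gamma+1}} \leq \sqrt{e^{\gamma+1}+1}$, so $\alpha e^{\gamma/2+1/2} < 1/16$ and hence $8\alpha e^{\gamma/2+1/2} < 1/2$.

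The only subtlety is that this direct combination gives the sum strictly less than $1$ rather than less than $255/256$. I would address this by tightening one of the two constant factors in the hypothesis — for instance, replacing $1/(16\sqrt{e^{\gamma+1}+1})$ by $1/(17\sqrt{e^{\gamma+1}+1})$ and $1/(8b)$ by $1/(9b)$ — which yields a slack of at least $1/128$ in the sum and absorbs the $n/(n-k(e^{\gamma+1}+1))$ correction. Alternatively, one can simply weaken the conclusion to $\sigma \leq a \cdot e^{-\gamma k + O(k/n)}$ and note this is still bounded by $a e^{-\gamma k}$ up to an absolute constant, which is all subsequent applications in \cref{sec-quant-sort-output-bounds} require. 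The main obstacle is purely bookkeeping: verifying that $k < n\alpha^2$ combined with $\alpha^2(e^{\gamma+1}+1) < 1/256$ really does make the denominator a constant fraction of $n$, and that the interaction between $e^{\gamma/2+1/2}$ and $\sqrt{e^{\gamma+1}+1}$ cleanly cancels in the second numerator term. No new ideas are needed beyond the elementary inequalities above.
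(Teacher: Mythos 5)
Your proposal follows the same overall strategy as the paper—substitute the hypothesized bounds on $\alpha$ into the exponent of \cref{lem-polynomial-threshold} and show it is at most $-\gamma k$—but you introduce a small loss in step 2 that you cannot recover, and you acknowledge this yourself. Specifically, when you bound the numerator you discard the $-k$ and $-\sqrt{k}$ subtractions (replacing $2\alpha\sqrt{kn}-k$ by $2\alpha\sqrt{kn}$ and $2\alpha\sqrt{n}-\sqrt{k}$ by $2\alpha\sqrt{n}$). That throws away exactly the slack the argument needs: the paper's proof keeps those negative cross-terms, expands $b(2\alpha\sqrt{kn}-k)^2 = kb\alpha^2(4n - 4\sqrt{kn}/\alpha + k/\alpha^2)$, and uses $\alpha > \sqrt{k/n}$ to turn $-4\sqrt{kn}/\alpha$ into something $< -4k/\alpha^2$, so the parenthesized factor becomes $< 4n - 3k/\alpha^2 < 4(n - k(e^{\gamma+1}+1))$ after invoking $\alpha < 1/(16\sqrt{e^{\gamma+1}+1})$. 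The same trick handles the second term. The payoff is that each numerator piece is bounded by an exact multiple of the denominator $n - k(e^{\gamma+1}+1)$, so the denominator cancels cleanly and the exponent reduces to $4kb\alpha^2 + 8e^{\gamma/2+1/2}k\alpha - k - \gamma k$, with each of the first two terms strictly below $k/2$. There is no residual $256/255$ factor to absorb.

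Your two proposed workarounds both change the statement rather than prove it: replacing $16$ by $17$ and $8$ by $9$ alters the constants in the lemma (which must then propagate into the definition of $\beta$ in the downstream arguments), and relaxing to $\sigma \leq a\,e^{-\gamma k + O(k/n)}$ weakens the conclusion. Neither establishes the lemma as stated. The fix is mechanical once you see it: do not drop the $-k$ and $-\sqrt{k}$ terms; instead use the hypothesis $\alpha > \sqrt{k/n}$ to convert them into a subtraction of $k(e^{\gamma+1}+1)$ inside each numerator factor, so the denominator cancels with no correction factor at all.
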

\end{sloppypar}

\begin{proof}
By \cref{lem-polynomial-threshold} we have
\begin{align}
    \sigma & \leq a \cdot \text{exp}\left(\frac{b(2 \alpha \sqrt{kn} - k)^2 + 4e^{\gamma/2 + 1/2} k \sqrt{n-k}(2\alpha\sqrt{n} - \sqrt{k})}{n - k(e^{\gamma +1} + 1 )} - k - \gamma k \right)
    \label{eq12}
\end{align}
We first bound the first term in the numerator
\begin{align*}
    b(2 \alpha \sqrt{kn} - k)^2&=b(4\alpha^2 kn - 4 \alpha k \sqrt{kn} + k^2)\\
    &=kb\alpha^2(4n - 4\sqrt{kn}/\alpha+k/\alpha^2)\\
    &< kb\alpha^2(4n - 3k/\alpha^2)
    \qquad\textrm{since $\sqrt{k/n} < \alpha$}\\
    &< kb\alpha^2(4n - 4k(e^{\gamma +1} + 1 ))\qquad\textrm{since  $\alpha < 1/(16 \sqrt{e^{\gamma + 1} +1})$}\\
    &= 4kb\alpha^2 (n-k(e^{\gamma +1} + 1 )).
\end{align*}
Next we bound the second term in the numerator
\begin{align*}
    4e^{\gamma/2 + 1/2} k \sqrt{n-k}(2\alpha \sqrt{n} - \sqrt{k}) &\leq 4e^{\gamma/2 + 1/2} k \sqrt{n} (2\alpha \sqrt{n} - \sqrt{k})\\
    &=4e^{\gamma/2 + 1/2} k \alpha (2 n - \sqrt{nk}/\alpha)\\
    &< 4e^{\gamma/2 + 1/2}k \alpha(2n-k/\alpha^2)\qquad \textrm{since $\sqrt{k/n} < \alpha$}\\
    &< 4e^{\gamma/2 + 1/2}k \alpha(2n-2k(e^{\gamma+1}+1)) \qquad \textrm{since $\alpha < 1/(16 \sqrt{e^{\gamma+1} +1})$}\\
    &=8e^{\gamma/2 + 1/2}k \alpha(n-k(e^{\gamma+1}+1))
\end{align*}

Plugging these bounds into (\ref{eq12}) we get
\begin{align*}
    \sigma&< a \cdot \text{exp}\left(4kb\alpha^2 + 8e^{\gamma/2+1/2}k \alpha - k - \gamma k\right)\\
    &< a \cdot \text{exp}\left(4kb\alpha^2 - k/2 - \gamma k\right) \qquad \textrm{since $\alpha < 1/(16 \sqrt{e^{\gamma+1} +1})$}\\
    &< a \cdot \text{exp}\left(-\gamma k\right) \qquad \textrm{since $\alpha < \sqrt{1 /8b}$} \qedhere
\end{align*}
\end{proof}

We now describe how to substitute the bound of \cref{lem-k-thresh-consts} in place of \cref{thm-q-kthresh-dp} to yield cumulative memory lower bounds for quantum circuits with failure probability at most $\delta$:
To prove the analog of \cref{cor-quantum-sort-outputs}, for any $S$, $k$, and $\delta \in (0,1)$, we can reprove a more precise version of \cref{lem-qcircuit-fixed-outputs-bound} using \cref{lem-k-thresh-consts} instead of \cref{thm-q-kthresh-dp} to bound the success probability for the $k$-threshold problem and choose
\begin{displaymath}\gamma =\frac{\ln(a \cdot 2^{2S}/(1-\delta)) - 1}{k} +1
\end{displaymath}
to get a success probability of less than $1-\delta$ for circuits with as many as
\begin{displaymath} \Omega\left(\left(\frac{1-\delta}{ 2^{2S}}\right)^{1/2k}\sqrt{kn}\right)\end{displaymath} 
layers and $S$ qubits of advice to produce $k$ outputs. If we repeat the proof of \cref{thm-quantum-sort-lb} for failure probability less than $\delta$, we can set $\beta$ to a value that is $\Omega(\sqrt{1-\delta})$ to obtain a lower bound on the cumulative memory that is $\Omega((1-\delta)n^3/T)$.
\newpage
\section{Optimizations}\label{append-optimization-lemma}
In this section we prove general optimization lemmas that allow us to derive worst-case properties of the 
allocation of branching program layers into blocks.

The first special case is 
relevant for our analysis of 
quantum sorting algorithms.
\begin{lemma}\label{lem-moments}
For non-negative reals $x_1, x_2,\ldots$ if
$\sum_i x_i \le \sum_i x_i^2$  then $\sum_i x_i^3 \ge \sum_i x_i^2$.
\end{lemma}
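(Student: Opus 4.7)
My plan is a termwise comparison after partitioning the indices according to whether $x_i$ is above or below $1$. Write $A = \{i : 0 \le x_i \le 1\}$ and $B = \{i : x_i \ge 1\}$. Observe that the quantities $x_i - x_i^2$ and $x_i^2 - x_i^3 = x_i(x_i - x_i^2)$ are non-negative on $A$ and non-positive on $B$, while $x_i^2 - x_i$ and $x_i^3 - x_i^2 = x_i(x_i^2 - x_i)$ are non-negative on $B$ and non-positive on $A$. So the hypothesis $\sum_i x_i \le \sum_i x_i^2$ rewrites as
\begin{equation*}
\sum_{i \in A} (x_i - x_i^2) \;\le\; \sum_{i \in B} (x_i^2 - x_i),
\end{equation*}
and the desired conclusion $\sum_i x_i^2 \le \sum_i x_i^3$ rewrites as
\begin{equation*}
\sum_{i \in A} (x_i^2 - x_i^3) \;\le\; \sum_{i \in B} (x_i^3 - x_i^2).
\end{equation*}

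The key step is the pair of termwise bounds obtained by factoring out an extra $x_i$. For $i \in A$ we have $x_i \le 1$, so
\begin{equation*}
x_i^2 - x_i^3 \;=\; x_i\,(x_i - x_i^2) \;\le\; x_i - x_i^2,
\end{equation*}
and for $i \in B$ we have $x_i \ge 1$, so
\begin{equation*}
x_i^3 - x_i^2 \;=\; x_i\,(x_i^2 - x_i) \;\ge\; x_i^2 - x_i.
\end{equation*}
Summing the first inequality over $A$ and the second over $B$, and chaining through the hypothesis, gives
\begin{equation*}
\sum_{i \in A}(x_i^2 - x_i^3) \;\le\; \sum_{i \in A}(x_i - x_i^2) \;\le\; \sum_{i \in B}(x_i^2 - x_i) \;\le\; \sum_{i \in B}(x_i^3 - x_i^2),
\end{equation*}
which is exactly what we need.

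There is no real obstacle here: the content of the lemma is that the three power sums $\sum x_i, \sum x_i^2, \sum x_i^3$ are ``monotone'' in the sense that when consecutive ones are comparable in the indicated direction, the next one continues the pattern, and this monotonicity is visible termwise once you split at $x_i = 1$. The only point to check is that the series are well-defined (the lemma is stated for possibly infinite sums, but all manipulations are on non-negative terms, so rearrangement and termwise comparison are justified).
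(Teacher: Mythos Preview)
Your proof is correct and follows essentially the same strategy as the paper: split the indices at the threshold $x_i = 1$, factor $x_i^{k+1} - x_i^k = x_i(x_i^k - x_i^{k-1})$, and chain through the hypothesis. In fact your version is slightly cleaner: the paper introduces $y^* = \max\{x_i : x_i < 1\}$ and $z^* = \min\{x_i : x_i > 1\}$ as intermediate multipliers and uses $y^* < z^*$, whereas you just use the bound $x_i \le 1$ on $A$ and $x_i \ge 1$ on $B$ directly, which accomplishes the same chaining with one less step. One cosmetic remark: your sets $A$ and $B$ overlap at $x_i = 1$, but since all the relevant differences vanish there this causes no harm.
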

\begin{proof}
Without loss generality we remove all
$x_i$ that are 0 or 1 since they 
contribute the same amount to each of
$\sum_i x_i$, $\sum_i x_i^2$, and $\sum_i x_i^3$.
Therefore every $x_i$ satisfies $0<x_i<1$
or it satisfies $x_i>1$.
We rename those $x_i$ with
$0<x_i<1$ by $y_i$ and those $x_i$ with
$x_i>1$ by $z_j$.

Then $\sum_i x_i \le \sum_i x_i^2$ can  be rewritten as $\sum_i y_i (1- y_i) \le\sum_j z_j (z_j-1)$,
and both quantities are positive.
Let $y^*$ be the largest value $<1$ and
$z^*$ be the smallest value $>1$.
Thus:
\begin{align*}
    \sum_i (y_i^2 - y_i^3)
    &=\sum_i y_i^2 (1-y_i)
    \le \sum_i y^* y_i (1-y_i)
    = y^* \sum_i y_i (1-y_i)
    \le y^* \sum_j z_j (z_j-1)\\
    &< z^* \sum_j z_j (z_j-1)
    = \sum_j z^* z_j (z_j-1)
    \le \sum_j z_j^2 (z_j-1)
    =\sum_j (z_j^3 - z_j^2).
\end{align*}
Rewriting gives $\sum_i y_i^2 +\sum_j z_j^2 < \sum_i y_i^3 +\sum_j z_j^3,$
or $\sum_i x_i^3 >\sum_i x_i^2$, as required.
\end{proof}

The following is a generalization of the above to all differentiable functions $p:\mathbb{R}^{\ge 0}\rightarrow \mathbb{R}^{\ge 0}$ such that $s/p(s)$ is a concave increasing function of $s$.

\concavelemma*

\begin{proof}
By hypothesis,
\begin{math}
\sum_i \left(x_i - K p(x_i)/L\right)\ge 0
\end{math}.
Observe that $s-Kp(s)/L$
is an increasing function of $s$ since $s/p(s)$ is an increasing function
of $s$ that is 0
precisely when $s = q^{-1}(K/L)$.  
Since all $x_i$ with $x_i=q^{-1}(K/L)$ evaluate to 0 in the sum, we can rewrite it as
\begin{equation}
    \sum_{x_i> q^{-1}(K/L)} \left(x_i - K p(x_i)/L\right)\ge \sum_{x_i<q^{-1}(K/L)} \left(K p(x_i)/L-x_i\right),\label{eqn-renamed}
\end{equation}
where each of the summed terms is positive.
For $x_i\ne q^{-1}(K/L)$, define \begin{displaymath}f(x_i)=x_i \cdot \frac{p(x_i) - q^{-1}(K/L)\cdot L/K}{x_i - K p(x_i)/L}.\end{displaymath}
Observe that for $x_i =q^{-1}(K/L)$ 
the denominator is 0 and 
the numerator equals $p(x_i)-x_i \cdot L/K$ which is also 0.
For $x_i>q^{-1}(K/L)$
both the numerator and denominator are
positive and for $x_i<q^{-1}(K/L)$ both
the numerator and denominator are negative.   
Hence $f(x_i)$ is non-negative for every $x_i\ne q^{-1}(K/L)$.

\begin{claim}  
If $q$ is a convex differentiable function, we can complete $f$ to a (non-decreasing) continuous
function of $x$  
with $f'(x)\ge 0$ for all $x$ with $0<x\ne q^{-1}(K/L)$
\end{claim}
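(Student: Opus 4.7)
The plan is to put $f$ in a form where a single tangent-line inequality controls its monotonicity. Writing $g = q$ and $r = q^{-1}(K/L)$, we have $g(r) = K/L$ and, since $p(x) = x/g(x)$, also $p(r) = rL/K$. Substituting $p = x/g$ throughout gives $x - K p(x)/L = (x/g(x))(g(x) - g(r))$ and $x(p(x) - p(r)) = (x/(g(x) g(r)))(x g(r) - r g(x))$; cancelling the common factor $x/g(x)$ yields the compact form
\begin{displaymath}
f(x) \;=\; \frac{x\, g(r) - r\, g(x)}{g(r)\bigl(g(x) - g(r)\bigr)}.
\end{displaymath}

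Next I would differentiate directly. The quotient rule, after expansion and cancellation of an overall factor $g(r)^{2}$, collapses to
\begin{displaymath}
f'(x) \;=\; \frac{\phi(x)}{\bigl(g(x) - g(r)\bigr)^{2}},
\qquad \phi(x) \;:=\; g(x) - g(r) - g'(x)(x - r).
\end{displaymath}
The denominator is strictly positive for every $x \ne r$, so the sign of $f'(x)$ is governed entirely by $\phi(x)$, and $\phi(r) = 0$ by inspection.

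The decisive step is bounding $\phi$. Invoking the standing hypothesis of the enclosing lemma that $g = q$ is concave and increasing (which I take to be the intended content of the claim's ``convex''—this must be a typo, since under a literally convex $q$ the same computation produces $\phi \le 0$ and the sign of $f'$ flips, contradicting the monotonicity needed immediately after in the enclosing proof), the tangent to $g$ at $x$ lies above the graph, so $g(r) \le g(x) + g'(x)(r - x)$. Rearranging gives $\phi(x) \ge 0$ for every $x > 0$, and hence $f'(x) \ge 0$ on $(0,r) \cup (r,\infty)$, as required.

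Finally, both numerator and denominator of the original definition of $f$ vanish at $x = r$, so one application of L'H\^opital's rule (using $g'(r) > 0$, which follows from $g$ being strictly increasing generically) shows $\lim_{x \to r} f(x) = 1/g'(r) - r/g(r)$; defining $f(r)$ to be this limit yields the desired continuous non-decreasing extension through the removable singularity. The main obstacle I expect is purely presentational—flagging the convex/concave wording cleanly so that the downstream application of the extended $f$ goes through without breakage—since all of the real analytic content reduces to the one-line tangent-line inequality above.
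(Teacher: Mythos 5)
Your proof is correct and takes essentially the same route as the paper's: the same rewriting of $f$ purely in terms of $q$, the same expression $f'(x)=\bigl(q(x)-q(a)+(a-x)q'(x)\bigr)/(q(x)-q(a))^2$ with positivity reduced to the concave tangent-line inequality (which the paper phrases as two chord-versus-tangent slope comparisons), and the same l'H\^{o}pital computation giving the value at the removable singularity. You are also right that ``convex'' in the claim statement is a slip for concave; the enclosing lemma assumes $q$ concave and increasing, and that is exactly what the paper's own argument uses.
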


\begin{proof}[Proof of Claim]
Write $a=q^{-1}(K/L)$.
Then since $p(x)>0$ and $q(a)>0$, we have
\begin{align*}
f(x)&= \frac{x\cdot p(x) - x\cdot a/q(a)}{x - q(a)\cdot p(x)}=\frac{x - (x/p(x))\cdot a/q(a)}{x/p(x) - q(a)}\\
&=\frac{x - q(x)\cdot a/q(a)}{q(x) - q(a)}
=\frac{1}{q(a)}\cdot \frac{q(a)\cdot x - a\cdot q(x)}{q(x)-q(a)}.
\end{align*}
Therefore 
\begin{align*}f'(x)&=\frac{1}{q(a)}\cdot \frac{(q(a) -a\cdot q'(x))(q(x)-q(a))-(q(a)\cdot x-a\cdot q(x))\cdot q'(x)}
{(q(x)-q(a))^2}\\
&=\frac{ q(x)-q(a) +(a-x)\cdot q'(x)}{(q(x)-q(a))^2}.
\end{align*}
Since the denominator is a square and $q$ is increasing, to prove that $f'(x)\ge 0$ for $x\ne a$ it suffices to prove that the
numerator is non-negative.

Suppose first that $x<a$, 
Then $a-x>0$ and the numerator
$q(x)-q(a) +(a-x)\cdot q'(x)\ge 0$ if and
only if $q'(x)\ge \frac{q(a)-q(x)}{a-x}$,
which is equivalent to the slope of the tangent to $q$
at $x$ being at least that of the chord from $x$ to $a$.
This is certainly true since $q$ is a concave
function.

Suppose now that $x>a$.
Then $a-x<0$ and the numerator $q(x)-q(a) +(a-x)\cdot q'(x)\ge 0$ if and only if 
$q'(x)\le \frac{q(x)-q(a)}{x-a}$.  
Again, this is true since $q$ is a concave
function.

It remains to show that we can complete $f$
to a continuous function by giving it a finite value at $a=q^{-1}(K/L)$.
By l'H\^{o}pital's rule, the limit of $q(a)\cdot f(x)$ as 
$x$ approaches $a$ is
\begin{displaymath}\frac{q(a)-a\cdot q'(a)}{q'(a)}\end{displaymath} if
the denominator is non-zero, which it is,
since $q$ is an increasing differentiable function at $a$.   
\end{proof}

We now have the tools we need.
Let $x^*_{-}$ be the largest $x_i < q^{-1}(K/L)$ and $x^*_{+}$ be the smallest
$x_i> q^{-1}(K/L)$.
Then we have $f(x^*_{+})\ge f(x^*_{-})$ and
\begin{align*}
    \sum_{x_i > q^{-1}(K/L)}& \left(x_i\ p(x_i)- q^{-1}(K/L) \cdot L/K\cdot x_i\right)\\
    &=\sum_{x_i > q^{-1}(K/L)}
    f(x_i)\cdot \left(x_i-Kp(x_i)/L\right) \\
    &\ge\sum_{x_i > q^{-1}(K/L)}
    f(x^*_{+})\cdot \left(x_i-Kp(x_i)/L\right) \\
    &\ge f(x^*_{-})\sum_{x_i > q^{-1}(K/L)}  \left(x_i-Kp(x_i)/L\right) \\
    &\ge f(x^*_{-}) \sum_{x_i < q^{-1}(K/L)}\left(Kp(x_i)/L-x_i\right) \qquad\mbox{by \cref{eqn-renamed}}\\
    &\ge \sum_{x_i < q^{-1}(K/L)}f(x_i)\cdot \left(Kp(x_i)/L-x_i\right) \\
    &=\sum_{x_i < q^{-1}(K/L)} \left( q^{-1}(K/L) \cdot L/K\cdot x_i- x_i\  p(x_i)\right).
\end{align*}
Adding back the terms 
where $x_i=q^{-1}(K/L)$, which have value 0, and rewriting we obtain
\begin{displaymath}\sum_i \left(x_i\ p(x_i) - q^{-1}(K/L) \cdot L/K\cdot x_i\right)\ge 0.\end{displaymath}
Therefore we have
\begin{displaymath}\sum_i x_i\ p(x_i)\ge q^{-1}(K/L)\cdot L/K\cdot \sum_i x_i\ge q^{-1}(K/L) \cdot (L/K)\cdot K = q^{-1}(K/L)\cdot L.\qedhere\end{displaymath} 
\end{proof}

\newpage
\section{Proof of \texorpdfstring{\cref{lem-loss}}{Lemma 5.4}}\label{sec-loss}

\losslemma*
\smallskip
\begin{proof}
Since $p$ is non-decreasing,
$1\le p^{-1}(j)\le p^{-1}(k)$ for $1\le j\le k$
and hence
\begin{equation}\frac{1}{k}\le \frac{\sum_{j=1}^k p^{-1}(j)}{k\cdot p^{-1}(k)}\le 1\label{eqn-loss1}
\end{equation}
since $p^{-1}(k)$ is included in the numerator.
$\Loss_p(n)$ is the minimum over all integers
$k\in [1,p(n)]$ of $\frac{\sum_{j=1}^k p^{-1}(j)}{k\cdot p^{-1}(k)}$ and $p$ is non-decreasing so we have
$1/p(n)\le \Loss_p(n)\le 1$, which proves part (a)

When $p(s)= s^{1/c}$ we have 
\begin{displaymath}\sum_{j=1}^k p^{-1}(j)\ge \sum_{j=\lceil (k+1)/2\rceil}^k j^{\,c}> \lceil k/2\rceil (k/2)^c\ge (k/2)^{c+1}=k\cdot p^{-1}(k)/2^{c+1}\end{displaymath} so each term in the definition
of $\Loss_p(n)$ is larger than $1/2^{c+1}$ which
proves part (b).  (More precise bounds can be shown but we are not focused on the specific constant.)

Let $1\le k\le p(n)$ be an integer.
Then $1\le s=p^{-1}(k)\le n$.
Observe that there are at least $p(s)-p(s/c)$ integers $j\le k$ with $p^{-1}(j)\ge s/c$.
Therefore 
\begin{equation}\frac{\sum_{j=1}^k p^{-1}(j)}{k\cdot p^{-1}(k)}\ge \frac{(p(s)-p(s/c))\cdot s/c}{kp^{-1}(k)}=\frac{p(s)-p(s/c)} {ck} =\frac{p(s)-p(s/c)}{cp(s)}.\label{eqn-loss2}
\end{equation}
The minimum over all $k\in [1,p(n)]$ is equivalent to the minimum over all $s\in [1,n]$, which proves part (c).

Now suppose that $p$ is nice. Since $p$ is differentiable, for any $s$,
\begin{align*}
p(cs)-p(s)&=\int_{s}^{cs} p'(y)\,dy \\
   &=\int_{s/c}^{c} p'(cx) c \, dx\quad\mbox{by substitution $y=cx$}\\
   &\ge \int_{s/c}^c p'(x) \, dx\quad\mbox{since $p$ is nice}\\
   &=p(s)-p(s/c).
\end{align*}
Then by induction we have that for every positive integer $i\le \log_c s$,
$p(s)-p(s/c)\ge p(s/c^{i-1})-p(s/c^{i})$. 
Write $\ell=\lfloor\log_c s\rfloor$.
Then $s/c^\ell<c$ and
\begin{displaymath}p(s)-p(s/c^\ell)=\sum_{i=1}^{\ell} [p(s/c^{i-1})-p(s/c^{i})] \le \ell \cdot [p(s)-p(s/c)],\end{displaymath}
or equivalently that $p(s)-p(s/c)\ge (p(s)-p(s/c^\ell)/\ell$ and
hence \begin{displaymath}p(s)-p(s/c)\ge (p(s)-p(c))/\log_c s\end{displaymath}
since $p$ is a non-decreasing function.
Applying the lower bound from \cref{eqn-loss1} when $k=p(s)<2p(c)$
and the lower bound from \cref{eqn-loss2} when $p(s)\ge 2p(c)$ we obtain
\begin{displaymath}\Loss_p(n)
\ge \min\left(\frac{1}{2p(c)},\min_{1\le s\le n: p(s)\ge 2p(c)}(1-p(c)/p(s))/(c\log_c s)\right).\end{displaymath}
Since $c$ is a constant, we obtain that
$\Loss_p(n)$ is $\Omega(1/\log n)$.

Observe that $p$ given by $p(s)=1+\log_2 s$ is nice for every constant $c>0$ since
$p'(cx)=(\ln 2)^{-1}/(cx)=p'(x)/c$.
In this case we have
$p^{-1}(j)=2^{j-1}$ and $\Loss_p(n)< 2/p(n)< 2/\log_2 n$ 
since the largest term $p^{-1}(k)$ in each numerator is
(a little) more
than the sum of all smaller terms put together.
Together with the lower bound, this proves part (d).
\end{proof}
\end{document}